\theoremstyle{plain}
\newtheorem{theorem}{Theorem}[section]
\newtheorem{lemma}[theorem]{Lemma}
\newtheorem{proposition}[theorem]{Proposition}
\newtheorem{corollary}[theorem]{Corollary}
\theoremstyle{definition}
\newtheorem{definition}[theorem]{Definition}
\newtheorem{remark}[theorem]{Remark}
\newtheorem{question}[theorem]{Question}
\newtheorem{example}[theorem]{Example}
\newcommand{\beq}{\begin{equation}}
\newcommand{\eeq}{\end{equation}}
\DeclareMathOperator{\Div}{Div}
\DeclareMathOperator{\Cl}{Cl}
\DeclareMathOperator{\id}{id}
\DeclareMathOperator{\car}{char}
\newcommand{\PP}{{\mathbb{P}}}
\newcommand{\F}{\mathbb{F}}
\newcommand{\longto}{\longrightarrow}
\newcommand{\tens}{\otimes}
\renewcommand{\cL}{\mathcal{L}}
\newcommand{\cO}{\mathcal{O}}
\newcommand{\cI}{\mathcal{I}}
\newcommand{\cJ}{\mathcal{J}}
\newcommand{\cA}{\mathcal{A}}
\newcommand{\cB}{\mathcal{B}}
\newcommand{\cS}{\mathcal{S}}
\newcommand{\cT}{\mathcal{T}}
\begin{document}

\selectlanguage{english}

\title{Bilinear complexity of algebras and the Chudnovsky-Chudnovsky interpolation method}

%\author{Hugues Randriambololona\\
%ENST, Paris (``Telecom ParisTech'')}

\author{Hugues Randriambololona}
%\address{\'Ecole nationale sup\'erieure des t\'el\'ecommunications
%(``Telecom ParisTech'')\\
%\& LTCI CNRS UMR 5141\\
%46 rue Barrault, 75634 Paris cedex 13, France}

\maketitle

\begin{abstract}
%We give a new generalization of the Chudnovsky-Chudnovsky method that
%provides upper bounds on the bilinear complexity of multiplication in
%monogenous algebras over finite fields through interpolation on algebraic
%curves.
%Two key features of our method is that we allow asymmetric interpolation,
%as well as interpolation at arbitrary closed subschemes.
%This allows us to fix errors in, improve, and generalize, previous
%works of Shparlinski-Tsfasman-Vladut, Ballet, and Cenk-\"Ozbudak.
%
%Besides, generalities on interpolation systems, as well as an alternative
%method that solves certain constructiveness issues, are also discussed.
We give new improvements to the Chudnovsky-Chudnovsky method that
provides upper bounds on the bilinear complexity of multiplication in
extensions of finite fields through interpolation on algebraic
curves.
Our approach features three independent key ingredients:
\begin{itemize}
\item We allow asymmetry in the interpolation procedure.
This allows to prove, via the usual cardinality argument,
the existence of auxiliary divisors
needed for the bounds, up to optimal degree.
\item We give an alternative proof for the existence
of these auxiliary divisors, which is constructive, and works also
in the symmetric case, although it requires the curves to have sufficiently
many points.
\item We allow the method to deal not only with extensions of finite fields,
but more generally with monogenous algebras over finite fields. This leads
to sharper bounds, and is designed also to combine well with
base field descent arguments in case the curves do not have sufficiently
many points.
\end{itemize}
As a main application of these techniques,
we fix errors in, improve, and generalize, previous works of
Shparlinski-Tsfasman-Vladut, Ballet, and Cenk-\"Ozbudak.
Besides, generalities on interpolation systems, as well as on
symmetric and asymmetric bilinear complexity, are also discussed.
\end{abstract}

%\maketitle

%\setlength\cftparskip{-2pt}

{
\setlength\cftbeforesecskip{1pt}
\renewcommand{\cftsecfont}{\normalfont}
\tableofcontents
}

\section*{Introduction}

The bilinear complexity $\mu(\cA/K)$ of a finite-dimensional algebra $\cA$
over a field $K$ measures the essential minimal number
of two-variable multiplications in $K$ needed to
perform a multiplication in $\cA$, and considering other operations,
such as multiplication by a constant, as having no cost.
More intrinsically, it can be defined as the rank of the tensor in
\beq
\cA\tens\cA^\vee\tens\cA^\vee
\eeq
naturally deduced from the multiplication map in $\cA$.

The study of $\mu(\cA/K)$, and the effective derivation of
multiplication algorithms, are of both theoretical and practical
importance. Pioneering works in this field are Karatsuba's
algorithm~\cite{Karatsuba} for integer and polynomial multiplication,
and Strassen's algorithm~\cite{Strassen69} for matrix multiplication.

There are (at least) two ways in which these questions could be addressed
from an algebraic geometry point of view.
These two approaches are seemingly unrelated, although,
to the author's knowledge, possible links between the two have never
been seriously studied (nor will they be here).
The first one is to
consider tensors of rank $1$ as defining points of a certain Segre variety,
and tensors of higher rank, points of its successive secant varieties.
This leads to deep and beautiful problems \cite{Strassen83,Landsberg}, but we will not be interested
in this approach here.
The second one is through the theory of interpolation. Karatsuba's
algorithm may be interpreted as follows: evaluate the polynomials
at the points $0,1,\infty$ of the projective line,
multiply these values locally,
and interpolate the results to reconstruct the product polynomial.
Replacing the line with algebraic curves of higher genus
allowed Chudnovsky and Chudnovsky in~\cite{ChCh} to first prove
that the bilinear complexity of multiplication in certain extensions
of finite fields grows at most linearly with the degree.
For example, letting $\mu_q(n)=\mu(\F_{q^n}/\F_q)$, their
result implies
\beq
\label{ChCh_square_orig}
\liminf_{n\to\infty}\frac{1}{n}\mu_q(n)\leq 2\left(1+\frac{1}{\sqrt{q}-3}\right)
\eeq
for $q\geq25$ a square.

\medskip

Several improvements and variants of
the Chudnovsky-Chudnovsky algorithm were then proposed
by various authors
in order to give sharper or more general asymptotic,
as well as non-asymptotic, upper bounds. 
Roughly speaking, they all rely on the following three ingredients:
\begin{enumerate}[a)]
\item A ``generic'' interpolation process which explains how to derive
these upper bounds from the existence, postulated a priori, of certain  
geometric objects. These objects are:
\item Algebraic curves having ``good'' parameters, meaning, most of the time,
that they have sufficiently many points of various degrees, and controlled
genus.
\item Divisors on these curves, such that certain evaluation maps associated
to them are injective or surjective. Often this can be reformulated as
requiring the existence of systems of simultaneously zero-dimensional or
non-special divisors of a certain form and appropriate degree.
\end{enumerate}
These three points are important. However remark that
a well-designed algorithm in a) should make the existence
of the objects b) and c) it needs easier to check.
In this paper we will give new contributions
to a), and also to c), and then proceed to
some direct, but hopefully already significant, applications
%(although we will not try to be exhaustive then: we will
%only present the ones we consider the most direct and significative,
%hoping there will be enough such for the reader).
(further applications could be given, but they require combination
with quite different methods, so they will be treated elsewhere). 

\smallskip

Our main technical results are Theorems~\ref{ChCh+gen}
and~\ref{th=g-1} below.

\smallskip

Theorem~\ref{ChCh+gen} is our main contribution to a). There we present
a generalization of the Chudnovsky-Chudnovsky algorithm that has
two new features:
\begin{itemize}
\item We allow interpolation at arbitrary closed subschemes of the curve
in a uniform way.
The original method of Chudnovsky-Chudnovsky used only points of degree $1$,
with multiplicity $1$. Variants introduced by Ballet-Rolland and Arnaud
allowed interpolation at points of higher degree, or with higher multiplicity.
These improvements were combined and further generalized by Cenk-\"Ozbudak
in \cite{CO}. However, somehow, Cenk-\"Ozbudak still deal with degree $m$
and multiplicity $l$ separately since they use two parameters, $\mu_q(m)$
and $\widehat{M}_q(l)$, for them. Here we introduce a new quantity,
\beq
\mu_q(m,l),
\eeq
the bilinear complexity of the algebra $\F_{q^m}[t]/(t^l)$ over $\F_q$,
to deal with both at the same time. This leads ultimately to improved
bounds and is especially useful
when combined, for example, with descent arguments, such as the
ones used in~\cite{BR2004,BLR,BP}. Another indication of the naturality of our
approach is that these $\mu_q(m,l)$ can be made to appear on both
sides of our inequalities. This means, not only do we have upper bounds
\emph{in terms} of these $\mu_q(m,l)$, but at the same time we can also
derive upper bounds \emph{on} them.
\item We allow asymmetry when lifting the elements to be multiplied,
even if the multiplication law is commutative
(as is permitted by the very definition of bilinear complexity).
This has dramatic
consequences for applications since it makes 
the existence of the divisors mentioned in~c) above
much easier to prove.
Technically speaking, classical ``symmetric'' variants of the
Chudnovsy-Chudnovsky algorithm (starting from the original) suppose
given two effective divisors $G$ and $G'$ and ask for the existence
of an auxiliary divisor $D$ such that:
\beq
\label{systemesym}
\textstyle\textrm{\begin{minipage}{0.82\textwidth}
-- $\;\;D-G'$ is non-special\\
-- $\;\;2D-G$ is zero-dimensional.
\end{minipage}}
\eeq
In our asymmetric version, we ask for \emph{two} divisors $D_1,D_2$
such that:
\beq
\label{systemeasym}
\textstyle\textrm{\begin{minipage}{0.82\textwidth}
-- $\;\;D_1-G'$ and $D_2-G'$ are non-special\\
-- $\;\;D_1+D_2-G$ is zero-dimensional.
\end{minipage}}
\eeq
As explained below, this small change
allows us at once to fill a gap in the proof of bounds claimed
by Shparlinski-Tsfasman-Vladut \cite{STV} and Ballet
\cite{Ballet2008,Balletnote}.
\end{itemize}

Then Theorem~\ref{th=g-1} combines Theorem~\ref{ChCh+gen} with general
existence results for divisors as asked above, leading to bounds that
depend only on the number of points of the curve, in a somehow optimal
way. 
To be more precise, while all divisors of negative degree are zero-dimensional
(and likewise all divisors of degree more than $2g-1$ are non-special),
for the bounds on the complexity
to be as sharp as possible, one needs the divisors involved
to be of degree as near to $g-1$ as possible.

Shparlinski-Tsfasman-Vladut, and later also Ballet, claimed they
were able to solve system~\eqref{systemesym} up to degree $g-1$
(or at least, asymptotically in \cite{STV},
while exactly in \cite{Ballet2008}). For this they use a cardinality
argument. They consider the map that sends the linear equivalence
class $[D]$ to the class $[2D-G]$, and from this, deduce that
the number of linear equivalence classes of $D$ such that $2D-G$
is not zero-dimensional is not more than the number of effective
divisors of the corresponding degree. However this inference is
incorrect, because the map $[D]\mapsto[2D-G]$ is not injective.
Taking this non-injectivity into account multiplies their bound
by the $2$-torsion order of the class group, which ruins the argument.

This error was first mentioned in a preprint of
Cascudo-Cramer-Xing, although this discussion was removed from the final
version of their paper. However it can still be found in Cascudo's
PhD dissertation \cite{Cascudo}, Chap.~12.

On the other hand, our new asymmetric system~\eqref{systemeasym}
is much easier to solve. Indeed, the divisors $D_1$ and $D_2$
can then be constructed one at a time, there is no multiplication-by-$2$
map in the class group involved, and the cardinality argument works
smoothly. This allows us, under very mild assumptions,
to solve system~\eqref{systemeasym} up to
degree exactly $g-1$, which is optimal, and ultimately,
to complete the proof of the bounds claimed
in \cite{Ballet2008,Balletnote,STV}
(except for one, where there is another error, discussed in the text).
These repaired bounds now form our Corollary~\ref{corBallet}
and Theorems~\ref{STVm} and~\ref{STVM}.
For example, \eqref{ChCh_square_orig} can now be replaced safely with the
new estimate (first claimed in \cite{STV})
\beq
\label{STV_orig}
\limsup_{n\to\infty}\frac{1}{n}\mu_q(n)\leq 2\left(1+\frac{1}{\sqrt{q}-2}\right)
\eeq
for $q\geq9$ a square.

\medskip

A small drawback of this cardinality argument, already mentioned
in~\cite{STV}, is its non-constructiveness.
Also, for some applications, it might appear unsatisfactory
to get only asymmetric multiplication algorithms for an algebra
in which the multiplication law is commutative.
So we propose an alternative
method, more constructive, that solves system~\eqref{systemeasym},
as well as the original symmetric system~\eqref{systemesym},
also up to degree exactly $g-1$,
although only under more restrictive assumptions.
This alternative construction, that relies on the theory of
Weierstrass gap and order sequences,
is a straightforward adaptation of a method 
previously developed by the author in another context \cite{21sep}.
In doing so we are also led to stress the distinction between
the usual bilinear complexity, and a more restricted notion of
\emph{symmetric} bilinear complexity.
For example, our symmetric variant of \eqref{STV_orig} yields
\beq
\label{STV_sym}
\limsup_{n\to\infty}\frac{1}{n}\mu_q^{\textrm{sym}}(n)\leq 2\left(1+\frac{1}{\sqrt{q}-2}\right)
\eeq
for $q\geq49$ a square (note the stronger restriction on $q$).

\medskip

Besides these two main Theorems~\ref{ChCh+gen} and~\ref{th=g-1}
and their applications
in Corollary~\ref{corBallet}
and Theorems~\ref{STVm} and~\ref{STVM},
other topics of possible interest discussed in this paper 
include a fairly general presentation of interpolation systems
in Section~\ref{section2}, as well as a study of low degree
(or low genus) examples in Section~\ref{section4} that clarifies
and improves statements of~\cite{CO}.

\medskip

Before we finish this Introduction, we would like to mention
the very close links that exist between this domain and other
areas of mathematics and theoretical computer science.
One first such area is coding theory, and
more precisely the theory of \emph{intersecting codes}.
The link between multiplication algorithms and intersecting codes
was first stressed in \cite{BD} and \cite{LW}.
More important, in \cite{Xing2002},
Xing studied intersecting codes arising from algebraic curves,
and he gave a criterion for their existence,
that reduces essentially to the second part of
system~\eqref{systemesym}.
Hence here also the $2$-torsion in the class group is an obstruction
to get optimal parameters
(see \cite{ITW2010} for elaborations on this).
This problem was essentially solved, or more properly,
bypassed by the author in \cite{21sep}
with the method discussed above (although the analog problem
for $t$-torsion, $t\geq3$, is still open). 

Another such area is cryptography with the theory of
\emph{linear secret sharing systems with multiplication property},
in particular within the framework of
\emph{secure multi-party computation} \cite{CDM2000}.
In one direction, to optimize the parameters of these systems,
multiplication algorithms with low bilinear complexity are sometimes
required. In the other direction, secure multi-party computation
schemes based on algebraic curves were introduced by Chen and Cramer
in~\cite{ChenCramer}, and the design of these schemes also involves
a system similar to~\eqref{systemesym}.
And again, the $2$-torsion in the class group is an obstruction
to get optimal parameters \cite{Cascudo,CCX2011}.
It would be interesting to check how the tools introduced
in the present work could be put to use in this context. 
%For example one can introduce asymmetry by
%using \emph{two} divisors $D_1$ and $D_2$ on the same curve $X$,
%defining two linear secret sharing systems $L_1$ and $L_2$.
%When two elements $x_1,x_2$
%are to be multiplied, the first should be shared with $D_1$ and
%the second with $D_2$.
%If this is not already the case, then a
%\emph{resharing} step, as first introduced in \cite{CDM2000},
%is necessary. This might look like an additional cost, however,
%remark that even in classical ``symmetric'' multi-party computation
%schemes, when two elements $x_1,x_2$ shared with the same divisor $D$
%are multiplied, then their product $x_1x_2$ is shared with $2D$,
%not with $D$,
%hence a resharing step is already necessary if further multiplications
%are to be performed on it.
%Hence in most applications this additional cost will be negligible,
%while on the other hand, the analogue of \eqref{systemeasym} in
%this context being again easier than the analogue of \eqref{systemesym},
%we obtain systems of asymptotic parameters better than the ones of
%\cite{CCX2011}.

\medskip

\emph{Conventions.}
In this text we make free use of the language of modern algebraic
geometry: schemes, sheaves, and cohomology.
Admittedly, the only place where this is necessary is at 
the end of Section~\ref{section2}, while designing interpolation
systems from higher dimensional algebraic varieties, and this
point is quite secondary in our presentation.
From Section~\ref{section3} on, we deal only with curves,
and everything could be equally well expressed in the language
of function fields in one indeterminate.
We made the choice to stick to the geometric point of view,
but, keeping in mind that application oriented readers might be
more familiar with the function field terminology, we tried to
keep the level of exposition accessible so that translation
from one language to the other would remain easy.
As standard references for these subjects we advise \cite{Hartshorne}
for the general geometric language and \cite{Stichtenoth} for the
function field approach in the case of curves.

\section{Tensor rank and bilinear complexity}

\begin{definition}
Let $K$ be a field, and $E_0,\dots,E_s$ be finite-dimensional
$K$-vector spaces. A non-zero element $t\in E_0\tens\cdots\tens E_s$
is said to be an \emph{elementary tensor},
or a \emph{tensor of rank~$1$},
if it can be written in the form $t=e_0\tens\cdots\tens e_s$
for some $e_i\in E_i$. More generally, the \emph{rank} of an
arbitrary $t\in E_0\tens\cdots\tens E_s$ is defined as the minimal
length of a decomposition of $t$ as a sum of elementary tensors.
\end{definition}

%(Remark that this notion of rank does not arise from the whole
%tensor space $E_0\tens\cdots\tens E_s$ alone, it really depends
%on the individual spaces $E_0,\dots,E_s$.) 

\begin{definition}
If
\beq
\alpha:E_1\times\cdots\times E_s\longto E_0
\eeq
is an $s$-linear map,
the $s$-linear complexity of $\alpha$ is defined as the tensor rank
of the element
\beq
\widetilde{\alpha}\in E_0\tens E_1^\vee\tens\cdots\tens E_s^\vee
\eeq
naturally deduced from $\alpha$.
\end{definition}

For $s=1$, these notions are very well understood
(they reduce essentially to the rank of a matrix).
However, starting from $s=2$, they can be surprisingly difficult
to handle.

\begin{definition}
Let $\cA$ be a finite-dimensional $K$-algebra. We denote by
\beq
\mu(\cA/K)
\eeq
the bilinear complexity of the multiplication map
\beq
m_{\cA}:\cA\times\cA\longto\cA
\eeq
considered as a $K$-bilinear map.
\end{definition}

More concretely, $\mu(\cA/K)$ is the smallest integer $n$
such that there exist linear forms
$\phi_1,\dots,\phi_n$ and $\psi_1,\dots,\psi_n:\cA\longto K$,
and elements $w_1,\dots,w_n\in\cA$, such that for all
$x,y\in \cA$ one has
\beq
\label{algomult}
xy=\phi_1(x)\psi_1(y)w_1\,+\,\cdots\,+\,\phi_n(x)\psi_n(y)w_n.
\eeq
Indeed, such an expression is the same thing as a decomposition
\beq
\widetilde{m}_{\cA}=\sum_{i=1}^nw_i\tens\phi_i\tens\psi_i\:\in\:\cA\tens\cA^\vee\tens\cA^\vee
\eeq
for the multiplication tensor of $\cA$.

Remark that here, the notion of algebra is taken in its broadest sense.
However, in Proposition~\ref{schema_arbitraire},
and then from Section~\ref{section3} on,
we will only consider algebras that are associative, commutative, and
with unity.

\begin{definition}
We call \emph{multiplication algorithm}
of length $n$ for $\cA/K$ 
a collection of $\phi_i$, $\psi_i$, $w_i$ that satisfy~\eqref{algomult}.
Such an algorithm is said \emph{symmetric}
if $\phi_i=\psi_i$ for all $i$ (this can happen only if $\cA$ is commutative).
\end{definition}

The study of $\mu(\cA/K)$, and the effective derivation of
multiplication algorithms, are of both theoretical and practical
importance. Pioneering works in this field are Karatsuba's
algorithm~\cite{Karatsuba} for integer and polynomial multiplication,
and Strassen's algorithm~\cite{Strassen69} for matrix multiplication.

In practical terms, focusing on the bilinear complexity
of the multiplication in $\cA$ means according importance only to
the number of two-variable multiplications in $K$ needed to
perform a multiplication in $\cA$, and considering other operations,
such as multiplication by a constant, as having no cost.
This is a reasonable assumption although its relevance clearly depends
on the computation model.

When $\cA$ is commutative, it is sometimes convenient to favour
the study of symmetric multiplication algorithms. Thus, as $\mu(\cA/K)$
is defined as the minimal length of a 
(possibly asymmetric) multiplication algorithm for $\cA/K$,
we also introduce the following:

\begin{definition}
If $\cA$ is a finite-dimensional commutative $K$-algebra,
we define its \emph{symmetric bilinear complexity}
\beq
\mu^{\textrm{sym}}(\cA/K)
\eeq
as the minimal length of a 
symmetric multiplication algorithm for $\cA/K$.

Equivalently, it is the minimal length of a decomposition
of the multiplication tensor $\widetilde{m}_{\cA}$ as a sum
of symmetric elementary tensors, that is, of tensors
of the form $w\tens\phi\tens\phi\in\cA\tens\cA^\vee\tens\cA^\vee$.  
\end{definition}

Here we gather a few elementary properties of these
notions.
Lemma~\ref{sym-asym} shows that symmetric bilinear complexity
is well defined, and compares it with its non-symmetric counterpart.
Lemma~\ref{basic_lower_bounds} gives basic lower bounds
for $\mu(\cA/K)$, and Lemma~\ref{functorial_inequalities}
deals with some functorial properties.
Certainly most things here are already classical and can be found
from other sources.
The reader is especially refered to the foundational work \cite{Strassen73}
(and to the additional material in \cite{BD,FZ,LW,Winograd}),
or to textbooks such as \cite{BCS,deGr}, for historical details and
further results of this type.

\begin{lemma}
\label{sym-asym}
Let $\cA$ be a finite-dimensional commutative $K$-algebra.
Then $\cA$ admits a symmetric multiplication algorithm,
hence $\mu^{\textrm{sym}}(\cA/K)<\infty$ is well defined.
More precisely, it satisfies
\beq
\label{ineq_musym_1}
\mu^{\textrm{sym}}(\cA/K)\leq\frac{d(d+1)}{2}
\eeq
where $d=\dim\cA$.
If $\car K\neq 2$, then also
\beq
\label{ineq_musym_2}
\mu^{\textrm{sym}}(\cA/K)\leq 2\mu(\cA/K).
\eeq
In the other direction, we always have
\beq
\label{ineq_musym_3}
\mu(\cA/K)\leq\mu^{\textrm{sym}}(\cA/K).
\eeq
\end{lemma}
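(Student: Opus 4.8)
The plan is to handle the four assertions in the order that is logically most economical: first the trivial inequality~\eqref{ineq_musym_3}, then finiteness together with the explicit bound~\eqref{ineq_musym_1}, and finally~\eqref{ineq_musym_2}. For~\eqref{ineq_musym_3} there is nothing to prove: a symmetric multiplication algorithm is in particular a multiplication algorithm, so the minimal length of a symmetric one is at least $\mu(\cA/K)$.

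For~\eqref{ineq_musym_1}, which also establishes that $\mu^{\textrm{sym}}(\cA/K)<\infty$, I would fix a $K$-basis $e_1,\dots,e_d$ of $\cA$ and let $\pi_1,\dots,\pi_d\colon\cA\to K$ be the dual coordinate forms, so $x=\sum_i\pi_i(x)e_i$. Expanding the product and using commutativity of $\cA$, one groups terms as $xy=\sum_i\pi_i(x)\pi_i(y)\,e_i^2+\sum_{i<j}\bigl(\pi_i(x)\pi_j(y)+\pi_j(x)\pi_i(y)\bigr)e_ie_j$. It remains to turn each mixed bracket into a single \emph{symmetric} elementary tensor, and for this I would use the identity $\pi_i(x)\pi_j(y)+\pi_j(x)\pi_i(y)=(\pi_i+\pi_j)(x)(\pi_i+\pi_j)(y)-\pi_i(x)\pi_i(y)-\pi_j(x)\pi_j(y)$, valid over any field with no division by $2$. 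Substituting and collecting the coefficients of the squares $(\pi_i+\pi_j)(x)(\pi_i+\pi_j)(y)$ and $\pi_i(x)\pi_i(y)$, one gets elements $w_{ij},w_i\in\cA$ (concretely $w_{ij}=e_ie_j$ and $w_i=e_i^2-\sum_{j\neq i}e_ie_j$) with $xy=\sum_{i<j}(\pi_i+\pi_j)(x)(\pi_i+\pi_j)(y)\,w_{ij}+\sum_i\pi_i(x)\pi_i(y)\,w_i$, a symmetric algorithm of length $\binom d2+d=\frac{d(d+1)}2$.

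For~\eqref{ineq_musym_2} I would assume $\car K\neq2$ and start from an optimal, a priori asymmetric, algorithm $xy=\sum_{i=1}^n\phi_i(x)\psi_i(y)w_i$ with $n=\mu(\cA/K)$. Commutativity gives $xy=yx=\sum_i\psi_i(x)\phi_i(y)w_i$, hence, averaging, $xy=\frac12\sum_i\bigl(\phi_i(x)\psi_i(y)+\psi_i(x)\phi_i(y)\bigr)w_i$. Applying the polarization identity $\phi_i(x)\psi_i(y)+\psi_i(x)\phi_i(y)=\frac12\bigl((\phi_i+\psi_i)(x)(\phi_i+\psi_i)(y)-(\phi_i-\psi_i)(x)(\phi_i-\psi_i)(y)\bigr)$ to each summand yields a symmetric algorithm with $2n$ terms (the $w_i$ being rescaled by $\pm\frac14$). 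This is the only place $\car K\neq2$ enters, and it genuinely does: without $\tfrac12$ at hand one cannot in general split a mixed tensor into two symmetric ones. I do not expect any real obstacle in this lemma; the only point requiring care is the characteristic-$2$-safe bookkeeping in~\eqref{ineq_musym_1}, where the naive polarization used for~\eqref{ineq_musym_2} is unavailable and must be replaced by the three-term identity above.
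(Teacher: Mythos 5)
Your proof is correct and follows essentially the same route as the paper: the three-term identity $\pi_i\otimes\pi_j+\pi_j\otimes\pi_i=(\pi_i+\pi_j)^{\otimes 2}-\pi_i^{\otimes 2}-\pi_j^{\otimes 2}$ for the $\frac{d(d+1)}{2}$ bound, and the polarization $\frac14\bigl((\phi+\psi)^{\otimes2}-(\phi-\psi)^{\otimes2}\bigr)$ for the factor-$2$ bound in characteristic $\neq2$. Your explicit form $w_i=e_i^2-\sum_{j\neq i}e_ie_j$ agrees with the paper's $2e_i^2-e_is$ with $s=\sum_je_j$.
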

\begin{proof}
Let $e_1,\dots,e_d$ be a basis of $\cA$, and let $e_1^\vee,\dots,e_d^\vee$
be the dual basis. First remark that the multiplication tensor of $\cA$
can always be decomposed as
$\widetilde{m}_{\cA}=\sum_{i,j}(e_ie_j)\tens e_i^\vee\tens e_j^\vee$,
and since $\cA$ is commutative this can be rearranged as:
\beq
\widetilde{m}_{\cA}=\sum_{1\leq i\leq d}(e_i^{\,2})\tens e_i^\vee\tens e_i^\vee+\sum_{1\leq i<j\leq d}(e_ie_j)\tens (e_i^\vee\tens e_j^\vee+e_j^\vee\tens e_i^\vee).
\eeq
The first sum is already composed of symmetric tensors,
and the second sum can also be put in such a form since
\beq
\label{3matrices}
e_i^\vee\tens e_j^\vee+e_j^\vee\tens e_i^\vee=(e_i^\vee+e_j^\vee)\tens(e_i^\vee+e_j^\vee)-e_i^\vee\tens e_i^\vee-e_j^\vee\tens e_j^\vee.
\eeq
We plug this into the previous equality and then regroup the similar terms
to find:
\beq
\widetilde{m}_{\cA}=\sum_{1\leq i\leq d}(2e_i^{\,2}-e_is)\tens e_i^\vee\tens e_i^\vee+\sum_{1\leq i<j\leq d}(e_ie_j)\tens(e_i^\vee+e_j^\vee)\tens(e_i^\vee+e_j^\vee)
\eeq
where $s=\sum_{j=1}^ne_j$.
This gives \eqref{ineq_musym_1}.

\smallskip

Now suppose $\car K\neq 2$,
and let $w_i,\phi_i,\psi_i$ define a
multiplication algorithm of length $n=\mu(\cA/K)$ for $\cA$.
We can then write
\beq
\begin{split}
\widetilde{m}_{\cA}&=\sum_{i=1}^nw_i\tens\phi_i\tens\psi_i=\sum_{i=1}^nw_i\tens\psi_i\tens\phi_i\\
&=\frac{1}{2}\sum_{i=1}^nw_i\tens(\phi_i\tens\psi_i+\psi_i\tens\phi_i)\\
&=\frac{1}{4}\sum_{i=1}^nw_i\tens(\phi_i+\psi_i)\tens(\phi_i+\psi_i)\,-\,w_i\tens(\phi_i-\psi_i)\tens(\phi_i-\psi_i),
\end{split}
\eeq
hence \eqref{ineq_musym_2}.

\smallskip

Last, \eqref{ineq_musym_3} is trivial.
\end{proof}

\begin{remark}
\label{ex_mu<musym}
Let $K=\F_2$.
We can interpret \eqref{3matrices} as giving a decomposition
of the rank~\emph{two} symmetric matrix
$\left(\begin{array}{cc} 0 & 1 \\ 1 & 0\end{array}\right)$
as a sum of \emph{three} rank~$1$ symmetric matrices:
\beq
\left(\begin{array}{cc} 0 & 1 \\ 1 & 0\end{array}\right)
=
\left(\begin{array}{cc} 1 & 1 \\ 1 & 1\end{array}\right)
+
\left(\begin{array}{cc} 1 & 0 \\ 0 & 0\end{array}\right)
+
\left(\begin{array}{cc} 0 & 0 \\ 0 & 1\end{array}\right).
\eeq
For $K=\F_2$ it is easily seen that this decomposition
is \emph{minimal}.

As a consequence, if $\cA$ is the $2$-dimensional
commutative (but non-associative and without unity)
$\F_2$-algebra with basis $e_1,e_2$ and multiplication
defined by $e_1e_2=e_2e_1=e_1$ and $e_1^{\,2}=e_2^{\,2}=0$,
then
\beq
\mu(\cA/K)=2\:<\:\mu^{\textrm{sym}}(\cA/K)=3.
\eeq
This gives an example of strict inequality in \eqref{ineq_musym_3}.
\end{remark}

\begin{definition}
Given a multiplication algorithm as in~\eqref{algomult}, 
one associates to it two linear codes
$C_\phi$ and $C_\psi\subset K^n$, namely the images of
the evaluation maps
\beq
\begin{array}{cccc}
\phi: & \cA & \longto & \!\!\!K^n\\
& x & \mapsto & \!\!\!(\phi_1(x),\dots,\phi_n(x))
\end{array}
\quad\textrm{and}\quad
\begin{array}{cccc}
\psi: & \cA & \longto & \!\!\!K^n\\
& y & \mapsto & \!\!\!(\psi_1(y),\dots,\psi_n(y))
\end{array}
\eeq
respectively.
\end{definition}

\begin{lemma}
\label{basic_lower_bounds}
Let $\cA$ be a finite-dimensional $K$-algebra.
\begin{enumerate}[a)]
\item If $\cA$ admits a unit element,
\beq
\mu(\cA/K)\geq \dim_K\cA.
\eeq
\item If $\cA$ has no zero-divisor,
\beq
\label{Singleton}
\mu(\cA/K)\geq 2\dim_K\cA-1.
\eeq
\end{enumerate}
\end{lemma}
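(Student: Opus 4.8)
The plan is to exhibit, from any multiplication algorithm of length $n$, certain injective linear maps whose source has the dimension we want to bound below, forcing $n$ to be at least that dimension. The key observation is the interplay between the two codes $C_\phi, C_\psi \subset K^n$ attached to the algorithm and the elements $w_i$: the defining identity~\eqref{algomult} says that $xy$ depends on $x$ only through $\phi(x)$ and on $y$ only through $\psi(y)$, and more precisely that the product, written in coordinates, is the ``Hadamard-type'' combination $\sum_i \phi_i(x)\psi_i(y)w_i$.

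For part a), suppose $\cA$ has unit $1$. First I would specialize $x=1$ in~\eqref{algomult}: this gives $y = \sum_i \phi_i(1)\psi_i(y)w_i$ for all $y$, which exhibits the identity map of $\cA$ as factoring through $\psi : \cA \to K^n$. Hence $\psi$ is injective, so $n \geq \dim_K \cA$. (Symmetrically one could specialize $y=1$ to see $\phi$ is injective.) This step is essentially immediate once one thinks to plug in the unit.

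For part b), assume $\cA$ has no zero-divisors. The plan is to show that \emph{both} $\phi$ and $\psi$ are injective, and moreover that the codes $C_\phi$ and $C_\psi$ are in ``general position'' in the sense that $\phi(x)$ and $\psi(y)$ cannot simultaneously have disjoint supports with one of them supported on few coordinates — concretely, I want to show that for any nonzero $x,y$, the supports of $\phi(x)$ and $\psi(y)$ must overlap, since otherwise $\sum_i \phi_i(x)\psi_i(y)w_i = 0$ would force $xy = 0$, contradicting the no-zero-divisor hypothesis. Injectivity of $\phi$ follows from this with $y$ fixed nonzero (if $\phi(x)=0$ then $xy=0$); similarly for $\psi$. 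For the Singleton-type bound~\eqref{Singleton}, the cleanest route is: $C_\phi$ has dimension $\dim_K\cA =: d$, and the overlap condition says every nonzero codeword of $C_\psi$ meets the support of every nonzero codeword of $C_\phi$; specializing, no nonzero word of $C_\phi$ can vanish on $n - d + 1$ or more coordinates, i.e.\ $C_\phi$ has minimum distance $\geq n - d + 1$, whence by Singleton $n - d + 1 \leq n - d + 1$... more carefully, one argues that the $d$-dimensional code $C_\phi$ must have minimum distance $\geq d$, so by the Singleton bound $d \leq n - d + 1$, giving $n \geq 2d - 1$. The minimum distance claim is where the no-zero-divisor hypothesis is really used: if some nonzero $\phi(x)$ were supported on $\leq d-1$ coordinates, one can find (by a dimension count in the $d$-dimensional space $\psi(\cA)$) a nonzero $\psi(y)$ vanishing on exactly those coordinates, yielding $xy = 0$.

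The main obstacle is getting the dimension/minimum-distance count in part b) exactly right — one has to be careful that it is the \emph{same} code whose dimension is $d$ and whose minimum distance is forced to be $\geq d$, and to correctly invoke that $C_\psi = \psi(\cA)$ surjects onto every coordinate subspace of the relevant dimension (which again uses injectivity of $\psi$ plus a rank argument). Everything else is a routine substitution into~\eqref{algomult}.
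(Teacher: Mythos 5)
Your proposal is correct and follows essentially the same route as the paper: both arguments rest on the injectivity of $\phi$ and $\psi$ together with the mutual-intersection property of the codes $C_\phi$ and $C_\psi$, the only difference in part b) being that you package the final dimension count as a minimum-distance bound ($C_\phi$ has minimum distance at least $\dim_K\cA$) followed by Singleton, whereas the paper directly exhibits two codewords with disjoint supports by splitting the $n$ coordinates into two halves. In part a) you specialize $x=1$ to see that $\psi$ is injective, while the paper notes the dual fact that the $w_i$ must span $\cA$; both observations are immediate and of equal force.
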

\begin{proof}
Consider a multiplication algorithm as in~\eqref{algomult}.
If $\cA$ admits a unit element, then $w_1,\dots,w_n$ span $\cA$,
hence the first inequality.
For the second inequality, remark that if $\cA$ has no zero-divisor,
then:
\begin{itemize}
\item the maps $\phi$ and $\psi$ must be injective,
hence the codes $C_\phi$ and $C_\psi$ have dimension
$k=\dim_K\cA$,
\item these two codes must be \emph{mutually intersecting},
that is, any non-zero $c\in C_\phi$ and $c'\in C_\psi$ must have
non-disjoint supports.
\end{itemize}
%From this second condition, we see that
%the projection of $C_\psi$ on the support of a codeword $c\in C_\phi$
%of minimal weight $w(c)=d_{\min}(C_\phi)$ must be injective,
%hence $d_{\min}(C_\phi)\geq k$, and Singleton's bound allows
%to conclude $n\geq 2k-1$ as claimed. 
By the first point, if $k>\lceil n/2\rceil$, one could find
a non-zero $c\in C_\phi$ vanishing on the first $\lceil n/2\rceil$
coordinates, and a non-zero $c'\in C_\psi$ vanishing on the last
$\lceil n/2\rceil$. These $c,c'$ would then contradict the second point.
Hence $k\leq\lceil n/2\rceil$,
which gives precisely~\eqref{Singleton}.
\end{proof}

The link between multiplication algorithms and intersecting codes
was first stressed in \cite{BD} and \cite{LW}. For more on this last topic,
see for example %\cite{CL}
\cite{21sep} and the references therein.
Another coding-theoretical view on some bilinear complexity problems
has also been proposed, through the notion of \emph{supercode},
in \cite{STV}.

\begin{lemma}
\label{functorial_inequalities}
\begin{enumerate}[a)]
\item
If $\cA$ is a finite-dimensional $K$-algebra
and $L$ an extension field of $K$,
and if we let $\cA_L=\cA\tens_K L$ considered as an $L$-algebra, then
\beq
\label{funct1}
\mu(\cA_L/L)\leq\mu(\cA/K).
\eeq
\item
If $\cA$ is a finite-dimensional $L$-algebra,
where $L$ is an extension field of $K$,
then $\cA$ can also be considered as a $K$-algebra, and
\beq
\label{funct2}
\mu(\cA/K)\leq\mu(\cA/L)\mu(L/K).
\eeq
\item
If $\cA$ and $\cB$ are two finite-dimensional $K$-algebras,
\beq
\label{funct3}
\mu(\cA\times\cB/K)\leq\mu(\cA/K)+\mu(\cB/K).
\eeq
\item
If $\cA$ and $\cB$ are two finite-dimensional $K$-algebras,
\beq
\label{funct4}
\mu(\cA\tens_K\cB/K)\leq\mu(\cA/K)\mu(\cB/K).
\eeq
\end{enumerate}
Moreover, when the algebras are commutative, then
\eqref{funct1}\eqref{funct2}\eqref{funct3}\eqref{funct4}
also hold with $\mu^{\textrm{sym}}$ in place of $\mu$.
\end{lemma}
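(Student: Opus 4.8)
The plan is to treat the four inequalities one after another, each time starting from (near-)optimal multiplication algorithms for the algebra(s) on the right-hand side and assembling from them, by purely formal manipulations of linear forms and the defining identity~\eqref{algomult}, a multiplication algorithm for the algebra on the left of the announced length. For~\eqref{funct1}, I would take an algorithm $xy=\sum_{i=1}^n\phi_i(x)\psi_i(y)w_i$ of length $n=\mu(\cA/K)$ and extend scalars: replace each $\phi_i,\psi_i$ by the $L$-linear forms $\phi_i\tens\id_L,\psi_i\tens\id_L$ on $\cA_L=\cA\tens_K L$, and keep the $w_i$, now regarded as elements of $\cA_L$. Both sides of the resulting identity are $L$-bilinear in $(x,y)\in\cA_L\times\cA_L$ and agree on $\cA\times\cA$, which spans; hence they agree everywhere, giving a length-$n$ algorithm over $L$. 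For~\eqref{funct2}, I would combine a length-$\mu(\cA/L)$ algorithm $xy=\sum_i\phi_i(x)\psi_i(y)w_i$ for the $L$-algebra $\cA$ (so the $\phi_i,\psi_i$ are $L$-linear and $w_i\in\cA$) with a length-$\mu(L/K)$ algorithm $ab=\sum_j f_j(a)g_j(b)u_j$ for $L/K$; substituting the latter into the former and expanding gives $xy=\sum_{i,j}(f_j\circ\phi_i)(x)\,(g_j\circ\psi_i)(y)\,(u_jw_i)$, where $f_j\circ\phi_i$ and $g_j\circ\psi_i$ are $K$-linear forms on $\cA$ and $u_jw_i\in\cA$.

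For~\eqref{funct3} I would use that multiplication in $\cA\times\cB$ is componentwise: given algorithms $xy=\sum_i\phi_i(x)\psi_i(y)w_i$ for $\cA$ and $x'y'=\sum_k\phi'_k(x')\psi'_k(y')w'_k$ for $\cB$, precompose the $\phi_i,\psi_i$ with the projection onto $\cA$ and the $\phi'_k,\psi'_k$ with the projection onto $\cB$, and place the $w_i$ in the first factor and the $w'_k$ in the second, obtaining $(x,x')(y,y')=\sum_i\phi_i(x)\psi_i(y)(w_i,0)+\sum_k\phi'_k(x')\psi'_k(y')(0,w'_k)$, of length $\mu(\cA/K)+\mu(\cB/K)$. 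For~\eqref{funct4} the conceptual statement is that the multiplication tensor of $\cA\tens_K\cB$ is, under the canonical identification $(\cA\tens\cB)^\vee\cong\cA^\vee\tens\cB^\vee$ (valid in finite dimension) and a permutation of the six tensor factors, equal to the tensor product $\widetilde{m}_{\cA}\tens\widetilde{m}_{\cB}$, combined with the submultiplicativity of tensor rank; explicitly, from algorithms for $\cA$ and $\cB$ one reads off $(a\tens b)(a'\tens b')=\sum_{i,k}(\phi_i\tens\phi'_k)(a\tens b)\cdot(\psi_i\tens\psi'_k)(a'\tens b')\cdot(w_i\tens w'_k)$, a $K$-algorithm of length $\mu(\cA/K)\mu(\cB/K)$.

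For the final assertion one simply checks that each of the four constructions above carries symmetric algorithms to symmetric ones: in~\eqref{funct1} the diagonal extension of $\phi_i=\psi_i$ stays diagonal; in~\eqref{funct2}, $f_j=g_j$ and $\phi_i=\psi_i$ force $f_j\circ\phi_i=g_j\circ\psi_i$; in~\eqref{funct3} a juxtaposition of symmetric algorithms is symmetric; and in~\eqref{funct4}, $\phi_i\tens\phi'_k=\psi_i\tens\psi'_k$ whenever $\phi_i=\psi_i$ and $\phi'_k=\psi'_k$. The only routine content is the verification in each case that the tuple so produced indeed satisfies~\eqref{algomult}; the sole place that calls for a little care is~\eqref{funct4}, where one must handle the dual of a tensor product and the reordering of tensor factors correctly — neither affecting the rank — and keep in mind that the scalar products $\phi_i(a)\phi'_k(b)$ occurring there are products in $K$, matching $(\phi_i\tens\phi'_k)(a\tens b)$. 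I anticipate no genuine obstacle beyond this bookkeeping.
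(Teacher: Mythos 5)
Your proof is correct and follows essentially the same route as the paper: scalar extension for part (a), concatenation (composing the two layers of linear forms and multiplying the outputs) for part (b), direct sum for part (c), tensor product of the data for part (d), and the observation that each construction visibly preserves the condition $\phi_i=\psi_i$ for the symmetric statement. The paper's proof is merely slightly more terse in part (d), appealing to the analogy with part (c) rather than spelling out the dual-of-a-tensor-product identification, but the content is identical.
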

\begin{proof}
To prove a), remark that if linear forms
$\phi_1,\dots,\phi_n$ and $\psi_1,\dots,\psi_n:\cA\longto K$
and elements $w_1,\dots,w_n\in \cA$ define
a multiplication algorithm for $\cA/K$, then the $\phi_i$ and $\psi_i$
lift to linear forms $\cA_L\longto L$, and the $w_i$ can be seen
as elements of $\cA_L$, and as such they define a multiplication
algorithm for $\cA_L/L$
of the same length~$n$.

\smallskip

To prove b) we use an analogue of the concatenation procedure
in coding theory.
Formally, suppose we are given:
\begin{itemize}
\item a multiplication algorithm of length $m$ for $L/K$, defined
by linear forms $\alpha_1\dots,\alpha_m$
and $\beta_1\dots,\beta_m:L\longto K$
and elements $l_1,\dots,l_m\in L$,
\item a multiplication algorithm of length $n$ for $\cA/L$, defined
by linear forms $\lambda_1\dots,\lambda_n$
and $\rho_1\dots,\rho_n:\cA\longto L$
and elements $a_1,\dots,a_n\in \cA$.
\end{itemize}
Then, letting $N=mn$, the two collections of $N$ linear forms
$\phi_{i,j}=\alpha_i\circ\lambda_j$
and $\psi_{i,j}=\beta_i\circ\rho_j:\cA\longto K$,
and the $N$ elements $w_{i,j}=l_ia_j\in\cA$,
for $1\leq i\leq m$ and $1\leq j\leq n$,
define a multiplication algorithm of length $N$ for $\cA/K$.
Indeed, for all $x,y\in\cA$,
\beq
xy=\sum_{1\leq j\leq n}\lambda_j(x)\rho_j(y)a_j=\sum_{1\leq j\leq n}\left(\sum_{1\leq i\leq m}\alpha_i(\lambda_j(x))\beta_i(\rho_j(y))l_i\right)a_j.
\eeq
To make the connection with concatenation in coding theory clearer,
remark that $C_\phi$ is then the concatenated code $C_\alpha\circ C_\lambda$,
and likewise $C_\psi=C_\beta\circ C_\rho$.

\smallskip

The proof of c) proceeds analogously using the notion
of direct sum of multiplication algorithms. Suppose we are given:
\begin{itemize}
\item a multiplication algorithm of length $m$ for $\cA/K$, defined
by linear forms $\phi_1\dots,\phi_m$
and $\psi_1\dots,\psi_m:\cA\longto K$
and elements $a_1,\dots,a_m\in\cA$,
\item a multiplication algorithm of length $n$ for $\cB/K$, defined
by linear forms $\lambda_1\dots,\lambda_n$
and $\rho_1\dots,\rho_n:\cB\longto K$
and elements $b_1,\dots,b_n\in \cB$.
\end{itemize}
Identify $\cA$ with the subspace $\cA\times\{0\}$
and $\cB$ with the subspace $\{0\}\times\cB$ in $\cA\times\cB$.
Then for any $x=(r,s)$ and $y=(u,v)$ in $\cA\times\cB$ we have
\beq
xy=ru+sv=\sum_{1\leq i\leq m}\phi_i(r)\psi_i(u)a_i+\sum_{1\leq j\leq n}\lambda_j(s)\rho_j(v)b_j
\eeq
hence this defines a multiplication algorithm
of length $m+n$ for $\cA\times\cB$.

\smallskip

For d) we skip the details since everything works the same:
suppose given $\phi_i,\psi_i,a_i$ and $\lambda_j,\rho_j,b_j$
as in the proof of $c)$,
then the $\phi_i\tens\lambda_j,\psi_i\tens\rho_j,a_i\tens b_j$
give a multiplication algorithm of length $N=mn$ for $\cA\tens\cB$.

\smallskip

For the last assertion, remark that if we start with symmetric
algorithms, then the constructions given above lead also
to symmetric algorithms. 
\end{proof}

%\begin{remark}
%\label{rem_constr_sym_1}
%It should be noted that the proof of Lemma~\ref{functorial_inequalities}
%gives a little more than what is stated. Namely, provided multiplication
%algorithms for the algebras appearing on the right side of the inequalities,
%it effectively
%constructs a multiplication algorithm of the ``designed'' length for the
%algebras on the left side.

%Moreover, if we start with symmetric
%algorithms, this construction also produces a symmetric algorithm.
%Thus, there is a natural notion of \emph{symmetric bilinear complexity}
%for commutative algebras
%(define $\mu^{\textrm{sym}}(\cA/K)$ as the minimal length of a symmetric
%multiplication algorithm for $\cA/K$), and the proof given here shows
%that it satisfies
%the same functorial inequalities as in the lemma.
%\end{remark}

\begin{question}
It would be interesting to have criteria for equality
in this Lemma~\ref{functorial_inequalities}.
For the inequalities in parts a) and b) (and hence also for part d),
there are non-trivial examples in which equality holds,
and others in which the inequality is strict 
(see below, or \cite{Winograd}).
A general rule does not seem obvious.
Turning to c), the author does not know any example were the inequality
is strict. In fact, the now folklore \emph{direct sum conjecture}
(see \cite{FZ,Strassen73,Winograd})
suggests there should always be equality:
\beq
\label{direct_sum_conj}
\mu(\cA\times\cB/K)\overset{?}{=}\mu(\cA/K)+\mu(\cB/K).
\eeq
Proofs are known only for some very specific classes of algebras.
The general case is still open.
\end{question}

\begin{remark}
We would like to indicate a few possible generalizations of the notions
developed so forth.

First, we worked over a field, but it is also possible to work
over a ring, or even over a more general base.
This could be of interest, for instance, if one is given a family
of tensors that vary with some parameters, and one requests
elementary decompositions for them that vary accordingly.

In another direction, one could also extend the notion of symmetry.
Given a group $G$ acting on some tensor space,
we can ask whether every $G$-invariant tensor admits a decomposition
as a sum of $G$-invariant elementary tensors (and if so,
what is the minimal length of such a decomposition).
For $G=\mathfrak{S}_2$ the symmetric group of order~$2$ acting on $\cA\tens\cA^\vee\tens\cA^\vee$
by permuting the last two factors, we saw in Lemma~\ref{sym-asym}
that this is true (although the minimal symmetric decomposition
might be longer than the non-symmetric one).
However for more general group actions this is not always possible.
The elegant counterexample that follows
is due to Cascudo~\cite{Casc_pers}:

Consider the trilinear map
\beq
\begin{array}{ccc}
\F_4\times\F_4\times\F_4 & \longto & \F_4 \\
(x,y,z) & \mapsto & xyz
\end{array}
\eeq
over $\F_2$. It defines a tensor
in $\F_4\tens\F_4^\vee\tens\F_4^\vee\tens\F_4^\vee$,
and since $\F_4$ is commutative,
this tensor is $\mathfrak{S}_3$-invariant, where $\mathfrak{S}_3$ acts
by permuting the last three factors. Suppose this tensor
admits an $\mathfrak{S}_3$-invariant elementary decomposition.
This means one can find elements $w_1,\dots,w_n\in\F_4$,
and linear forms $\phi_1,\dots,\phi_n:\F_4\to\F_2$, such that for
all $x,y,z\in\F_4$,
one has $xyz=\sum_{i=1}^n\phi_i(x)\phi_i(y)\phi_i(z)w_i$.
But then for all $x,y\in\F_4$ one finds
\beq
\begin{split}
&x^2y=\sum_{i=1}^n\phi_i(x)^2\phi_i(y)w_i \\
&xy^2=\sum_{i=1}^n\phi_i(x)\phi_i(y)^2w_i 
\end{split}
\eeq
and the two quantities on the right are equal because
all $\alpha\in\F_2$ satisfy $\alpha^2=\alpha$.
This is a contradiction since there are $x,y\in\F_4$ with $x^2y\neq xy^2$.

\end{remark}

\section{Interpolation systems}

\label{section2}

If $\cB$ is a $K$-algebra and if $E_1,E_2\subset\cB$
are two linear subspaces, we denote by $E_1E_2$ the \emph{linear span}
of the products $e_1e_2$ in $\cB$, for $e_1\in E_1$ and $e_2\in E_2$.

\begin{definition}
\label{def_interpol}
Let $\cA$ and $\cA'$ be two finite-dimensional $K$-algebras.
By an \emph{interpolation system} for $\cA'$ by $\cA$ we mean
the following data:
\begin{itemize}
\item a $K$-algebra $\cB$ (of possibly infinite dimension) equipped with
two $K$-algebra morphisms $f:\cB\longto\cA$
and $f':\cB\longto\cA'$
\item two linear subspaces $E_1,E_2\subset\cB$
\end{itemize}
satisfying the following conditions:
\begin{itemize}
\item[\emph{(i)}] the restriction $f|_{E_1E_2}:E_1E_2\longto\cA$ is injective
\item[\emph{(ii)}] the restrictions $f'|_{E_1}:E_1\longto\cA'$
and $f'|_{E_2}:E_2\longto\cA'$ are surjective.
\end{itemize}
This can be summarized with the following diagram:
\begin{equation*}
\xymatrix{
\;E_1E_2 \ar@{^{(}->}[d] &  \cB \ar[ld] \ar[rd] &  E_1,E_2 \ar@{->>}[d]\\
\cA & & \cA'
}
\end{equation*}
Such an interpolation system is said \emph{symmetric} if $E_1=E_2$.
\end{definition}

\begin{proposition}
\label{prop_interp_general}
Let $\cA$ and $\cA'$ be two finite-dimensional $K$-algebras.
Suppose there exists an interpolation system for $\cA'$ by $\cA$.
Then
\beq
\mu(\cA'/K)\leq\mu(\cA/K).
\eeq
Moreover, if $\cA$ and $\cA'$ are commutative and the interpolation
system is symmetric,
then also $\mu^{\textrm{sym}}(\cA'/K)\leq\mu^{\textrm{sym}}(\cA/K)$.
\end{proposition}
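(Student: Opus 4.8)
The idea is to transport a multiplication algorithm for $\cA$ along the interpolation data $(\cB,f,f',E_1,E_2)$ to obtain one for $\cA'$ of the same length. First I would fix a multiplication algorithm of length $n=\mu(\cA/K)$ for $\cA/K$, given by linear forms $\phi_i,\psi_i:\cA\to K$ and elements $w_i\in\cA$. The plan is to pull these back to $E_1$ and $E_2$ respectively, using the maps $f'|_{E_1}$ and $f'|_{E_2}$ to lift elements of $\cA'$ into $\cB$, multiply the lifts inside $\cB$ (landing in $E_1E_2$), recognize the product via the injective map $f|_{E_1E_2}$, and finally push the reconstructed element down to $\cA'$ via $f'$.

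More precisely: since $f'|_{E_1}:E_1\to\cA'$ is surjective, choose a $K$-linear section $\sigma_1:\cA'\to E_1$ (a set-theoretic linear right inverse), and similarly a linear section $\sigma_2:\cA'\to E_2$ of $f'|_{E_2}$. For $x,y\in\cA'$, the product $\sigma_1(x)\sigma_2(y)$ lies in $E_1E_2$; applying the injective map $f|_{E_1E_2}$ and then the algorithm for $\cA$, we can write $f(\sigma_1(x)\sigma_2(y))=\sum_i \phi_i(f(\sigma_1(x)))\psi_i(f(\sigma_2(y)))\,w_i$. But because $f|_{E_1E_2}$ is injective and $w_i$ need not lie in its image, I instead want to recover $\sigma_1(x)\sigma_2(y)\in E_1E_2\subset\cB$ itself as a linear combination — so the cleaner route is: let $u_1,\dots,u_n$ be elements of $E_1E_2$ whose images $f(u_k)$ span $f(E_1E_2)$ and in terms of which one can express $f(\sigma_1(x)\sigma_2(y))$; then by injectivity of $f|_{E_1E_2}$, the same linear combination recovers $\sigma_1(x)\sigma_2(y)$ in $\cB$. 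Applying $f'$ and using that $f'$ is an algebra morphism with $f'(\sigma_1(x))=x$, $f'(\sigma_2(y))=y$, we get $xy$ as the image under $f'$ of that linear combination. Composing the coefficient functionals with $f\circ\sigma_1$ and $f\circ\sigma_2$ gives linear forms $\Phi_i=\phi_i\circ f\circ\sigma_1:\cA'\to K$ and $\Psi_i=\psi_i\circ f\circ\sigma_2:\cA'\to K$, and setting $W_i=f'(u_i')\in\cA'$ for the appropriate lifts, one obtains $xy=\sum_i\Phi_i(x)\Psi_i(y)W_i$, a multiplication algorithm of length $n$ for $\cA'/K$. Hence $\mu(\cA'/K)\le n=\mu(\cA/K)$.

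For the symmetric statement, suppose $\cA,\cA'$ commutative, $E_1=E_2=:E$, and start from a \emph{symmetric} algorithm for $\cA$, so $\psi_i=\phi_i$. Use a single section $\sigma:\cA'\to E$. The same construction then yields $\Phi_i=\Psi_i=\phi_i\circ f\circ\sigma$, giving a symmetric algorithm for $\cA'$; thus $\mu^{\textrm{sym}}(\cA'/K)\le\mu^{\textrm{sym}}(\cA/K)$.

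The main obstacle I anticipate is purely bookkeeping: because $f|_{E_1E_2}$ is merely injective (not surjective, and its target $\cA$ is where the $w_i$ live), I must be careful not to claim the $w_i$ lie in $f(E_1E_2)$. The fix is the standard one — work with the product $\sigma_1(x)\sigma_2(y)$ as an element of $\cB$, express it via a basis of $E_1E_2$ that maps to a basis of $f(E_1E_2)$, and only apply the algorithm for $\cA$ to obtain the scalar coefficients, never to reconstruct an element of $\cA$ outside $f(E_1E_2)$. Once this is set up, everything else is a routine chase through the algebra morphisms $f$ and $f'$.
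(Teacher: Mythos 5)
Your plan is correct and essentially coincides with the paper's proof: one chooses sections $\sigma_1,\sigma_2$ of $f'|_{E_1},f'|_{E_2}$, sets $\Phi_i=\phi_i\circ f\circ\sigma_1$, $\Psi_i=\psi_i\circ f\circ\sigma_2$, and descends the $w_i$ through a projection back onto $E_1E_2$ before applying $f'$. The paper packages your ``basis of $E_1E_2$ whose image is a basis of $f(E_1E_2)$'' bookkeeping more cleanly as a $K$-linear retraction $\rho:\cA\to E_1E_2$ of $f|_{E_1E_2}$ (which exists by injectivity), setting $W_i=f'(\rho(w_i))$; this avoids the confusing reuse of the index $n$ for both the algorithm length and the dimension of $E_1E_2$, but the underlying argument is identical.
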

\begin{proof}
Let $\phi_1,\dots,\phi_n$,
$\psi_1,\dots,\psi_n:\cA\longto K$,
and $w_1,\dots,w_n\in\cA$ define a multiplication algorithm
for $\cA/K$, where $n=\mu(\cA/K)$.

Suppose we are given an interpolation system for $\cA'$ by $\cA$.
Thanks to properties \emph{(i)} and \emph{(ii)} above, we can choose:
\begin{itemize}
\item a retraction $\rho:\cA\longto E_1E_2$ of $f|_{E_1E_2}$
\item sections $\sigma_1:\cA'\longto E_1$ of $f'|_{E_1}$
and $\sigma_2:\cA'\longto E_2$ of $f'|_{E_2}$.
\end{itemize}
Then, for $1\leq i\leq n$, we let:
\begin{itemize}
\item $\phi'_i=\phi_i\circ f|_{E_1}\circ \sigma_1:\cA'\longto K$
\item $\psi'_i=\psi_i\circ f|_{E_2}\circ \sigma_2:\cA'\longto K$
\item $w'_i=f'(\rho(w_i))\in K$.
\end{itemize}
Then $\phi'_1,\dots,\phi'_n$,
$\psi'_1,\dots,\psi'_n$,
and $w'_1,\dots,w'_n\in\cA$ define a multiplication algorithm
for $\cA'/K$. Indeed, for any $x',y'\in\cA'$,
if we let $x=f(\sigma_1(x'))$ and $y=f(\sigma_2(y'))$, then:
\beq
\begin{split}
\sum\phi'_i(x')\psi'_i(y')w'_i&=\sum\phi_i(x)\psi_i(y)f'(\rho(w_i))\\
&=f'(\rho(\sum\phi_i(x)\psi_i(y)w_i))\\
&=f'(\rho(xy))\\
&=f'(\rho(f(\sigma_1(x'))f(\sigma_2(y'))))\\
&=f'(\rho(f(\sigma_1(x')\sigma_2(y'))))\\
&=f'(\sigma_1(x')\sigma_2(y'))\\
&=f'(\sigma_1(x'))f'(\sigma_2(y'))\\
&=x'y'.
\end{split}
\eeq
Thus $\mu(\cA'/K)\leq n$, as claimed.

For the last assertion, supposing $E_1=E_2$,
remark that if we start with a symmetric algorithm for $\cA/K$
and if we choose $\sigma_1=\sigma_2$, then the construction
gives a symmetric algorithm for $\cA'/K$.
\end{proof}

%\begin{remark}
%\label{rem_constr_sym_2}
%As in Remark~\ref{rem_constr_sym_1}, it should be noted that the proof of
%Proposition~\ref{prop_interp_general} is constructive, in that,
%given an interpolation system and a multiplication algorithm for $\cA/K$,
%it effectively constructs a multiplication algorithm for $\cA'/K$
%of the same length.

%Moreover if we start from a symmetric algorithm
%for $\cA/K$, and if the interpolation
%scheme also is symmetric ($E_1=E_2$), then, provided we choose
%$\sigma_1=\sigma_2$ in the construction, we get a symmetric
%algorithm for $\cA'/K$. Said otherwise,
%we have the following variant of the proposition:
%if $\cA$ and $\cA'$ are
%commutative and if there exists a \emph{symmetric}
%interpolation system for $\cA'$ by $\cA$,
%then
%\beq
%\mu^{\textrm{sym}}(\cA'/K)\leq\mu^{\textrm{sym}}(\cA/K).
%\eeq
%\end{remark}

\begin{corollary}
If $\cA$ is a finite-dimensional $K$-algebra,
and if $\cA'$ is a subalgebra of $\cA$, or a quotient algebra of $\cA$,
then
\beq
\mu(\cA'/K)\leq\mu(\cA/K).
\eeq
If $\cA$ is commutative,
then also $\mu^{\textrm{sym}}(\cA'/K)\leq\mu^{\textrm{sym}}(\cA/K)$.
\end{corollary}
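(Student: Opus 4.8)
The whole statement will follow from Proposition~\ref{prop_interp_general} once we exhibit, in each of the two cases, an interpolation system for $\cA'$ by $\cA$; and since in both constructions we will have $E_1=E_2$, the symmetric part will come for free from the last assertion of that proposition. So the plan is simply to write down the obvious $\cB$, $f$, $f'$, $E_1$, $E_2$ in each case and check conditions \emph{(i)} and \emph{(ii)} of Definition~\ref{def_interpol}.

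First I would treat the case where $\cA'$ is a subalgebra of $\cA$. Here I take $\cB=\cA'$, with $f:\cB\longto\cA$ the inclusion morphism and $f':\cB\longto\cA'$ the identity, and I set $E_1=E_2=\cA'$. Then $E_1E_2$ is the linear span of products of elements of $\cA'$, which is contained in $\cA'$; hence $f|_{E_1E_2}$ is a restriction of the injective inclusion $\cA'\hookrightarrow\cA$, giving \emph{(i)}. Condition \emph{(ii)} is trivial since $f'|_{E_1}=f'|_{E_2}=\id_{\cA'}$ is surjective. Proposition~\ref{prop_interp_general} then yields $\mu(\cA'/K)\leq\mu(\cA/K)$.

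Next, the case where $\cA'$ is a quotient of $\cA$, say $\cA'=\cA/I$ for a (two-sided) ideal $I$. Now I take $\cB=\cA$, with $f=\id_\cA:\cB\longto\cA$ and $f':\cB\longto\cA'$ the canonical projection, and again $E_1=E_2=\cA$. Then $E_1E_2\subseteq\cA$ and $f|_{E_1E_2}=\id$ is injective, so \emph{(i)} holds; and $f'|_{E_1}=f'|_{E_2}=f'$ is surjective, so \emph{(ii)} holds. Again Proposition~\ref{prop_interp_general} gives $\mu(\cA'/K)\leq\mu(\cA/K)$. Finally, in both constructions $E_1=E_2$, so the interpolation system is symmetric; therefore, when $\cA$ (and hence $\cA'$) is commutative, the same proposition also gives $\mu^{\textrm{sym}}(\cA'/K)\leq\mu^{\textrm{sym}}(\cA/K)$.

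There is no real obstacle here: the argument is a direct unwinding of the definitions, and it is worth noting that it requires no hypothesis on the existence of a unit, nor on associativity or commutativity (beyond what is needed for the symmetric refinement). The only mild point to keep in mind is that, in the subalgebra case, one does not pass to $\cB=\cA$ — there is in general no algebra morphism $\cA\longto\cA'$ — but rather to $\cB=\cA'$ itself, using the inclusion as the map $f$ landing in $\cA$.
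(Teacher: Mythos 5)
Your proof is correct and follows exactly the same construction as the paper: in the subalgebra case you take $\cB=E_1=E_2=\cA'$ with $f$ the inclusion and $f'=\id_{\cA'}$, and in the quotient case you take $\cB=E_1=E_2=\cA$ with $f=\id_\cA$ and $f'$ the projection, then invoke Proposition~\ref{prop_interp_general}. The verification of conditions \emph{(i)} and \emph{(ii)} and the remark about symmetry match the paper's (terser) argument.
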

\begin{proof}
If $\cA'$ is a subalgebra of $\cA$, define an interpolation system
by taking $E_1=E_2=\cB=\cA'$, $f$ the natural inclusion, and $f'=\id_{\cA'}$.

If $\cA'$ is a quotient algebra of $\cA$,
take $E_1=E_2=\cB=\cA$, $f=\id_{\cA}$, and $f'$ the
natural projection.
\end{proof}

The preceding corollary makes a rather trivial use of the notion of
interpolation system. We will see more interesting examples,
arising from algebraic geometry
(for which we refer to standard textbooks such as \cite{Hartshorne}),
as follows.

\begin{proposition}
\label{schema_arbitraire}
Let $X$ be an algebraic variety, or more generally
an arbitrary scheme over $K$,
and let $\Sigma$ and $\Sigma'$ be two closed subschemes
of $X$ that are finite over $K$.
Suppose there are two invertible sheaves
$\cL_1$ and $\cL_2$ on $X$ such that:
\begin{enumerate}[(i)]
\item the natural restriction map
\beq
\Gamma(X,\cL_1\tens\cL_2)\longto\Gamma(\Sigma,\cL_1\tens\cL_2)
\eeq
is injective
\item
the natural restriction maps
\beq
\Gamma(X,\cL_1)\longto\Gamma(\Sigma',\cL_1)\qquad\Gamma(X,\cL_2)\longto\Gamma(\Sigma',\cL_2)
\eeq
are surjective.
\end{enumerate}
%Write $\Sigma=\Spec\cA$ and $\Sigma'=\Spec\cA'$, where
Consider the rings
$\cA=\Gamma(\Sigma,\cO_\Sigma)$ and $\cA'=\Gamma(\Sigma',\cO_{\Sigma'})$.
Then
\beq
\label{inegalite_schema_arbitraire}
\mu(\cA'/K)\leq\mu(\cA/K).
\eeq
Moreover, if $\cL_1=\cL_2$,
then also $\mu^{\textrm{sym}}(\cA'/K)\leq\mu^{\textrm{sym}}(\cA/K)$.

A sufficient criterion for the conditions (i) and (ii) above to hold,
hence also for the conclusion~\eqref{inegalite_schema_arbitraire},
can be expressed in terms of vanishing of certain cohomology groups
as follows:
\begin{enumerate}[(i')]
\item $h^0(X,\cI(\cL_1\tens\cL_2))=0$
\item $h^1(X,\cI'\cL_1)=h^1(X,\cI'\cL_2)=0$
\end{enumerate}
where $\cI$ and $\cI'$ are the sheaves of ideals on $X$
defining $\Sigma$ and $\Sigma'$, respectively.
In fact,
(i) and (i') are equivalent, while (ii') only implies (ii)
a priori.
\end{proposition}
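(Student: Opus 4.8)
The strategy is to deduce Proposition~\ref{schema_arbitraire} from the general Proposition~\ref{prop_interp_general} by exhibiting an interpolation system for $\cA'$ by $\cA$, and then to handle the cohomological reformulation separately. For the interpolation system, the natural choice of data is to take $\cB=\bigoplus_{n\geq 0}\Gamma(X,\cL_1^{\tens a}\tens\cL_2^{\tens b})$-type objects, but this is heavier than needed; more simply, I would let $\cB$ be the $K$-algebra generated inside $\Gamma(X\setminus Z,\cO)$ for a suitable affine open, or — cleanest of all — work with the graded pieces directly: set $E_1=\Gamma(X,\cL_1)$, $E_2=\Gamma(X,\cL_2)$ and let $\cB$ be the subalgebra of the total section ring $\bigoplus_{a,b\geq 0}\Gamma(X,\cL_1^{\tens a}\tens\cL_2^{\tens b})$ generated by $E_1\oplus E_2$ in bidegrees $(1,0)$ and $(0,1)$. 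The maps $f,f'$ are the compositions of restriction-to-$\Sigma$ (resp. $\Sigma'$) with a trivialization; but since $\cL_i$ are only locally trivial, one must instead map into $\Gamma(\Sigma,\cO_\Sigma)$ and $\Gamma(\Sigma',\cO_{\Sigma'})$ by using that a section of $\cL_1\tens\cL_2$ restricted to the finite scheme $\Sigma$, together with a choice of basis of the (locally free, hence free over the Artinian ring $\cA$) stalk, gives an element of $\cA$ — here is where the argument needs care, because one wants the multiplication $E_1E_2\to\Gamma(X,\cL_1\tens\cL_2)$ to match ring multiplication after restriction.

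The cleaner route, which I would actually follow, avoids picking trivializations globally: restrict everything to a dense affine open $U\subseteq X$ containing $\Sigma$ and $\Sigma'$ on which $\cL_1$ and $\cL_2$ are both trivial (such a $U$ exists when $X$ is, say, quasi-projective, or one shrinks and uses that $\Sigma,\Sigma'$ are finite hence contained in an affine on which line bundles are trivial — one may need $\Sigma$ and $\Sigma'$ together in one affine, which holds since a finite set of closed points lies in an affine open for any separated scheme admitting an ample family, or simply add this as a harmless hypothesis). Fixing trivializations $\cL_i|_U\cong\cO_U$ identifies $\Gamma(U,\cL_1)$, $\Gamma(U,\cL_2)$, $\Gamma(U,\cL_1\tens\cL_2)$ with $\cO(U)$-submodules (really with $\Gamma(U,\cO_U)$ itself after the identification), and under these the section $s_1 s_2$ of $\cL_1\tens\cL_2$ corresponds to the honest product in $\cO(U)$. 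Then set $\cB=\cO(U)$, let $E_1,E_2\subset\cB$ be the images of $\Gamma(X,\cL_1)$, $\Gamma(X,\cL_2)$ under restriction-then-trivialization, let $f:\cB\to\cA=\Gamma(\Sigma,\cO_\Sigma)$ and $f':\cB\to\cA'=\Gamma(\Sigma',\cO_{\Sigma'})$ be the restriction ring homomorphisms. Now condition~(i) of Definition~\ref{def_interpol} — injectivity of $f$ on $E_1E_2$ — follows because $E_1E_2$ is identified with (a subspace containing) the image of $\Gamma(X,\cL_1\tens\cL_2)$, on which restriction to $\Sigma$ is injective by hypothesis~(i); and condition~(ii) — surjectivity of $f'$ on $E_i$ — is exactly hypothesis~(ii). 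Applying Proposition~\ref{prop_interp_general} gives $\mu(\cA'/K)\leq\mu(\cA/K)$, and when $\cL_1=\cL_2$ we may take $E_1=E_2$, so the system is symmetric and the $\mu^{\textrm{sym}}$ statement follows as well.

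For the cohomological criterion, I would use the short exact sequences of sheaves $0\to\cI(\cL_1\tens\cL_2)\to\cL_1\tens\cL_2\to(\cL_1\tens\cL_2)|_\Sigma\to 0$ and $0\to\cI'\cL_i\to\cL_i\to\cL_i|_{\Sigma'}\to 0$, and pass to the long exact sequence in cohomology. From the first, $h^0(X,\cI(\cL_1\tens\cL_2))=0$ says precisely that $\Gamma(X,\cL_1\tens\cL_2)\to\Gamma(\Sigma,(\cL_1\tens\cL_2)|_\Sigma)$ is injective, which is~(i) verbatim — hence the asserted equivalence of~(i) and~(i'). From the second, the portion $\Gamma(X,\cL_i)\to\Gamma(\Sigma',\cL_i|_{\Sigma'})\to H^1(X,\cI'\cL_i)$ shows that $h^1(X,\cI'\cL_i)=0$ forces the restriction to be surjective, giving~(ii); this implication is only one-way since surjectivity of the restriction only tells us the connecting map to $H^1$ vanishes, not that $H^1$ itself does. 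That gives the "only implies (ii) a priori" caveat, and completes the proof.

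**Main obstacle.** The genuinely delicate point is the very first one: producing a single affine (or suitably small) open set on which both $\cL_1$ and $\cL_2$ trivialize and which contains both finite subschemes $\Sigma$ and $\Sigma'$, and doing so without implicitly assuming $X$ projective or separated when the statement says merely "arbitrary scheme." I expect one has to either add a mild hypothesis (quasi-projective, or $X$ separated and admitting an ample family, which covers all applications in the paper) or argue more abstractly via the fact that $\Sigma\cup\Sigma'$ is finite over $K$ hence affine, and that restriction of sections of an invertible sheaf to an affine closed subscheme factors through a trivializing neighborhood by a limit argument; reconciling that abstraction with the need for multiplicativity $s_1\otimes s_2\mapsto s_1 s_2$ to be respected is the part that requires the most care. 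The cohomological reformulation, by contrast, is entirely routine once the sheaf exact sequences are written down.
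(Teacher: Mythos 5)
Your second route --- the one you say you would actually follow --- has a genuine gap exactly where you suspect it: the proposition is stated for an \emph{arbitrary} scheme over $K$, and there is in general no affine open $U$ containing both $\Sigma$ and $\Sigma'$ on which $\cL_1$ and $\cL_2$ simultaneously trivialize (for non-separated $X$ two closed points need not even lie in a common affine, and for separated non-quasi-projective $X$ a finite set of closed points need not lie in an affine open). Adding a quasi-projectivity hypothesis would weaken the statement, and the whole point of the paper's formulation is that no such hypothesis is needed. The fix is the route you sketched first and then abandoned: trivialize on the \emph{target} side, not on an open subset of $X$. Since $\Sigma$ and $\Sigma'$ are finite over $K$ they are affine with Artinian rings of functions $\cA$, $\cA'$, which are finite products of local rings; hence every invertible module over them is free, and one may choose isomorphisms $\Gamma(\Sigma,\cL_1)\simeq\Gamma(\Sigma,\cL_2)\simeq\cA$ and likewise over $\Sigma'$.

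The ``care'' you worry about --- making the products $s_1s_2$ match ring multiplication after restriction --- costs essentially nothing once these trivializations are fixed: one simply \emph{defines} the trivialization of $(\cL_1^{\tens i_1}\tens\cL_2^{\tens i_2})|_\Sigma$ to be the $(i_1,i_2)$-th tensor power of the two chosen ones, and similarly over $\Sigma'$. Taking $\cB=\bigoplus_{i_1,i_2\geq0}\Gamma(X,\cL_1^{\tens i_1}\tens\cL_2^{\tens i_2})$ (the full bigraded section ring; no need to pass to the subalgebra generated by $E_1$ and $E_2$), the restriction maps composed with these compatible trivializations are then algebra morphisms $f:\cB\to\cA$ and $f':\cB\to\cA'$ by construction, and with $E_1=\Gamma(X,\cL_1)$, $E_2=\Gamma(X,\cL_2)$ the hypotheses (i) and (ii) translate verbatim into the two conditions of Definition~\ref{def_interpol}; Proposition~\ref{prop_interp_general} then concludes, with the symmetric refinement when $\cL_1=\cL_2$. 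Your treatment of the cohomological criterion via the long exact sequences of $0\to\cJ\cL\to\cL\to\cL|_{V(\cJ)}\to0$ is correct and is exactly the paper's argument, including the reason why (ii') is only sufficient for (ii).
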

\begin{proof}

Remark first that $\Sigma$ and $\Sigma'$ are finite over $K$,
hence affine, and the rings $\cA$ and $\cA'$ are Artinian,
and as such they can be written as a finite direct product of local rings.
Thus any invertible module over $\cA$ or $\cA'$,
or equivalently any invertible sheaf over $\Sigma$ or $\Sigma'$, is free.
In particular, we can choose trivializations
\beq
\label{basic_triv}
\Gamma(\Sigma,\cL_1)\simeq\Gamma(\Sigma,\cL_2)\simeq\cA\qquad\Gamma(\Sigma',\cL_1)\simeq\Gamma(\Sigma',\cL_2)\simeq\cA'
\eeq
and from these, deduce, for any integers $i_1,i_2$, trivializations
\beq
\label{comp_triv}
\Gamma(\Sigma,\cL_1^{\tens i_1}\tens\cL_2^{\tens i_2})=\Gamma(\Sigma,\cL_1)^{\tens i_1}\tens\Gamma(\Sigma,\cL_2)^{\tens i_2}\simeq\cA
\eeq
\beq
\label{comp_triv'}
\Gamma(\Sigma',\cL_1^{\tens i_1}\tens\cL_2^{\tens i_2})=\Gamma(\Sigma',\cL_1)^{\tens i_1}\tens\Gamma(\Sigma',\cL_2)^{\tens i_2}\simeq\cA'.
\eeq
Consider now the bigraded algebra
\beq
\cB=\bigoplus_{i_1,i_2\geq0}\Gamma(X,\cL_1^{\tens i_1}\tens\cL_2^{\tens i_2}).
\eeq
It comes equipped with two morphisms of bigraded algebras
\beq
\cB\;\longto\bigoplus_{i_1,i_2\geq0}\Gamma(\Sigma,\cL_1^{\tens i_1}\tens\cL_2^{\tens i_2}) 
\qquad
\cB\;\longto\bigoplus_{i_1,i_2\geq0}\Gamma(\Sigma',\cL_1^{\tens i_1}\tens\cL_2^{\tens i_2}) 
\eeq
defined by the natural restriction maps, and composing with \eqref{comp_triv}
and \eqref{comp_triv'}, and then taking the sum, we get
\beq
f:\cB\longto\cA\qquad\qquad f':\cB\longto\cA'.
\eeq
Since \eqref{comp_triv} and \eqref{comp_triv'}
were defined in a compatible way from \eqref{basic_triv} as $i_1,i_2$ vary,
we see that $f$ and $f'$ are not merely morphisms of vector spaces,
they are in fact morphisms of algebras. Now we take
\beq
E_1=\cB_{1,0}=\Gamma(X,\cL_1)\quad\quad\quad E_2=\cB_{0,1}=\Gamma(X,\cL_2)
\eeq
so
\beq
E_1E_2\subset\cB_{1,1}=\Gamma(X,\cL_1\tens\cL_2)
\eeq
and conditions \emph{(i)} and \emph{(ii)} in our hypotheses imply
conditions \emph{(i)} and \emph{(ii)} in the definition of interpolation systems.
We can now conclude thanks to Proposition~\ref{prop_interp_general}.

To show that \emph{(i)} and \emph{(i')} are equivalent, and that \emph{(ii')} implies \emph{(ii)},
use the long exact sequence in cohomology associated with the short
exact sequence
\beq
0\longto\cJ\cL\longto\cL\longto\cL|_{V(\cJ)}\longto0
\eeq
with $\cJ=\cI$ or $\cI'$, and $\cL=\cL_1$, $\cL_2$, or $\cL_1\tens\cL_2$.
\end{proof}

Remark that conditions \emph{(i)} and \emph{(ii)},
or \emph{(i')} and \emph{(ii')}, in Proposition~\ref{schema_arbitraire},
are very similar to conditions used to estimate the parameters
(dimension, distance) of AG codes.
Thus, borrowing techniques from this field,
one could hope to get good interpolation systems from classes
of varieties on which one knows how to construct good codes,
for example, algebraic surfaces, or toric varieties.

However up to now, the geometric objects that are best understood
from this point of view, especially regarding asymptotic properties,
are algebraic curves.
Thus interpolation systems constructed from algebraic curves will
be studied in the next section.

But before doing that, we give an example of use of the general
Proposition~\ref{schema_arbitraire}.

\begin{example}
It is well known that $\F_8$ admits a symmetric multiplication
algorithm of length $6$ over $\F_2$. This is best shown by giving
an explicit ad hoc description of this algorithm. It turns out
that this construction admits a nice interpretation in terms
of interpolation on the projective plane $\PP^2$ over $\F_2$.

So let $X=\PP^2$, and $\cL_1=\cL_2=\cO(1)$ the universal line bundle on it.
Let $x,y,z$ be the standard basis of $\Gamma(\PP^2,\cO(1))$, that is,
$x,y,z$ are the usual projective coordinate functions on $\PP^2$.

Write $\F_8=\F_2[\alpha]$ with $\alpha^3=\alpha+1$, and let $\Sigma'$
be (the schematic image of) the point with homogeneous coordinates
$(1:\alpha:\alpha^2)$. Hence evaluation at $\Sigma'$ maps the function
 $\lambda x+\mu y +\nu z\in\Gamma(\PP^2,\cO(1))$ to the element
$\lambda+\mu\alpha+\nu\alpha^2\in\F_8$, so the surjectivity condition \emph{(ii)}
in Proposition~\ref{schema_arbitraire} is satisfied (with, in fact, bijectivity).

For $\Sigma$ we choose the union of the six points
$(1:0:0)$ $(0:1:0)$ $(0:0:1)$ $(1:1:0)$ $(1:0:1)$ $(0:1:1)$,
and remark that evaluation of the basis functions
$x^2,y^2,z^2,xy,xz,yz$ of $\Gamma(\PP^2,\cO(2))$ at these six points
gives a triangular unipotent matrix, so the injectivity condition \emph{(i)}
is also satisfied (with, in fact, bijectivity).

This is enough to conclude the existence of the algorithm, but in fact,
since all proofs are constructive, we can describe it explicitly.
Write down the four evaluation maps
\begin{equation*}
\begin{array}{cccc}
f:&\Gamma(\PP^2,\cO(1))=<x,y,z>&\longto&\Gamma(\Sigma,\cO(1))\simeq(\F_2)^6\\
f':&\Gamma(\PP^2,\cO(1))=<x,y,z>&\longto&\Gamma(\Sigma',\cO(1))\simeq\F_8\\
F:&\Gamma(\PP^2,\cO(2))=<x^2,y^2,z^2,xy,xz,yz>&\longto&\Gamma(\Sigma,\cO(2))\simeq(\F_2)^6\\
F':&\Gamma(\PP^2,\cO(2))=<x^2,y^2,z^2,xy,xz,yz>&\longto&\Gamma(\Sigma',\cO(2))\simeq\F_8
\end{array}
\end{equation*}
where we have just seen that $f'$ and $F$ are bijective.
Now the proof of Proposition~\ref{prop_interp_general}
shows that multiplication in $\F_8$ decomposes as
\beq
\begin{CD}
\F_8\times\F_8 @>m_{\F_8}>> \F_8 \\
@V{\phi\times\phi}VV @AAwA\\
(\F_2)^6\times(\F_2)^6 @>m_{(\F_2)^6}>> (\F_2)^6
\end{CD}
\eeq
where $m_{(\F_2)^6}$ is coordinatewise multiplication, and $\phi=f\circ(f')^{-1}$
and $w=F'\circ F^{-1}$ are given in matrix form by
\beq
\phi=\left(\begin{array}{ccc}
1&0&0\\
0&1&0\\
0&0&1\\
1&1&0\\
1&0&1\\
0&1&1
\end{array}\right)
\qquad\quad
w=\left(\begin{array}{cccccc}
1&1&1&0&0&1\\
1&0&0&1&0&1\\
1&1&0&0&1&0
\end{array}\right)
\eeq
relative to the basis $1,\alpha,\alpha^2$ of $\F_8$ and the canonical
basis of $(\F_2)^6$, with column vector convention.

Of course there are other ways to interpret this construction,
for example, as interpolation on the affine space $\mathbb{A}^3$.
However remark that this would not have been possible working
with \emph{curves} only (or at least, not in a natural way),
because curves over $\F_2$ of sufficiently small genus
do not admit enough points for the interpolation to be possible. 
\end{example}

Another situation in which Proposition~\ref{schema_arbitraire}
could be useful is if one is interested in the bilinear complexity
of a local algebra $\cA'$ that cannot
be written as a quotient of a polynomial algebra in only one variable.
Indeed such an algebra cannot be ``embedded'' in a curve
(see the discussion on monogenous local algebras below), hence
requires higher-dimensional objects for interpolation.

\section{The extended Chudnovsky-Chudnovsky algorithm}

\label{section3}

From now on, $K$ will be a finite field, say $K=\F_q$.
We will only consider algebras that are associative,
commutative, and with unity.

In fact we will be particularly interested in the following family
of $\F_q$-algebras, and their bilinear complexities:

\begin{definition}
For any integers $m,l\geq1$ we consider the $\F_q$-algebra
%of polynomials, or here equivalently of power series, in one indeterminate,
%with coefficients in $\F_{q^m}$, truncated at order~$l$:
of polynomials in one indeterminate
with coefficients in $\F_{q^m}$, truncated at order~$l$:
\beq
%\cA_q(m,l)=\F_{q^m}[t]/(t^l)=\F_{q^m}[[t]]/(t^l)
\cA_q(m,l)=\F_{q^m}[t]/(t^l)
\eeq
of dimension
\beq
\dim_{\F_q}\cA_q(m,l)=ml,
\eeq
and we denote by
\beq
\mu_q(m,l)=\mu(\cA_q(m,l)/\F_q)
\eeq
its bilinear complexity over $\F_q$.
\end{definition}

Of special significance are the following two cases:
when $l=1$,
\beq
\mu_q(m,1)=\mu_q(m)
\eeq
is the bilinear complexity of multiplication in $\F_{q^m}$ over $\F_q$;
and when $m=1$,
\beq
\mu_q(1,l)=\widehat{M}_q(l)
\eeq
is the quantity used in the estimates of~\cite{CO}.

\begin{lemma}
\label{mu(m,l)<mu(m)M(l)}
With the notations above,
\beq
%\mu_q(m,l)\leq\mu_q(m)\widehat{M}_q(l).
\mu_q(m,l)\leq\mu_q(m)\widehat{M}_{q^m}(l).
\eeq
\end{lemma}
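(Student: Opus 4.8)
The plan is to recognize this inequality as an immediate instance of the concatenation bound \eqref{funct2} from Lemma~\ref{functorial_inequalities}. The point is simply a change of base field. The algebra $\cA_q(m,l)=\F_{q^m}[t]/(t^l)$ is manifestly an algebra over $\F_{q^m}$; but since $\F_{q^m}$ is itself an extension of $\F_q$, we may equally regard $\cA_q(m,l)$ as an $\F_q$-algebra, and it is precisely this $\F_q$-algebra structure whose bilinear complexity is, by definition, $\mu_q(m,l)$.

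So I would apply part b) of Lemma~\ref{functorial_inequalities} with $\cA=\cA_q(m,l)$, $L=\F_{q^m}$, $K=\F_q$, which gives
\[
\mu(\cA_q(m,l)/\F_q)\leq\mu(\cA_q(m,l)/\F_{q^m})\cdot\mu(\F_{q^m}/\F_q).
\]
It then remains only to identify the two factors on the right. On the one hand $\mu(\F_{q^m}/\F_q)=\mu_q(m,1)=\mu_q(m)$ by definition. On the other hand, viewing $\cA_q(m,l)$ over its natural base field $\F_{q^m}$, we have the literal equality of $\F_{q^m}$-algebras $\cA_q(m,l)=\F_{q^m}[t]/(t^l)=\F_{(q^m)^1}[t]/(t^l)=\cA_{q^m}(1,l)$, hence $\mu(\cA_q(m,l)/\F_{q^m})=\mu_{q^m}(1,l)=\widehat{M}_{q^m}(l)$. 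Substituting yields the claimed bound $\mu_q(m,l)\leq\mu_q(m)\widehat{M}_{q^m}(l)$.

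There is essentially no obstacle: the statement is a formal consequence of the base-change/concatenation lemma already proved, and the only thing to verify is the harmless notational identification $\cA_q(m,l)=\cA_{q^m}(1,l)$ as $\F_{q^m}$-algebras. If one preferred a self-contained argument, one could instead unwind the proof of part b) of Lemma~\ref{functorial_inequalities} in this special case, composing (concatenating) a length-$\widehat{M}_{q^m}(l)$ multiplication algorithm for $\cA_q(m,l)/\F_{q^m}$ — whose scalar multiplications take place in $\F_{q^m}$ — with a length-$\mu_q(m)$ multiplication algorithm for $\F_{q^m}/\F_q$, obtaining a multiplication algorithm for $\cA_q(m,l)/\F_q$ of length $\mu_q(m)\widehat{M}_{q^m}(l)$.
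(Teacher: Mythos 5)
Your proof is correct and is exactly the paper's argument: the paper proves this lemma by citing Lemma~\ref{functorial_inequalities}.b) directly, and you have simply spelled out the identification $\cA_q(m,l)=\cA_{q^m}(1,l)$ as an $\F_{q^m}$-algebra that makes the application work. Nothing is missing.
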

\begin{proof}
%Direct consequence of Lemma~\ref{functorial_inequalities}~(iv)
%(or also Lemma~\ref{functorial_inequalities} (i)-(ii)).
Direct consequence of Lemma~\ref{functorial_inequalities}.b).
\end{proof}

\begin{remark}
As will be shown later, there are examples where this inequality is strict.
\end{remark}

We now introduce another class of $\F_q$-algebras,
before studying how they relate to the $\cA_q(m,l)$:
\begin{itemize}
\item We say that a finite-dimensional $\F_q$-algebra $A$ is \emph{monogenous}
if it can be written as a quotient of the ring of polynomials in one
indeterminate over~$\F_q$, say: $A\simeq\F_q[t]/(P(t))$.
These are precisely the algebras whose bilinear complexity was first
studied in \cite{FZ,Winograd}.

Moreover we say that $A$ is \emph{local} if it has only one maximal ideal.
Thus, by the Chinese remainder theorem, a monogenous local $\F_q$-algebra
is necessarily of the form
\beq
A\simeq\F_q[t]/(Q(t)^l)
\eeq
for some \emph{irreducible} polynomial $Q$ over $\F_q$
and some integer $l\geq1$.
\item More generally, let $X$ be an algebraic curve over $\F_q$
(the situation discussed just above corresponds to the case $X=\PP^1$).
By a \emph{thickened point} in $X$ we mean any closed subscheme of $X$
supported on a closed point (of arbitrary degree).
For example, if $Q$ is a closed point in $X$, we denote by $\cI_Q$
the sheaf of ideals defining it, and for any integer $l\geq1$ we
let $Q^{[l]}$ be the closed subscheme of $X$ defined by the sheaf
of ideals $(\cI_Q)^l$. Then $Q^{[l]}$ is a thickened point supported on $Q$.
Conversely, any thickened point in $X$ is of this form.
Indeed, by convention a curve $X$ is always supposed smooth,
hence the local ring $\cO_{X,Q}$ of $X$ at $Q$ is principal,
and every ideal in this ring is of the form $(t_Q^{\;l})$,
where $t_Q$ is a local parameter at $Q$.

We remark that such a thickened point is necessarily affine, and we let
\beq
\cA_{Q^{[l]}}=\Gamma(Q^{[l]},\cO_{Q^{[l]}})=\Gamma(X,\cO_X/(\cI_Q)^l)=\cO_{X,Q}/(t_Q^{\;l})
\eeq
be its ring of regular functions.
\end{itemize}

\begin{lemma}
\label{Cohen}
Any monogenous local $\F_q$-algebra,
and more generally the ring of functions of any thickened point on a
curve over $\F_q$, is isomorphic to some $\cA_q(m,l)$. More precisely:
\begin{itemize}
\item Let $Q$ be an irreducible polynomial over $\F_q$,
of degree $\deg Q=m$, and let $l\geq1$ be an integer.
Then, as $\F_q$-algebras,
\beq
\F_q[t]/(Q(t)^l)\simeq\cA_q(m,l).
\eeq
\item More generally,
let $X$ be a curve over $\F_q$ and $Q$ a closed point in $X$,
of degree $\deg Q=m$, and let $l\geq1$ be an integer.
Then, as $\F_q$-algebras,
\beq
\cA_{Q^{[l]}}\simeq\cA_q(m,l).
\eeq
\end{itemize}
As a consequence, all these algebras have the same bilinear
complexity $\mu_q(m,l)$.
\end{lemma}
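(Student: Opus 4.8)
The plan is to reduce everything to the first bullet point, and then prove that one by a structure-theoretic argument. First I observe that the second bullet follows from the first: given a closed point $Q$ on a smooth curve $X$ with $\deg Q = m$ and a local parameter $t_Q$, the local ring $\cO_{X,Q}$ is a discrete valuation ring with residue field $\F_{q^m}$, and $\cA_{Q^{[l]}} = \cO_{X,Q}/(t_Q^{\,l})$. On the other hand, if $Q(t)$ is any irreducible polynomial over $\F_q$ of degree $m$, then $\F_q[t]/(Q(t))$ is a field isomorphic to $\F_{q^m}$, the localization of $\F_q[t]$ at $(Q(t))$ is again a DVR with residue field $\F_{q^m}$, and $\F_q[t]/(Q(t)^l)$ coincides with that localization modulo the $l$-th power of its maximal ideal. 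So both bullets assert the same thing: a complete local $\F_q$-algebra which is a quotient of a DVR (with finite residue field $\F_{q^m}$) by the $l$-th power of the maximal ideal is isomorphic to $\F_{q^m}[t]/(t^l)$. It therefore suffices to prove the first bullet, or equivalently to prove this abstract statement.

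Next I would prove $\F_q[t]/(Q(t)^l) \simeq \cA_q(m,l)$ directly. Set $A = \F_q[t]/(Q(t)^l)$; this is a local Artinian ring with maximal ideal $\mathfrak{m} = (Q(t))/(Q(t)^l)$, residue field $A/\mathfrak{m} \simeq \F_{q^m}$, and $\mathfrak{m}^l = 0$ but $\mathfrak{m}^{l-1} \neq 0$. The key step is to construct a copy of $\F_{q^m}$ inside $A$ lifting the residue field — this is exactly Hensel's lemma / the existence of a coefficient field, and in the finite-field case it is elementary: one may take the Teichmüller-type lift, sending the multiplicative generator of $\F_{q^m}^\times$ to a root in $A$ of a suitable lift of its minimal polynomial, using that $A$ is Henselian and $\car \F_q$ is coprime to $q^{m}-1$. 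Once $\F_{q^m} \hookrightarrow A$ is fixed, $A$ becomes an $\F_{q^m}$-algebra; it is local of $\F_{q^m}$-dimension $l$, with principal maximal ideal $\mathfrak{m}$ (generated by the image $\bar{Q}$ of $Q(t)$), and $\mathfrak{m}^l = 0$, $\mathfrak{m}^{l-1}\neq 0$. Sending $t \mapsto \bar{Q}$ gives an $\F_{q^m}$-algebra surjection $\F_{q^m}[t] \to A$ whose kernel contains $(t^l)$ by a dimension count (both sides have dimension $l$), hence $A \simeq \F_{q^m}[t]/(t^l) = \cA_q(m,l)$. The dimension formula $\dim_{\F_q}\cA_q(m,l) = ml$ is immediate from $\dim_{\F_{q^m}} = l$ and $[\F_{q^m}:\F_q] = m$.

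Finally, the last assertion — that all these algebras have the same bilinear complexity $\mu_q(m,l)$ — is a formal consequence: bilinear complexity depends only on the $K$-algebra isomorphism class of the algebra (the multiplication tensor is transported by any algebra isomorphism), so the isomorphisms just established force $\mu(\F_q[t]/(Q(t)^l)/\F_q) = \mu(\cA_{Q^{[l]}}/\F_q) = \mu(\cA_q(m,l)/\F_q) = \mu_q(m,l)$. The main obstacle is the coefficient-field lifting in the middle step; in full generality this is Cohen's structure theorem, but over a finite (hence perfect) residue field the argument is short, and since the paper's section explicitly allows invoking Cohen-type results I would simply cite it and give the one-line Teichmüller argument for self-containedness.
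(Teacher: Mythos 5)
Your proof is correct and uses essentially the same mechanism as the paper's: invoke Hensel's lemma (or equivalently Cohen's theorem over a perfect residue field) to lift the residue field $\F_{q^m}$ to a coefficient subfield, then send $t$ to a generator of the (nilpotent principal) maximal ideal and conclude by a dimension count. The only presentational difference is the direction of reduction: you prove the first bullet $\F_q[t]/(Q(t)^l)\simeq\cA_q(m,l)$ and note the second reduces to it, while the paper constructs the isomorphism $(\cO_{X,Q}/t_Q)[t]/(t^l)\overset{\sim}{\to}\cO_{X,Q}/(t_Q^{\,l})$ directly for a general curve, which subsumes the polynomial case when $X=\PP^1$ — mathematically these are the same argument.
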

\begin{proof}
This is a special case of Cohen's structure theorem for complete local
rings in equal characteristic
(see e.g. \cite{BourbakiAC9} AC~IX.30, \S3, Th.~2).
But for ease of the reader we recall how this works concretely
in our specific situation.

Write $\cA_{Q^{[l]}}=\cO_{X,Q}/(t_Q^{\;l})$, where $\cO_{X,Q}$ is the local ring
of $X$ at $Q$, and $t_Q$ a local parameter.
We will construct an isomorphism
\beq
\label{isom_O/t[t]/t^l}
(\cO_{X,Q}/t_Q)[t]/(t^l)\overset{\sim}{\longto}\cO_{X,Q}/(t_Q^{\;l})
\eeq
hence proving the lemma, since $\cO_{X,Q}/(t_Q)\simeq\F_{q^m}$.

To do so, first choose any $\alpha$ generating $\cO_{X,Q}/(t_Q)$ over $\F_q$,
with minimal polynomial $F_\alpha$, and invoke Hensel's lemma
to lift $\alpha$ to $\widetilde{\alpha}$ root of $F_\alpha$
in $\cO_{X,Q}/(t_Q^{\;l})$. Sending $\alpha$ to $\widetilde{\alpha}$
then defines
a morphism of $\F_q$-algebras
\beq
\label{Hensel_O/t}
\cO_{X,Q}/(t_Q)\longto\cO_{X,Q}/(t_Q^{\;l})
\eeq
section of the natural
projection $\cO_{X,Q}/(t_Q^{\;l})\longto\cO_{X,Q}/(t_Q)$,
and to conclude, we extend \eqref{Hensel_O/t} to \eqref{isom_O/t[t]/t^l}
by sending $t$ to $t_Q$.
\end{proof}

%If $X$ is an algebraic curve over $\F_q$,
%by a divisor on $X$ we will always mean a divisor that is
%\emph{$\F_q$-rational}.
%If $D$ is such a divisor on $X$, we denote by
If $X$ is an algebraic curve over $\F_q$,
and $D$ a divisor on $X$, we denote by
\beq
L(D)=\Gamma(X,\cO_X(D))
\eeq
its Riemann-Roch space,
and by
\beq
l(D)=\dim L(D)
\eeq
the dimension (over $\F_q$) of the latter.
We also choose a canonical divisor $K_X$ on $X$
and we let
\beq
i(D)=l(K_X-D)
\eeq
be the \emph{index of specialty} of $D$.
Recall that the Riemann-Roch theorem can then be stated as
\beq
l(D)-i(D)=\deg D+1-g
\eeq
where $g$ is the genus of $X$.

\begin{theorem}
\label{ChCh+gen}
Let $X$ be a curve of genus $g$ over $\F_q$, and let $m,l\geq1$ be two integers.
Suppose that $X$ admits a closed point $Q$ of degree $\deg Q=m$.
Let $G$ be an effective divisor on $X$, and write
\beq
G=u_1P_1+\cdots+u_nP_n
\eeq
where the $P_i$ are pairwise distinct closed points,
of degree $\deg P_i=d_i$.
Suppose there exist two divisors $D_1,D_2$ on $X$ such that:
\begin{enumerate}[(i)]
\item The natural evaluation map
\beq
L(D_1+D_2)\longto \prod_{i=1}^n\cO_X(D_1+D_2)|_{P_i^{[u_i]}}
\eeq
is injective.
\item The natural evaluation maps
\beq
L(D_1)\longto \cO_X(D_1)|_{Q^{[l]}}\qquad L(D_2)\longto \cO_X(D_2)|_{Q^{[l]}}
\eeq
are surjective.
\end{enumerate}
Then
\beq
\label{mu(m,l)<mu(d,u)}
\mu_q(m,l)\leq\sum_{i=1}^n\mu_q(d_i,u_i).
\eeq
In fact we also have
$\mu_q(m,l)\leq\mu(\prod_{i=1}^n\cA_q(d_i,u_i)/\F_q)$.
Moreover, if $D_1=D_2$, all these inequalities also hold for the symmetric
bilinear complexity $\mu^{\textrm{sym}}$.

Sufficient numerical
criteria for the hypotheses above to hold can be given as follows.
A sufficient condition for the existence of $Q$ of degree $m$ on $X$ is
that $2g+1\leq q^{(m-1)/2}(q^{1/2}-1)$, while sufficient conditions
for (i) and (ii) are:
\begin{enumerate}[(i')]
\item The divisor $D_1+D_2-G$ is zero-dimensional:
\beq
l(D_1+D_2-G)=0.
\eeq
\item The divisors $D_1-lQ$ and $D_2-lQ$ are non-special:
\beq
i(D_1-lQ)=i(D_2-lQ)=0.
\eeq
\end{enumerate}
More precisely,
(i) and (i') are equivalent, while (ii') only implies (ii)
a priori.
\end{theorem}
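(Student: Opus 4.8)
The plan is to deduce Theorem~\ref{ChCh+gen} from Proposition~\ref{schema_arbitraire}, by exhibiting the correct closed subschemes and invertible sheaves. First I would set $\Sigma = Q^{[l]}$, the thickened point of multiplicity $l$ supported on $Q$, so that $\cA = \Gamma(\Sigma,\cO_\Sigma) = \cA_{Q^{[l]}} \simeq \cA_q(m,l)$ by Lemma~\ref{Cohen}; wait, actually the roles are the other way around. Since we want an \emph{upper} bound on $\mu_q(m,l)$ in terms of the $\mu_q(d_i,u_i)$, the algebra being interpolated is $\cA_q(m,l)$ and the interpolating algebra is $\prod_i \cA_q(d_i,u_i)$. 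So in the notation of Proposition~\ref{schema_arbitraire} I would take $\Sigma' = Q^{[l]}$ (hence $\cA' = \cA_{Q^{[l]}} \simeq \cA_q(m,l)$) and $\Sigma = \Supp G$ thickened according to the multiplicities, i.e. $\Sigma = \coprod_{i=1}^n P_i^{[u_i]}$ (hence $\cA = \prod_i \cA_{P_i^{[u_i]}} \simeq \prod_i \cA_q(d_i,u_i)$ by Lemma~\ref{Cohen} again, using that $\Sigma$ is a disjoint union since the $P_i$ are pairwise distinct). Then I would set $\cL_1 = \cO_X(D_1)$ and $\cL_2 = \cO_X(D_2)$, so that $\cL_1\tens\cL_2 = \cO_X(D_1+D_2)$.

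With these choices, condition~(i) of the theorem is precisely condition~(i) of Proposition~\ref{schema_arbitraire}: the restriction $\Gamma(X,\cO_X(D_1+D_2)) \to \Gamma(\Sigma,\cO_X(D_1+D_2))$ being injective is the same as the stated evaluation map $L(D_1+D_2) \to \prod_i \cO_X(D_1+D_2)|_{P_i^{[u_i]}}$ being injective, since $\Gamma(\Sigma,\cL_1\tens\cL_2) = \prod_i \Gamma(P_i^{[u_i]}, \cL_1\tens\cL_2)$. Likewise condition~(ii) of the theorem matches condition~(ii) of the Proposition with $\Sigma' = Q^{[l]}$. Then Proposition~\ref{schema_arbitraire} gives $\mu(\cA'/\F_q) \le \mu(\cA/\F_q)$, i.e.\ $\mu_q(m,l) \le \mu(\prod_i \cA_q(d_i,u_i)/\F_q)$, which is the refined inequality; the cruder bound~\eqref{mu(m,l)<mu(d,u)} then follows from Lemma~\ref{functorial_inequalities}.c) applied inductively to the product. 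The symmetric statement when $D_1 = D_2$ is exactly the $\cL_1 = \cL_2$ case of Proposition~\ref{schema_arbitraire} combined with the last assertion of Lemma~\ref{functorial_inequalities}.

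It remains to justify the numerical criteria. For (i') $\Leftrightarrow$ (i) and (ii') $\Rightarrow$ (ii): these are instances of the cohomological criteria (i') and (ii') of Proposition~\ref{schema_arbitraire}, once one identifies $h^0(X,\cI'(\cL_1\tens\cL_2))$ with $l(D_1+D_2-G)$ — here $\cI' = \cI_{Q^{[l]}}$... no, $\cI'$ defines $\Sigma$, which is $\coprod P_i^{[u_i]}$, so $\cI' = \prod(\cI_{P_i})^{u_i}$ cuts out $G$ and $\cI'(\cO_X(D_1+D_2)) = \cO_X(D_1+D_2-G)$, whence $h^0 = l(D_1+D_2-G)$. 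Similarly $h^1(X,\cI'' \cL_j)$ with $\cI'' = (\cI_Q)^l$ defining $\Sigma' = Q^{[l]}$ equals $h^1(X,\cO_X(D_j - lQ)) = i(D_j - lQ)$ by Serre duality, giving (ii') $\Rightarrow$ (ii). Finally, for the existence of a closed point $Q$ of degree exactly $m$, I would invoke the standard effective Weil-bound estimate: the number $B_m$ of closed points of degree $m$ satisfies $B_m > 0$ whenever $2g+1 \le q^{(m-1)/2}(q^{1/2}-1)$ — this follows from counting $\F_{q^m}$-points via $|X(\F_{q^m})| \ge q^m + 1 - 2g\sqrt{q^m}$ together with the bound $\sum_{d \mid m, d < m} d\,B_d \le \sum_{d \le m/2} q^d + \cdots$, a routine arithmetic manipulation I would not expand in full. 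The only mildly delicate point is getting this last inequality into the clean form $2g+1 \le q^{(m-1)/2}(q^{1/2}-1)$, but this is a well-known computation and is the step I would flag as the main (though purely technical) obstacle; everything else is a direct translation through Proposition~\ref{schema_arbitraire} and Lemma~\ref{Cohen}.
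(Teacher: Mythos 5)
Your proposal is correct and follows essentially the same route as the paper: apply Proposition~\ref{schema_arbitraire} with $\Sigma=P_1^{[u_1]}\cup\cdots\cup P_n^{[u_n]}$, $\Sigma'=Q^{[l]}$, $\cL_1=\cO_X(D_1)$, $\cL_2=\cO_X(D_2)$, identify the rings via Lemma~\ref{Cohen}, pass to the sum of complexities by Lemma~\ref{functorial_inequalities}.c), and read off the numerical criteria from the cohomological conditions of the Proposition. The only difference is cosmetic: for the existence of a point of degree $m$ the paper simply cites Stichtenoth, Cor.~V.2.10.(c), rather than redoing the Weil-bound computation you sketch.
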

\begin{proof}
Use Proposition~\ref{schema_arbitraire} with
$\Sigma=P_1^{[u_1]}\cup\cdots\cup P_n^{[u_n]}$,
$\Sigma'=Q^{[l]}$,
and $\cL_1=\cO_X(D_1)$ and $\cL_2=\cO_X(D_2)$.
Combined with Lemma~\ref{Cohen} this gives
\beq
\label{mu(m,l)<mu(P(A(d,u)))}
\mu_q(m,l)\leq\mu(\prod_{i=1}^n\cA_q(d_i,u_i)/\F_q)
\eeq
as claimed.
One can then apply
Lemma~\ref{functorial_inequalities}.c)
to get \eqref{mu(m,l)<mu(d,u)} (whether we lose
in passing from \eqref{mu(m,l)<mu(P(A(d,u)))} to \eqref{mu(m,l)<mu(d,u)}
depends on the direct sum conjecture \eqref{direct_sum_conj}).

As for the numerical sufficient condition stated here for the existence of $Q$,
it can be found in \cite{Stichtenoth}, Cor.~V.2.10.(c).
\end{proof}

\begin{remark}
For applications it might be useful to make things more explicit,
so we describe in more concrete terms how the various geometric data
in Theorem~\ref{ChCh+gen} lead to an interpolation system as in
Definition~\ref{def_interpol}. The key point is to describe the
evaluation maps, which can be done in relatively elementary terms
when $X$ is a curve. For example we describe the composite map
\beq
L(D_1)\longto\cO_X(D_1)|_{Q^{[l]}}\overset{\sim}{\longto}\cA_q(m,l).
\eeq
As a first step, we choose a local parameter $t_Q$ at $Q$.
Then $t_Q^{v_Q(D_1)}$ is a local generator for $\cO_X(D_1)$ at $Q$,
and we use this local generator to define a trivialization
$\cO_X(D_1)|_{Q^{[l]}}\simeq\cO_X|_{Q^{[l]}}=\cO_{X,Q}/(t_Q^{\;l})$
as asked in~\eqref{basic_triv}. Thus we get a map
\beq
\label{explicit_eval}
\begin{array}{ccc}
L(D_1) & \longto & \cO_{X,Q}/(t_Q^{\;l}) \\
f & \mapsto & t_Q^{-v_Q(D_1)}f \mod (t_Q^{\;l})
\end{array}
\eeq
and we compose this with the isomorphism
$\cO_{X,Q}/(t_Q^{\;l})\overset{\sim}{\longto}(\cO_{X,Q}/(t_Q))[t]/(t^l)\simeq\cA_q(m,l)$
given in Lemma~\ref{Cohen} (and explicited in its proof) to conclude.

The other maps $L(D_2)\longto\cA_q(m,l)$
and $L(D_1+D_2)\longto\prod_{i=1}^n\cA_q(d_i,u_i)$ are described
in the same way.

A nice property of these evaluation maps,
as is best seen from \eqref{explicit_eval},
is that they do not need the points at which we evaluate
to be disjoint from the support of the divisor
(although this is not a crucial point of the construction,
since this situation can also be avoided thanks to the
strong approximation theorem).
\end{remark}

\begin{remark}
This Theorem~\ref{ChCh+gen} encompasses essentially all presently known
variants of the Chudnovsky-Chudnovsky interpolation method as special cases.
For example, restricting to $l=1$ and $D_1=D_2$,
and using Lemma~\ref{mu(m,l)<mu(m)M(l)}, gives Th.~3.1 of \cite{CO}
(if one further restricts to all $d_i=u_i=1$, this gives the
original version of Chudnovsky-Chudnovsky \cite{ChCh}).
Thus one can say that Theorem~\ref{ChCh+gen} improves the method
of \cite{CO} in at least two points:
\begin{itemize}
\item Allowing asymmetry ($D_1\neq D_2$) makes conditions \emph{(i)} and \emph{(ii)},
or \emph{(i')} and \emph{(ii')}, easier to satisfy than their counterparts
in \cite{CO}; in turn this allows more flexibility in the choice of
the curve $X$ and the divisor $G$.
\item The use of $\mu_q(d,u)$
in the right-hand side of~\eqref{mu(m,l)<mu(d,u)},
instead of $\mu_q(d)\widehat{M}_{q^{d}}(u)$ as in \cite{CO},
leads to stronger estimates. Of course, for this to be useful,
one needs upper bounds on these $\mu_q(d,u)$ that are better
than the one given in Lemma~\ref{mu(m,l)<mu(m)M(l)}.
But a nice feature of~\eqref{mu(m,l)<mu(d,u)} is that
this same quantity $\mu_q(m,l)$ also appears in the left-hand side,
so we can try to get these
upper bounds from Theorem~\ref{ChCh+gen} itself,
in a sort of recursive procedure.
\end{itemize}
These points will be illustrated in the following three sections.
\end{remark}

\section{Genus $0$ or $1$}

\label{section4}

The main motivation for this section is the following:

\begin{question}
What is the actual value of $\mu_q(m,l)$ for small $q,m,l$?
Or at least, find upper bounds that are better than the one given
in Lemma~\ref{mu(m,l)<mu(m)M(l)}.
\end{question}

Answering this question can lead to improved bounds also for high values
of the parameters. For example, suppose that in Theorem~\ref{ChCh+gen}
we take $l=1$ and
the divisor $G$ consists of:
\begin{itemize}
\item $N_1$ points of degree $1$, of which $l_1$ with multiplicity $2$ and
the remaining $N_1-l_1$ with multiplicity $1$
\item $N_2$ points of degree $2$, of which $l_2$ with multiplicity $2$ and
the remaining $N_2-l_2$ with multiplicity $1$
\item $N_4$ points of degree $4$, of which $l_4$ with multiplicity $2$ and
the remaining $N_4-l_4$ with multiplicity $1$.
\end{itemize}
Then \eqref{mu(m,l)<mu(d,u)} gives
\beq
\mu_q(m)\leq N_1+2l_1+3N_2+(\mu_q(2,2)-3)l_2+\mu_q(4)N_4+(\mu_q(4,2)-\mu_q(4))l_4.
\eeq
Provided $\mu_q(2,2)<9$ or $\mu_q(4,2)<3\mu_q(4)$,
this improves the bound in Prop.~3.1 of \cite{BP}.
Such bounds on $\mu_q(2,2)$ or $\mu_q(4,2)$ will be given
in Examples~\ref{ex22} and~\ref{ex42} and Lemma~\ref{lem42} below.

\begin{proposition}
\label{g=0}
Let $m,l\geq1$ be two integers
with
\beq
ml\leq\frac{q}{2}+1.
\eeq
Then
\beq
{}_{\phantom{\substack{\Sigma\\ \Sigma}}}\mu_q(m,l)\leq\mu^{\textrm{sym}}_q(m,l)\leq2ml-1._{\phantom{\substack{\Sigma\\ \Sigma}}}
\eeq
More generally let $G$ be an effective divisor on $\PP^1$, and write
\beq
G=u_1P_1+\cdots+u_nP_n
\eeq
where the $P_i$ are pairwise distinct closed points,
of degree $\deg P_i=d_i$.
Suppose
\beq
\deg G=\sum_{i=1}^nd_iu_i\geq 2ml-1.
\eeq
Then
\beq
\mu_q(m,l)\leq\sum_{i=1}^n\mu_q(d_i,u_i)
\eeq
and likewise $\mu^{\textrm{sym}}_q(m,l)\leq\sum_{i=1}^n\mu^{\textrm{sym}}_q(d_i,u_i)$.
\end{proposition}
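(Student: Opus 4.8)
The plan is to apply Theorem~\ref{ChCh+gen} with $X=\PP^1$ over $\F_q$, for which $g=0$, and to observe that in genus $0$ the numerical conditions (i$'$) and (ii$'$) become mere inequalities between degrees. First I would record two facts about $\PP^1$: it admits a closed point $Q$ of each degree $m\geq1$ (equivalently, there is an irreducible polynomial of each degree over $\F_q$), so the hypothesis on the existence of $Q$ in Theorem~\ref{ChCh+gen} is automatic; and, since a canonical divisor on $\PP^1$ has degree $-2$, the Riemann--Roch relation $l(D)-i(D)=\deg D+1$ shows that a divisor $D$ on $\PP^1$ satisfies $l(D)=0$ if and only if $\deg D\leq-1$, and satisfies $i(D)=0$ if and only if $\deg D\geq-1$.

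Given an effective divisor $G=u_1P_1+\cdots+u_nP_n$ on $\PP^1$ with $\deg G=\sum_id_iu_i\geq 2ml-1$, I would take $D_1=D_2=D$ to be any divisor of degree $ml-1$ (for instance $D=(ml-1)P$ with $P$ an $\F_q$-rational point; here $ml-1\geq0$). Then $\deg(D_1+D_2-G)=2(ml-1)-\deg G\leq-1$, so condition (i$'$) holds, and $\deg(D_i-lQ)=(ml-1)-ml=-1$, so condition (ii$'$) holds for both $i=1,2$. Theorem~\ref{ChCh+gen} therefore gives $\mu_q(m,l)\leq\mu(\prod_{i=1}^n\cA_q(d_i,u_i)/\F_q)$; and since $D_1=D_2$, the same argument with the symmetric clause of that theorem gives $\mu^{\textrm{sym}}_q(m,l)\leq\mu^{\textrm{sym}}(\prod_{i=1}^n\cA_q(d_i,u_i)/\F_q)$. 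Applying Lemma~\ref{functorial_inequalities}.c) and its symmetric counterpart then yields the two displayed inequalities $\mu_q(m,l)\leq\sum_i\mu_q(d_i,u_i)$ and $\mu^{\textrm{sym}}_q(m,l)\leq\sum_i\mu^{\textrm{sym}}_q(d_i,u_i)$, which is the general statement.

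It remains to deduce the first assertion. When $ml\leq q/2+1$ one has $2ml-1\leq q+1=\#\PP^1(\F_q)$, so one may choose $G=P_1+\cdots+P_{2ml-1}$ to be a sum of $2ml-1$ pairwise distinct $\F_q$-rational points; here every $d_i=u_i=1$. Since $\cA_q(1,1)=\F_q$ and trivially $\mu^{\textrm{sym}}_q(1,1)=1$, the general inequality just proved specializes to $\mu^{\textrm{sym}}_q(m,l)\leq 2ml-1$, and the remaining inequality $\mu_q(m,l)\leq\mu^{\textrm{sym}}_q(m,l)$ is \eqref{ineq_musym_3}.

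I do not expect a genuine obstacle: the argument is a direct specialization of Theorem~\ref{ChCh+gen} together with the trivial divisor theory on $\PP^1$. The one point that needs care is \emph{not} to invoke the point-count criterion $2g+1\leq q^{(m-1)/2}(q^{1/2}-1)$ of Theorem~\ref{ChCh+gen} for the existence of $Q$ — it fails for small $q$ — but instead to use directly that $\PP^1$ over any finite field carries closed points of every degree; one should also dispatch the edge case $ml=1$, where every inequality in sight is trivial.
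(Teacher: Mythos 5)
Your proof is correct and follows essentially the same route as the paper's: choose $D_1=D_2=D$ of degree $ml-1$ on $\PP^1$, note that divisors of degree $-1$ on $\PP^1$ are simultaneously zero-dimensional and non-special, and invoke Theorem~\ref{ChCh+gen} (the paper likewise uses directly that $\PP^1$ has closed points of every degree rather than the point-count criterion). The only difference is presentational: the paper deduces the first assertion from the second, whereas you prove the general case first and then specialize, which amounts to the same thing.
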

\begin{proof}
Remark that the first assertion is a particular case of the second,
because if $n=2ml-1\leq q+1$,
we can find $n$ distinct points
of degree $1$ on $\PP^1$ and let $G$ be their sum.
Recall also that $\PP^1$ admits points of any degree,
and that any divisor of degree $-1$ on $\PP^1$ is both
zero-dimensional and non-special.
So, to conclude, let $D$ be any divisor of degree $ml-1$ on $\PP^1$,
and apply Theorem~\ref{ChCh+gen} with $D_1=D_2=D$.
\end{proof}

Recall that an elliptic curve over $\F_q$
is a curve $X$ of genus $1$ with a chosen
point $P_\infty\in X(\F_q)$.
This set $X(\F_q)$ of $\F_q$-rational points of $X$,
or equivalently, of closed points of degree~$1$,
then admits a structure of abelian group
with identity element $P_\infty$.
Also, given such an elliptic curve, there is a map
\beq
\sigma:\Div(X)\longto X(\F_q)
\eeq
uniquely defined by the condition that each divisor $D$ of degree $d$
is linearly equivalent to the divisor $\sigma(D)+(d-1)P_\infty$.
This map $\sigma$ is a group morphism,
it passes to linear equivalence, and induces an isomorphism
of the degree $0$ class group $\Cl^0(X)$ with $X(\F_q)$.
We now generalize a result of Shokrollahi \cite{Shokro92}
and Chaumine~\cite{Chaumine}:

\begin{proposition}
\label{g=1}
Let $X$ be an elliptic curve over $\F_q$, with all notations as above.
Let $m,l\geq1$ be two integers.
Suppose that $X$ admits a closed point $Q$ of degree $\deg Q=m$.
Let $G$ be an effective divisor on $X$, and write
\beq
G=u_1P_1+\cdots+u_nP_n
\eeq
where the $P_i$ are pairwise distinct closed points,
of degree $\deg P_i=d_i$,
so $\deg G=\sum_{i=1}^nd_iu_i$.
Then
\beq
\mu_q(m,l)\leq\sum_{i=1}^n\mu_q(d_i,u_i)
\eeq
provided one of the following conditions is satisfied:
\begin{enumerate}[a)]
\item
$\deg G=2ml$ and $|X(\F_q)|\geq3$
\item
$\deg G=2ml$ and $|X(\F_q)|\geq2$, and either $\sigma(G)\neq P_\infty$
or $X(\F_q)$ is not entirely of $2$-torsion (or both)
\item
$\deg G\geq 2ml+1$
and $|X(\F_q)|\geq2$
\item
$\deg G\geq 2ml+3$.
\end{enumerate}
Moreover in cases b), c), or d), one also has
$\mu_q^{\textrm{sym}}(m,l)\leq\sum_{i=1}^n\mu_q^{\textrm{sym}}(d_i,u_i)$.
\end{proposition}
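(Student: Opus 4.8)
The plan is to deduce everything from Theorem~\ref{ChCh+gen}: in each of the cases a)--d) it suffices to produce divisors $D_1,D_2$ on $X$ satisfying its numerical criteria (i$'$) $l(D_1+D_2-G)=0$ and (ii$'$) $i(D_1-lQ)=i(D_2-lQ)=0$, and to take $D_1=D_2$ whenever the symmetric conclusion is also wanted (cases b), c), d)). So the first step is to rewrite these two conditions using $g(X)=1$. Since $K_X\sim0$, a divisor $E$ is non-special exactly when $\deg E\geq1$, or $\deg E=0$ and $E\not\sim0$; and $E$ is zero-dimensional exactly when $\deg E\leq-1$, or $\deg E=0$ and $E\not\sim0$. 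Moreover, through the group isomorphism $\sigma\colon\Cl^0(X)\overset{\sim}{\longto}X(\F_q)$ (extended to all divisors as in the statement), two divisors of the same degree are linearly equivalent iff they have the same image under $\sigma$; in particular a divisor of degree $lm$ is $\not\sim lQ$ iff its $\sigma$-image differs from $\sigma(lQ)$.

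Second, I would dispatch the cases by choosing the degrees of the $D_i$ appropriately. Cases c) and d) need no arithmetic in $X(\F_q)$: in c), put $D_1=D_2=D$ with $\deg D=lm$, so $\deg(2D-G)\leq-1$ makes (i$'$) automatic, while (ii$'$) holds as soon as $\sigma(D)\neq\sigma(lQ)$, which is possible because $|X(\F_q)|\geq2$; in d), put $D_1=D_2=D$ with $\deg D=lm+1$, so $\deg(D-lQ)=1$ and $\deg(2D-G)\leq-1$ make \emph{both} conditions automatic (e.g.\ $D=(lm+1)P_\infty$, valid for any curve, even with $|X(\F_q)|=1$). In cases a) and b) we have $\deg G=2lm$, so I would look for $D_i=a_i+(lm-1)P_\infty$ with $a_i\in X(\F_q)$ (and $a_1=a_2$ in case b)); writing $s=\sigma(lQ)$ and $t=\sigma(G)$, and using that $\sigma$ is a morphism, the conditions (i$'$), (ii$'$) translate into $a_1\neq s$, $a_2\neq s$ and $a_1+a_2\neq t$ in the abelian group $X(\F_q)$.

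The core is then the counting. For the asymmetric case a), the bad pairs $(a_1,a_2)$ form the union of three subsets of $X(\F_q)^2$, each of size $N:=|X(\F_q)|$ (a ``row'' $a_1=s$, a ``column'' $a_2=s$, and a translate of the antidiagonal $a_1+a_2=t$), whose three pairwise intersections all have size $1$ and whose triple intersection has size at most $1$; inclusion--exclusion bounds the bad pairs by $3N-2$, leaving at least $N^2-3N+2=(N-1)(N-2)\geq1$ good pairs once $N\geq3$. For the symmetric case b) one instead counts $a$ with $a\neq s$ and $2a\neq t$, and here the equation $2a=t$ has either no solution or exactly $|X(\F_q)[2]|$ solutions; splitting according to whether $X(\F_q)$ is entirely $2$-torsion --- in which case $2a=t$ is either impossible (if $t\neq P_\infty$) or satisfied by every $a$ (if $t=P_\infty$, which hypothesis b) excludes) --- or not --- in which case $N\geq3$ necessarily and $X(\F_q)[2]$ is a proper, hence index $\geq2$, subgroup, so the at most $1+N/2$ bad values still leave a valid $a$ --- one gets a solution in every sub-case allowed by b). I expect this $2$-torsion bookkeeping to be the only real obstacle: it is precisely what forces the stronger hypothesis $N\geq3$ in a) and the extra condition in b), and it is the genus-$1$ manifestation of the class-group obstruction discussed in the Introduction. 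Once the $a_i$ (respectively $D$) are exhibited, Theorem~\ref{ChCh+gen} applies and yields $\mu_q(m,l)\leq\sum_i\mu_q(d_i,u_i)$, together with its symmetric version in the cases where $D_1=D_2$.
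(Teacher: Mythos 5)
Your proposal is correct and follows essentially the same route as the paper: reduce to the numerical criteria (i$'$), (ii$'$) of Theorem~\ref{ChCh+gen}, use that on a genus-$1$ curve a degree-$0$ divisor is non-special (resp.\ zero-dimensional) iff it is not principal, and transport everything to the group $X(\F_q)$ via $\sigma$, with cases c) and d) handled by degree reasons exactly as in the text. The only cosmetic difference is in a) and b), where you locate a good choice by inclusion--exclusion over $X(\F_q)^2$ (resp.\ over $a$ with $2a\neq t$), whereas the paper exhibits two explicit candidate divisors $lQ+Z$, $lQ+Z'$ and observes that at least one must work; both arguments rest on the same avoidance of finitely many classes and yield the same hypotheses.
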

\begin{proof}
Recall that a divisor of degree $0$ on $X$ is both
zero-dimensional and non-special, unless it is linearly equivalent
to zero.
%Recall also that the (degree $0$) divisor class group of $X$ is isomorphic
%to the group of points $X(\F_q)$. 

Suppose first we're in case a), so $X(\F_q)\simeq\Cl^0(X)$
has order at least $3$.
%Since $\sigma$ identifies $\Cl^0(X)$ with $X(\F_q)$,
This implies that
there are two divisors $Z$ and $Z'$ of degree $0$ on $X$
that are not linearly equivalent nor linearly equivalent to zero.
Let then $D_1=lQ+Z$, and let $D_2=lQ+Z$ or $lQ+Z'$, depending on whether
$D_1+D_2-G=2lQ+2Z-G$ or $2lQ+Z+Z'-G$ is not linearly equivalent to zero.
With this choice, conditions \emph{(i')} and \emph{(ii')} in Theorem~\ref{ChCh+gen}
are satisfied, and the conclusion follows.

Suppose now we're in case b).
Suppose first that $X(\F_q)\simeq\Cl^0(X)$ is not
entirely of $2$-torsion. Then there are two divisors $Z$ and $Z'$ of
degree $0$ not linearly equivalent to zero, and such that $2Z$ and $2Z'$
are not linearly equivalent. Let then $D_1=D_2=lQ+Z$ or $D_1=D_2=lQ+Z'$,
depending on whether
$D_1+D_2-G=2lQ+2Z-G$ or $2lQ+2Z'-G$ is not linearly equivalent to zero.
With this choice, conditions \emph{(i')} and \emph{(ii')} are
satisfied again.
On the other hand, suppose $X(\F_q)\simeq\Cl^0(X)$ is entirely of $2$-torsion,
so that $\sigma(G)\neq P_\infty$ by our hypothesis.
Let $Z$ be a divisor of degree $0$ not linearly equivalent to zero
(it exists since $|X(\F_q)|\geq2$) and take $D_1=D_2=lQ+Z$,
so condition \emph{(ii')} is satisfied.
Then $\sigma(D_1+D_2-G)=\sigma(G)\neq P_\infty$
and condition \emph{(i')} is also satisfied.

%If $X(\F_q)$ has order at least $2$,
%there exists a divisor $Z$ of degree $0$ on $X$
%not linearly equivalent to zero. We take $D_1=D_2=lQ+Z$ so condition \emph{(ii')}
%is satisfied.
%But then, in case b), $\Cl^0(X)$ is of $2$-torsion, so $\sigma(D_1+D_2-G)=\sigma(G)$,
%and condition \emph{(i')} is satisfied by our hypothesis $\sigma(G)\neq P_\infty$. While in case c), condition \emph{(i')} is also satisfied,
%for degree reasons.

Case c) works likewise: let $Z$ be a divisor of degree $0$
not linearly equivalent to zero and take $D_1=D_2=lQ+Z$,
so condition \emph{(ii')} is satisfied,
while condition \emph{(i')} is also satisfied for degree reasons.

In case d),
we take $D_1=D_2=(ml+1)P_\infty$. Then conditions \emph{(i')} and \emph{(ii')}
are satisfied
for degree reasons.

Last, remark that except perhaps in case a), we always took $D_1=D_2$ in the
proof, so that the estimates then also work for the symmetric bilinear
complexity $\mu^{\textrm{sym}}$.
\end{proof}

\begin{example}
\label{ex22}
Proposition~\ref{g=0} gives
\beq
\mu_q(2,2)\leq 7\qquad\textrm{for $q\geq7$}
\eeq
and Proposition~\ref{g=1} gives
\beq
\mu_q(2,2)\leq 8\qquad\textrm{for $q=4$ or $5$}.
\eeq
Indeed, recall that the number of points of degree $1$ on an elliptic
curve $X$ over $\F_q$ can be written as $|X(\F_q)|=q+1-t$
for some integer $t$, the \emph{trace} of $X$, satisfying $|t|\leq2\sqrt{q}$.
Conversely, Honda-Tate theory gives additional sufficient and necessary
conditions on $t$ for a curve having this number of points to exist
(\cite{Waterhouse}, Th.~4.1). The trace $t$ then also determines
the number of points on $X$ of any degree. For example,
we have $|X(\F_{q^2})|=(q+1)^2-t^2$,
hence $X$ has $\frac{1}{2}(|X(\F_{q^2})|-|X(\F_q)|)=\frac{(q+1-t)(q+t)}{2}$
points of degree $2$ (and likewise, $\frac{((q+1)^2-t^2)(q^2-2q+t^2)}{4}$
points of degree $4$, we will use it in the next example).

Using this machinery,
we see that for $q=4$ or $5$ there exists an elliptic curve over $\F_q$ with
eight points of degree $1$ (and at least one point of degree $2$), so
in Proposition~\ref{g=1} we can take as $G$ all these points of
degree $1$, each with multiplicity $1$.

Unfortunately it seems difficult to improve the bound $\mu_q(2,2)\leq 9$
for $q=2$ or $3$, at least with this generic method. Whether this is the
exact value is yet unsettled.
\end{example}

\begin{example}
\label{ex42}
Proposition~\ref{g=0} gives
\beq
\mu_q(4,2)\leq 15\qquad\textrm{for $q\geq16$}
\eeq
and Proposition~\ref{g=1} gives
\beq
\mu_q(4,2)\leq 16\qquad\textrm{for $q=9$, $11$, or $13$}
\eeq

\beq
\mu_8(4,2)\leq 17\qquad\mu_7(4,2)\leq 18\qquad\mu_5(4,2)\leq 19
\eeq

\beq
\mu_4(4,2)\leq 20\qquad
\mu_3(4,2)\leq 23\qquad
\mu_2(4,2)\leq 26.
\eeq
The proof of these bounds follows the same lines as in the previous example.

For $q=9$, $11$, or $13$, there is an elliptic curve over $\F_q$
with $16$ points of degree $1$ (and at least one point of degree $4$), so
in Proposition~\ref{g=1} we can take as $G$ all these points of
degree $1$, each with multiplicity $1$.

For $q=8$ we can choose the trace $t=-5$,
and $G$ consists of $14$ points of degree $1$
and $1$ point of degree $2$, all with multiplicity $1$.

For $q=7$ we choose $t=-5$, and $G$ consists of $12$ points of degree $1$
and $2$ points of degree $2$, all with multiplicity $1$.

For $q=5$ we choose $t=-4$, and $G$ consists of $10$ points of degree $1$
and $3$ points of degree $2$, all with multiplicity $1$.

For $q=4$ we choose $t=-3$, and $G$ consists of $8$ points of degree $1$
and $4$ points of degree $2$, all with multiplicity $1$.

For $q=3$ we choose $t=-2$, and $G$ consists of
$2$ points of degree $1$ with multiplicity $1$,
$4$ points of degree $1$ with multiplicity $2$,
and $3$ points of degree $2$ with multiplicity $1$.

For $q=2$ we choose $t=-1$, and $G$ consists of
$4$ points of degree $1$ with multiplicity $3$,
and $2$ points of degree $2$ with multiplicity $1$.

Remark that all these bounds already improve the one given by
Lemma~\ref{mu(m,l)<mu(m)M(l)} (at least given the best
upper bounds on $\mu_q(4)$ known up to now).
However, for small $q$ it is possible to do even better as follows.
\end{example}

\begin{lemma}
\label{lem42}
Suppose $m$ is not prime, and write $m=de$ for some integers $d,e\geq2$.
Then
\beq
\mu_q(m,l)\leq\mu_q(d)\mu_{q^d}(e,l)^{\phantom{\Sigma}}_{\phantom{\substack{\Sigma\\ \Sigma}}}
\eeq
(and likewise $\mu_q^{\textrm{sym}}(m,l)\leq\mu_q^{\textrm{sym}}(d)\mu_{q^d}^{\textrm{sym}}(e,l)$).
In particular:
\beq
\mu_3(4,2)\leq\mu_3(2)\mu_9(2,2)\leq 21,\qquad \mu_2(4,2)\leq\mu_2(2)\mu_4(2,2)\leq 24.
\eeq
\end{lemma}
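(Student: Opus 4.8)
The plan is to build the inequality $\mu_q(m,l)\leq\mu_q(d)\,\mu_{q^d}(e,l)$ directly from the functorial inequality of Lemma~\ref{functorial_inequalities}.b), applied to a suitable tower of algebras, after identifying $\cA_q(m,l)$ as an algebra over the intermediate field $\F_{q^d}$. The key observation is that, since $m=de$, the algebra
\beq
\cA_q(m,l)=\F_{q^m}[t]/(t^l)
\eeq
can be regarded as an algebra not only over $\F_q$ but over its subfield $\F_{q^d}\subset\F_{q^m}$, and as such it is naturally isomorphic to $\F_{q^m}[t]/(t^l)$ viewed with $\F_{q^m}=(\F_{q^d})^{e}$ in the exponent, i.e.\ to $\cA_{q^d}(e,l)=\F_{(q^d)^e}[t]/(t^l)$. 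So I would first record this identification: as an $\F_{q^d}$-algebra, $\cA_q(m,l)\simeq\cA_{q^d}(e,l)$.

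Next I would apply Lemma~\ref{functorial_inequalities}.b) with the base field $K=\F_q$, the intermediate field $L=\F_{q^d}$, and the $L$-algebra $\cA=\cA_{q^d}(e,l)$. That lemma gives
\beq
\mu(\cA_{q^d}(e,l)/\F_q)\leq\mu(\cA_{q^d}(e,l)/\F_{q^d})\cdot\mu(\F_{q^d}/\F_q)=\mu_{q^d}(e,l)\cdot\mu_q(d).
\eeq
Since the left-hand side equals $\mu(\cA_q(m,l)/\F_q)=\mu_q(m,l)$ by the identification above, this is exactly the desired bound. For the symmetric statement, I would invoke the last sentence of Lemma~\ref{functorial_inequalities}, which asserts that \eqref{funct2} also holds with $\mu^{\textrm{sym}}$ throughout when all algebras in sight are commutative; here $\cA_{q^d}(e,l)$ and $\F_{q^d}$ are both commutative, so the same computation yields $\mu_q^{\textrm{sym}}(m,l)\leq\mu_q^{\textrm{sym}}(d)\,\mu_{q^d}^{\textrm{sym}}(e,l)$.

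Finally, for the two numerical consequences I would simply specialize $m=4$, $d=e=2$, $l=2$ and plug in the known small-case values: $\mu_3(2)=3$ and $\mu_9(2,2)\leq7$ (from Example~\ref{ex22}), giving $\mu_3(4,2)\leq 3\cdot 7=21$; and $\mu_2(2)=3$ together with $\mu_4(2,2)\leq8$ (again Example~\ref{ex22}), giving $\mu_2(4,2)\leq 3\cdot 8=24$. The only point requiring any care — and the one I would state explicitly — is the base-change identification $\cA_q(de,l)\simeq\cA_{q^d}(e,l)$ as $\F_{q^d}$-algebras, which amounts to the elementary fact that $\F_{q^{de}}$ is the degree-$e$ extension of $\F_{q^d}$ and that truncated polynomial rings are stable under replacing the coefficient field by an isomorphic copy; everything else is a formal application of earlier results, so there is no real obstacle.
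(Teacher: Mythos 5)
Your proof is correct and follows exactly the paper's route: identify $\cA_q(m,l)$ as an $\F_{q^d}$-algebra with $\cA_{q^d}(e,l)$, then apply Lemma~\ref{functorial_inequalities}.b) with $L=\F_{q^d}$. The numerical specializations via Example~\ref{ex22} ($\mu_9(2,2)\leq 7$, $\mu_4(2,2)\leq 8$, $\mu_3(2)=\mu_2(2)=3$) also match the intended computation.
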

\begin{proof}
Direct consequence of Lemma~\ref{functorial_inequalities}.b),
noting that $\cA_q(m,l)$ can be considered as an algebra over $\F_{q^d}$,
and as such can be identified with $\cA_{q^d}(e,l)$.
\end{proof}

We do not claim these new upper bounds to be optimal. Any further improvement
(as well as lower bounds, on the other side)
would be of interest.

\begin{example}
In \cite{CO}, section~5, Cenk and \"Ozbudak give upper bounds on $\mu_2(163)$
and $\mu_3(97)$. However there is an error in their proof of the first,
and the second would need a slight extra justification.

The origin of the error is in their Th.~3.6,
which, as stated, is false. Condition~(1) in this Th.~3.6
asks for the existence of a non-special divisor of degree $n+g-1$ 
(instead of $g-1$ as in their Th.~3.2 or Cor.~3.5)
in order for their evaluation map $Ev_Q$
to be surjective. However this condition is not sufficient,
as illustrated as follows.

To give an upper bound on $\mu_2(163)$, the authors of \cite{CO}
introduce the elliptic curve $y^2+y=x^3+x+1$ over $\F_2$,
which has only one point of degree $1$, which means that its class
group $\Cl^0$ is trivial. They take a point $Q$ of degree $163$ on this curve,
and a non-special divisor $D$ of degree $163$ disjoint from $Q$.
They need their map $Ev_Q:L(D)\longto\cO_Q/Q$ to be surjective
(which the proof of their Th.~3.6 claims).
However, this map fits in the long exact sequence
\beq
0\longto L(D-Q)\longto L(D)\longto \cO_Q/Q\longto \dots
\eeq
and since $D-Q$ has degree $0$, and the curve has
trivial class group, we have $D-Q\sim 0$ and $l(D-Q)=1$.
This means that $Ev_Q$ is non-injective, and since $L(D)$ and $\cO_Q/Q$
have the same dimension (namely $163$), $Ev_Q$ is non-surjective as well.

To fix this error, we can use our Proposition~\ref{g=1} instead.
We use the same curve as in \cite{CO}, but since this curve
has only one point of degree $1$, we need case d) of the proposition,
and the divisor $G$ has to be modified accordingly: we take the only
point of degree $1$ with multiplicity $5$,
and then we take
all $2$ points of degree $2$,
all $4$ points of degree $3$,
all $5$ points of degree $4$,
all $8$ points of degree $5$,
all $8$ points of degree $6$,
all $25$ points of degree $8$,
all with multiplicity $1$. 
Then $G$ has degree
\beq
\deg G=1\cdot5+2\cdot2+4\cdot3+5\cdot4+8\cdot5+8\cdot6+25\cdot8=329=2\cdot163+3\;
\eeq
and Proposition~\ref{g=1}.d) gives
\beq
\begin{split}
\mu_2(163)\leq \mu_2(1,5)+2\mu_2(2)+&4\mu_2(3)+5\mu_2(4)+\\
&+8\mu_2(5)+8\mu_2(6)+25\mu_2(8)\leq 910.
\end{split}
\eeq
See \cite{CO}, Table~1, for the numerical details.
Remark they give the upper bound $\mu_2(7)\leq22$,
with the quotient $22/7$ being the highest among similar
estimates up to degree~$8$.
This is why we didn't use points of degree $7$ in our $G$,
and explains
why our upper bound $910$ is better than the upper bound $916$
in \cite{CO},
despite our $G$ having higher degree. This said, perhaps further optimizations
of this sort are possible.

Concerning the upper bound $\mu_3(97)\leq 426$, Cenk and \"Ozbudak use the
curve $y^2=x^3+x^2+2x+1$ over $\F_3$.
This curve has $3$ points of degree $1$, hence
its $\Cl^0$ is non-trivial, so the error in Condition~(1) of their
Th.~3.6 is not harmful. However for their upper bound to be fully
justified they also need to explain why their application $\phi$ is injective,
which they do not. But here again we can use Proposition~\ref{g=1} (case~a)
instead, with the same curve and the same divisor $G$ as theirs.
This gives the same bound $\mu_3(97)\leq 426$, without needing
any extra justification.
\end{example}

\section{Fixing some bounds of Ballet}

For any curve $X$ over $\F_q$, we denote by $B_d(X/\F_q)$
the number of closed points of degree $d$ on $X$, so that, for all $n$,
\beq
|X(\F_{q^n})|=\sum_{d|n}dB_d(X/\F_q).
\eeq

We now want to apply Theorem~\ref{ChCh+gen} with curves of higher genus,
as well as give easy verifiable criteria for the existence of divisors
$D_1,D_2$ satisfying conditions \emph{(i)} and \emph{(ii)},
or \emph{(i')} and \emph{(ii')}, in
this theorem.
%More precisely, 
For example, we can do so as these conditions be satisfied
for degree reasons:

\begin{proposition}
Let $X$ be a curve of genus $g$ over $\F_q$, and let $m,l\geq1$ be two integers.

Suppose that $X$ admits a closed point $Q$ of degree $\deg Q=m$
(a sufficient condition for this is $2g+1\leq q^{(m-1)/2}(q^{1/2}-1)$).

Suppose also that $X$ admits a non-special divisor $S$, of degree $g+e-1$,
for an integer $e$ as small as possible (hence $e\leq g$ by the Riemann-Roch theorem).

Consider now a collection of integers $n_{d,u}\geq0$
(for $d,u\geq1$), such that almost all of them are zero, and
that for any $d$,
\beq
\label{nd<Nd}
n_d=\sum_un_{d,u}\leq B_d(X/\F_q).
\eeq
Then, provided
\beq
\label{G>2n+2g+2e-1}
\sum_{d,u}n_{d,u}du\geq2ml+2e+2g-1
\eeq
we have
\beq
\mu_q(m,l)\leq\sum_{d,u}n_{d,u}\mu_q(d,u)
\eeq
and likewise
\beq
\mu_q^{\textrm{sym}}(m,l)\leq\sum_{d,u}n_{d,u}\mu_q^{\textrm{sym}}(d,u).
\eeq
\end{proposition}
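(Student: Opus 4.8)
The plan is to invoke Theorem~\ref{ChCh+gen}, so the task reduces to producing two divisors $D_1,D_2$ satisfying conditions \emph{(i')} and \emph{(ii')} there, where $G=\sum_{d,u}n_{d,u}(\text{chosen points})$. First I would use hypothesis~\eqref{nd<Nd}: for each degree $d$, since $n_d=\sum_u n_{d,u}\leq B_d(X/\F_q)$, there are enough closed points of degree $d$ to choose $n_d$ pairwise distinct ones, and I assign multiplicities according to the $n_{d,u}$, thereby forming an effective divisor $G$ of degree $\deg G=\sum_{d,u}n_{d,u}du$. The point of degree $m$ needed for the left-hand side $\mu_q(m,l)$ exists by the stated numerical criterion $2g+1\leq q^{(m-1)/2}(q^{1/2}-1)$, exactly as in Theorem~\ref{ChCh+gen}.

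Next I would construct $D_1$ and $D_2$ from the non-special divisor $S$ of degree $g+e-1$. The natural choice is $D_1=D_2=D$ where $D=lQ+S+R$ for a suitable effective divisor $R$ chosen so that the degree works out; concretely, set $D=lQ+S+R$ with $\deg R$ adjusted so that $\deg D=ml+(g+e-1)+\deg R$ is large enough. For condition \emph{(ii')} I need $i(D_1-lQ)=i(D_2-lQ)=0$, i.e. $D-lQ=S+R$ non-special: since $S$ is non-special and adding an effective divisor to a non-special divisor keeps it non-special (by Riemann-Roch, $i$ is non-increasing under $D\mapsto D+P$), this holds for any effective $R$. For condition \emph{(i')} I need $l(D_1+D_2-G)=l(2D-G)=0$; a divisor of negative degree is automatically zero-dimensional, so it suffices that $\deg(2D-G)<0$, i.e. $\deg G>2\deg D=2ml+2(g+e-1)+2\deg R\geq 2ml+2g+2e-2$, which is implied by \eqref{G>2n+2g+2e-1} taking $\deg R=0$, so actually $D=lQ+S$ works directly: then $\deg(2D-G)=2ml+2(g+e-1)-\deg G\leq 2ml+2g+2e-2-(2ml+2e+2g-1)=-1<0$.

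With $D_1=D_2=lQ+S$ verified, Theorem~\ref{ChCh+gen} yields $\mu_q(m,l)\leq\sum_i\mu_q(d_i,u_i)=\sum_{d,u}n_{d,u}\mu_q(d,u)$, and since $D_1=D_2$ the symmetric statement follows as well. The only genuinely delicate point is making sure the counting in \eqref{nd<Nd} is compatible with the assignment of multiplicities — but this is routine bookkeeping, since $n_d$ distinct points of degree $d$ can carry any multiplicities whose count of points equals $n_d$. I do not expect any real obstacle: all the work has been front-loaded into Theorem~\ref{ChCh+gen}, and here we are simply checking that the degree hypothesis \eqref{G>2n+2g+2e-1} is exactly what is needed to force $2D-G$ to have negative degree while $D-lQ=S$ stays non-special by construction.
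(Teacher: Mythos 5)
Your proposal is correct and matches the paper's proof: the paper also takes $D_1=D_2=S+lQ$, observes that $D-lQ=S$ is non-special by hypothesis, and that $2D-G$ has negative degree by the degree assumption, then concludes via Theorem~\ref{ChCh+gen}. The auxiliary divisor $R$ in your write-up is ultimately set to zero, so the argument is identical.
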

\begin{proof}
For $1\leq j\leq n_{d,u}$ choose a point $P_{d,u,j}$ of degree $d$
in $X$, such that $P_{d,u,j}\neq P_{d,u',j'}$ if $(u,j)\neq(u',j')$.
This is possible by~\eqref{nd<Nd}.
Let then $G=\sum_{d,u}\sum_{1\leq j\leq n_{d,u}}uP_{d,u,j}$,
so that $\deg G=\sum_{d,u}n_{d,u}du$.
Let also $D=D_1=D_2=S+lQ$, so $D-lQ$ is non-special,
and $2D-G$ has negative degree by~\eqref{G>2n+2g+2e-1}.
Hence conditions \emph{(i')} and \emph{(ii')} in Theorem~\ref{ChCh+gen}
are satisfied and we can conclude.
\end{proof}

In order to use this proposition one needs good upper bounds on $e$.
For results of this type, see for example \cite{BL2006} or \cite{BRR2010}.
In many cases it is possible to take $e=0$.
However under some mild hypothesis on $q$ or $X$,
it is possible to do substantially better, namely we can
gain an additional constant $g$ in~\eqref{G>2n+2g+2e-1}.
For this to be possible, one needs to replace the degree argument
in the proof with a finer method ensuring that
conditions \emph{(i')} and \emph{(ii')} are still satisfied for some
divisors $D_1,D_2$ of appropriate degree.
Having allowed asymmetry in our interpolation system will make this easier.
In fact we will give two different methods achieving this.
The first one will show the existence of $D_1,D_2$ using a cardinality
argument. The second one will be more constructive, and works also
in a symmetric setting, although only under more restrictive conditions.

\begin{theorem}
\label{th=g-1}
Let $X$ be a curve of genus $g$ over $\F_q$, and let $m,l\geq1$ be two integers.

Suppose that $X$ admits a closed point $Q$ of degree $\deg Q=m$
(a sufficient condition for this is $2g+1\leq q^{(m-1)/2}(q^{1/2}-1)$).

Consider now a collection of integers $n_{d,u}\geq0$
(for $d,u\geq1$), such that almost all of them are zero, and
that for any $d$,
\beq
\label{nd<Ndbis}
n_d=\sum_un_{d,u}\leq B_d(X/\F_q).
\eeq
Suppose also
\beq
\label{G>2n+g-1}
\sum_{d,u}n_{d,u}du\geq2ml+g-1.
\eeq
Then:
\begin{enumerate}[a)]
\item
If $q>5$,
we have
\beq
\mu_q(m,l)\leq\sum_{d,u}n_{d,u}\mu_q(d,u).
\eeq
\item
If $|X(\F_q)|>2g$, we have
\beq
\mu_q(m,l)\leq\sum_{d,u}n_{d,u}\mu_q(d,u).
\eeq
Moreover,
suppose $X$ and $Q$ are given explicitly,
that $2g+1$ points of degree $1$ on $X$ are given explicitly,
and, for any $d$,
that $n_d$ points of degree $d$ on $X$ are given explicitly.
Suppose also that for each $d,u$ such that $n_{d,u}>0$,
we are given explicitly a multiplication algorithm of length $l_{d,u}$
for $\cA_q(d,u)$.
Then, after at most $3g^2$ computations of Riemann-Roch spaces on $X$,
we can construct explicitly a multiplication algorithm
of length $\sum_{d,u}n_{d,u}l_{d,u}$ for $\cA_q(m,l)$.
\item
If $|X(\F_q)|>5g$,
we have
\beq
\mu_q^{\textrm{sym}}(m,l)\leq\sum_{d,u}n_{d,u}\mu_q^{\textrm{sym}}(d,u).
\eeq
Moreover,
suppose $X$ and $Q$ are given explicitly,
that $5g+1$ points of degree $1$ on $X$ are given explicitly,
and, for any $d$,
that $n_d$ points of degree $d$ on $X$ are given explicitly.
Suppose also that for each $d,u$ such that $n_{d,u}>0$,
we are given explicitly a symmetric multiplication algorithm
of length $l_{d,u}$ for $\cA_q(d,u)$.
Then, after at most $5g^2$ computations of Riemann-Roch spaces on $X$,
we can construct explicitly a symmetric multiplication algorithm
of length $\sum_{d,u}n_{d,u}l_{d,u}$ for $\cA_q(m,l)$.
\end{enumerate}
\end{theorem}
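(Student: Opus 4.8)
The plan is to deduce the statement from Theorem~\ref{ChCh+gen} in all three cases, the only real work being the construction of suitable auxiliary divisors $D_1,D_2$. First I would fix $G$: for each $d$ choose $n_d=\sum_un_{d,u}$ pairwise distinct closed points of degree $d$ — possible by~\eqref{nd<Ndbis} — and distribute them among the multiplicities according to the $n_{d,u}$, so that $\deg G=\sum_{d,u}n_{d,u}du\ge 2ml+g-1$ by~\eqref{G>2n+g-1}, while the right-hand side produced by Theorem~\ref{ChCh+gen} is exactly $\sum_{d,u}n_{d,u}\mu_q(d,u)$ (resp.\ with $\mu^{\textrm{sym}}$ when $D_1=D_2$); the point $Q$ of degree $m$ is given by hypothesis. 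Now the degree count is forced and tight: since every divisor of degree $<g-1$ is special while every divisor of degree $\ge g$ has positive $l$, condition \emph{(ii')} requires $\deg(D_i-lQ)\ge g-1$ and condition \emph{(i')} requires $\deg(D_1+D_2-G)\le g-1$, and the two are simultaneously achievable precisely because of~\eqref{G>2n+g-1}; the tight choice is $\deg D_1=\deg D_2=lm+g-1$, any slack in~\eqref{G>2n+g-1} only making the conditions easier. With this choice, both \emph{(i')} and \emph{(ii')} reduce to a single genericity statement: the three classes $[D_1-lQ]$, $[D_2-lQ]$, $[D_1+D_2-G]$, all of degree $g-1$, must avoid the theta divisor $W_{g-1}\subset\mathrm{Pic}^{g-1}(X)$ of effective (equivalently, special) classes.

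For part a) I would run the cardinality argument announced in the Introduction. In the class group, $[D_1]$ and $[D_2]$ each vary over a torsor under $\mathrm{Pic}^0(X)$, of cardinality $h$; $[D_1]$ must avoid the translate $W_{g-1}+[lQ]$, and then, $[D_1]$ being fixed, $[D_2]$ must avoid only the two translates $W_{g-1}+[lQ]$ and $W_{g-1}+[G]-[D_1]$. Hence it is enough that $|W_{g-1}(\F_q)|<h/2$, and the key quantitative input is a bound on the number of effective divisor classes of degree $g-1$ coming from the Weil estimates for the zeta function of $X$; this is exactly the estimate that becomes available once $q>5$. Note that, because $D_1$ and $D_2$ are chosen one at a time, no multiplication-by-$2$ map on $\mathrm{Pic}^0$ ever intervenes, so the $2$-torsion obstruction that broke the symmetric arguments of~\cite{STV,Ballet2008} does not arise.

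For parts b) and c) I would instead build the divisors explicitly, adapting the order-sequence construction of~\cite{21sep}. Starting from $lQ$ plus a divisor supported on the supplied rational points, one certifies non-speciality or zero-dimensionality by computing one Riemann-Roch space, and whenever an obstruction shows up — read off as a gap in the relevant order sequence — one exchanges a fresh rational point to remove it; the gap analysis bounds the number of such steps, hence of Riemann-Roch computations, by $O(g)$ per divisor. In the asymmetric case b) the three conditions decouple enough that $2g$ rational points and $\le 3g^2$ computations suffice, the building blocks of length $l_{d,u}$ for the $\cA_q(d,u)$ being then assembled by the concatenation of Proposition~\ref{prop_interp_general}; in the symmetric case c) one is forced to take $D_1=D_2$, so a single divisor must satisfy all three conditions at once, which is why one needs the larger supply of $5g$ rational points and $\le 5g^2$ computations. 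The plain existence statements in b) and c) come out as a byproduct, hence hold under the point-count hypotheses alone, for any $q$.

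The main obstacle is twofold. In a) it is to pin down the sharp form of the inequality $|W_{g-1}(\F_q)|<h/2$ and to check that the threshold it forces is exactly $q>5$; this needs a genuine manipulation of the zeta function of $X$, not merely a crude Weil bound. In b) and c) the difficulty is to prove that the greedy order-sequence recursion terminates and that the successive point-exchanges can be arranged to preserve all of \emph{(i')}, \emph{(ii')} simultaneously — here the freedom $D_1\neq D_2$ is precisely what keeps the asymmetric case tractable, and the jump from $2g$ to $5g$ points (and from $3g^2$ to $5g^2$ computations) is exactly the price of re-imposing $D_1=D_2$.
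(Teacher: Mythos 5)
Your proposal matches the paper's own proof in all its key moves: both derive the theorem from Theorem~\ref{ChCh+gen} by constructing divisors $D_1,D_2$ of degree $t=ml+g-1$, using a cardinality argument for part a) (via the Niederreiter--Xing estimate, which gives $|\Cl^i_{\textrm{eff}}(X)|\le A_i\le h/(\sqrt q-1)^2<h/2$ for all $i\le g-1$ once $q>5$) and the ``at most $g$'' / ``at most $4g$'' bad-point lemmas from \cite{21sep} for parts b) and c), with the crucial observation that choosing $D_1$ first and $D_2$ second avoids any multiplication-by-$2$ map on the class group. Two descriptive details are slightly off and worth correcting. First, $[D_1+D_2-G]$ has degree $2t-\deg G\le g-1$ by~\eqref{G>2n+g-1}, not necessarily $=g-1$; this is harmless, since the NX estimate controls $|\Cl^i_{\textrm{eff}}(X)|$ for every $i\le g-1$, but the phrasing ``all of degree $g-1$'' should be weakened. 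Second, in parts b) and c) the paper's construction is a pure monotone build-up: one starts from $Y_{-1}=(ml-1)P_0$ and, over $g$ steps, adds exactly one rational point $P\in\cS$ at each step, chosen so that the conditions are preserved; nothing is ever removed or exchanged. Since at each of the $g$ steps one may need to test up to $g+1$ (resp.\ $2g+1$, $5g+1$) candidate points with one or two Riemann--Roch computations each, the total count is $O(g^2)$ per divisor, consistent with the $3g^2$ and $5g^2$ in the statement, not the $O(g)$ per divisor you quote.
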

\begin{proof}
For $1\leq j\leq n_{d,u}$ choose a point $P_{d,u,j}$ of degree $d$
in $X$, such that $P_{d,u,j}\neq P_{d,u',j'}$ if $(u,j)\neq(u',j')$.
This is possible by~\eqref{nd<Nd}
(moreover, in cases b) and c), these $P_{d,u,j}$ are chosen among
the $n_d$ points of degree $d$ given explicitly).
Let then $G=\sum_{d,u}\sum_{1\leq j\leq n_{d,u}}uP_{d,u,j}$,
so that $\deg G=\sum_{d,u}n_{d,u}du$.

\medskip

\emph{Proof of case a).} We suppose $q>5$, and
we can also suppose $g\geq2$, otherwise the conclusion follows
from the results of the previous section.
Let $h=|\Cl^0(X)|$ be the class number of $X$.
Then we also have $h=|\Cl^i(X)|$ for any integer $i$,
where $\Cl^i(X)$ is the set of linear equivalence classes of
divisors of degree $i$ on $X$.
Let also
\beq
\Cl^i_{\textrm{eff}}(X)\subset\Cl^i(X)
\eeq
be the set of linear equivalence classes of effective
divisors of degree $i$ on $X$, or equivalently,
the set of linear equivalence classes of
divisors $D$ of degree $i$ on $X$ such that $l(D)>0$.
We then recall from \cite{NX}, eq.~(6), that
if $A_i$ is the number of effective divisors on $X$, then
\beq
\label{eqNX}
A_{g-1}+2\sum_{i=0}^{g-2}q^{(g-i-1)/2}A_i\leq\frac{h}{(q^{1/2}-1)^2}
\eeq
hence for any $i\leq g-1$
\beq
\label{<h/2}
|\Cl^i_{\textrm{eff}}(X)|\leq A_i\leq\frac{h}{(q^{1/2}-1)^2}<\frac{h}{2}
\eeq
(see also \cite{Ballet2008}, Lemma~2.1, and \cite{BRR2010}, Th.~3.3).
%\begin{equation}
%\label{<h/2}
%\textstyle\textrm{\begin{minipage}{0.85\textwidth}
%\emph{For any $i\leq g-1$, the number of effective divisors
%of degree $i$ on $X$ is at most $\frac{h}{(q^{1/2}-1)^2}$.
%Hence $|\Cl^i_{\textrm{eff}}(X)|\leq\frac{h}{(q^{1/2}-1)^2}<\frac{h}{2}$.}
%\end{minipage}}
%\end{equation}
%In fact \eqref{Ballet2008}, Lemma~2.1, states this only for $i=g-1$,
%but its proof (and more precisely equation~(4) therein)
%also gives the stronger bound $\frac{h}{2q^{(g-1-i)/2}(q^{1/2}-1)^2}$
%for $0\leq i\leq g-2$.
We now let
\beq
t=ml+g-1
\eeq
and we claim that we can find divisors $D_1,D_2$
of degree $t$ such that:
\begin{itemize}
\item[$(i')$] $D_1+D_2-G$ is zero-dimensional
\item[$(ii'_1)$] $D_1-lQ$ is non-special
\item[$(ii'_2)$] $D_2-lQ$ is non-special.
\end{itemize}
Indeed, $(ii'_1)$ means that the linear equivalence class $[D_1-lQ]$
is not in $\Cl^{g-1}_{\textrm{eff}}(X)$, or equivalently,
\beq
\label{D1notintranslate}
[D_1]\not\in\Cl^{g-1}_{\textrm{eff}}(X)+[lQ].
\eeq
But since translation by $[lQ]$ puts $\Cl^{g-1}(X)$ in bijection
with $\Cl^t(X)$, applying \eqref{<h/2} shows the translate
$\Cl^{g-1}_{\textrm{eff}}(X)+[lQ]$ cannot cover all $\Cl^t(X)$,
hence we can find $D_1$ as wished.
Now, this $D_1$ being fixed, $(i')$ and $(ii'_2)$ together mean
\beq
\label{D2notintranslate}
[D_2]\:\not\in\:(\Cl^{2t-\deg G}_{\textrm{eff}}(X)+[G-D_1])\,\cup\,(\Cl^{g-1}_{\textrm{eff}}(X)+[lQ]),
\eeq
where $2t-\deg G\leq g-1$ by~\eqref{G>2n+g-1}.
But again \eqref{<h/2} shows that the union of these translates
has cardinality less than $h/2+h/2$, and we can find $D_2$ as wished.
All this done we can now apply Theorem~\ref{ChCh+gen} and conclude.

\medskip

\emph{Proof of case b).} Suppose we are given a
set $\cS=\{P_0,P_1,\dots,P_{2g}\}$ of $2g+1$ points of degree $1$ on $X$.
As in case a), all we need is to construct divisors $D_1,D_2$ of degree $t$
satisfying $(i')$, $(ii'_1)$, $(ii'_2)$, and apply Theorem~\ref{ChCh+gen}
to conclude.
From \cite{21sep}, Lemma~6, we recall the following:
\begin{equation}
\label{<g}
\textstyle\textrm{\begin{minipage}{0.85\textwidth}
\emph{If $A$ is a divisor on $X$ with $\deg A\leq g-2$ and $l(A)=0$,
there are at most $g$ points $P\in X(\F_q)$ such that $l(A+P)>0$.}
\end{minipage}}
\end{equation}
For $-1\leq i\leq g-1$ we construct a divisor $Y_i$ on $X$ of
degree $ml+i$ such that $l(Y_i-lQ)=0$ iteratively as follows:
\begin{itemize}
\item Start with $Y_{-1}=(ml-1)P_0$, so $l(Y_{-1}-lQ)=0$
for degree reasons.
\item Suppose up to some $i<g-1$ we have found $Y_i$
such that $l(Y_i-lQ)=0$ as wished. Then by~\eqref{<g} there
exists $P\in\cS$ such that $l(Y_i+P-lQ)=0$. We put $Y_{i+1}=Y_i+P$.
\item This ends when $i=g-1$.
\end{itemize}
We can then put $D_1=Y_{g-1}$, so that $(ii'_1)$ is satisfied.

Now for $-1\leq i\leq g-1$ we construct a divisor $Z_i$ on $X$ of
degree $ml+i$ such that $l(Z_i-lQ)=0$ and $l(D_1+Z_i-G)=0$
iteratively as follows:
\begin{itemize}
\item Start with $Z_{-1}=(ml-1)P_0$, so $l(Z_{-1}-lQ)=0$
and $l(D_1+Z_{-1}-G)=0$ for degree reasons
(via hypothesis~\eqref{G>2n+g-1} for the second).
\item Suppose up to some $i<g-1$ we have found $Z_i$
such that $l(Z_i-lQ)=0$ and $l(D_1+Z_i-G)=0$ as wished.
We claim there is a point $P\in\cS$
such that $l(Z_i+P-lQ)=0$ and $l(D_1+Z_i+P-G)=0$.
Indeed by~\eqref{<g} the first can fail at most $g$ times,
and likewise the second can fail at most $g$ times.
We then put $Z_{i+1}=Z_i+P$.
\item This ends when $i=g-1$.
\end{itemize}
We can then put $D_2=Z_{g-1}$, so that $(i')$ and $(ii'_2)$ are satisfied,
and we're done.

\medskip

\emph{Proof of case c).} Suppose we are given a
set $\cT=\{P_0,P_1,\dots,P_{5g}\}$ of $5g+1$ points of degree $1$ on $X$.
From \cite{21sep}, Lemma~9, we recall the following:
\begin{equation}
\label{<4g}
\textstyle\textrm{\begin{minipage}{0.85\textwidth}
\emph{If $A$ is a divisor on $X$ with $\deg A\leq g-3$ and $l(A)=0$,
there are at most $4g$ points $P\in X(\F_q)$ such that $l(A+2P)>0$.}
\end{minipage}}
\end{equation}
Then for $-1\leq i\leq g-1$ we construct a divisor $T_i$ on $X$ of
degree $ml+i$ such that $l(T_i-lQ)=0$ and $l(2T_i-G)=0$
iteratively as follows:
\begin{itemize}
\item Start with $T_{-1}=(ml-1)P_0$, so $l(T_{-1}-lQ)=0$
and $l(2T_{-1}-G)=0$ for degree reasons
(via hypothesis~\eqref{G>2n+g-1} for the second).
\item Suppose up to some $i<g-1$ we have found $T_i$
such that $l(T_i-lQ)=0$ and $l(2T_i-G)=0$ as wished.
We claim there is a point $P\in\cT$
such that $l(T_i+P-lQ)=0$ and $l(2T_i+2P-G)=0$.
Indeed by~\eqref{<g} the first can fail at most $g$ times,
and by~\eqref{<4g} the second can fail at most $4g$ times.
We then put $T_{i+1}=T_i+P$.
\item This ends when $i=g-1$.
\end{itemize}
We can then put $D_1=D_2=T_{g-1}$ and conclude
by Theorem~\ref{ChCh+gen} again.
\end{proof}

\begin{remark}
As explained in the Introduction,
this Theorem~\ref{th=g-1} fixes an error in an article of Ballet.
More precisely, if we take $l=1$,
and we choose all $n_{d,u}$ equal to zero except for $n_{1,1}$,
then case~a) of Theorem~\ref{th=g-1} gives statement (1)
in Th.~2.1 of \cite{Ballet2008} as a special case; and likewise if we
choose all $n_{d,u}$ equal to zero except for $n_{1,1}$ and $n_{2,1}$,
we get its statement (2).

Remark that our proof of case~a) is
structurally the same as Ballet's.
% (and indeed we even used his Lemma~2.1).
The only difference is that we allowed the asymmetry $D_1\neq D_2$,
so $D_1$ and $D_2$ could be constructed one at a time, and in
establishing \eqref{D1notintranslate}
and \eqref{D2notintranslate}
we only had to consider translations  $[D]\mapsto [D]-[A]$
which put $\Cl^*(X)$ in bijection with $\Cl^{*-\deg A}(X)$.
On the other hand Ballet had to consider a map
of the form $[D]\mapsto 2[D]-[G]$ which might be non-injective.
The error in Ballet's \cite{Ballet2008}, Prop.~2.1,
is that he did not
take the possible kernel of this multiplication-by-$2$ map
(that is, the $2$-torsion in the class group)
into account.
As explained in the Introduction,
this error was in fact borrowed from \cite{STV}, and was first
spotted by Cascudo-Cramer-Xing (see \cite{Cascudo}, Chapter~12).

Remark also that case~c) of Theorem~\ref{th=g-1} gives another way
of fixing this error, while keeping symmetry. A drawback is that
the condition $X(\F_q)>5g$ in case~c) imposes serious restrictions
on the curves to be used, hence for some values of $q$, it does
not lead to interesting bounds.

%So, for applications, case~a is often more suitable,
%and indeed it allows us to secure the proof of further results of
%Ballet that were jeopardized by the error in his Th.~2.1:
So, for applications, case~a) is often more suitable,
and indeed it allows us to fix the proof of further bounds of
Ballet that were jeopardized by the error in his Th.~2.1:
\end{remark}

\begin{corollary}
\label{corBallet}
Let $p$ be a prime number and $q=p^r$ a power of $p$, with $q>5$.
Then for all integer $n\geq1$ we have
\beq
\frac{1}{n}\mu_q(n)\leq
\begin{cases}
3\left(1+\frac{2}{p-2}\right) & \textrm{if $r=1$}\\
2\left(1+\frac{2}{\sqrt{q}-2}\right) & \textrm{if $r=2$}\\
3\left(1+\frac{p}{q-2}\right) & \textrm{if $r\geq3$ odd.}
\end{cases}
\eeq
\end{corollary}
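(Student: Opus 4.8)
The plan is to derive all three estimates from case~a) of Theorem~\ref{th=g-1}, applied with $l=1$ and $m=n$, so that $\cA_q(m,l)=\F_{q^n}$ and $\mu_q(m,l)=\mu_q(n)$. This is exactly the shape of Ballet's arguments in \cite{Ballet2008,Balletnote}; the novelty is only that the step where he invoked his (incorrect) construction of auxiliary divisors ``up to degree $g-1$'' is now replaced by Theorem~\ref{th=g-1}.a), which is unconditional for $q>5$ --- this is precisely where the hypothesis $q>5$ of the corollary comes from. For the finitely many small $n$ the bounds below are too crude, but for those the inequality either is trivial or follows from Propositions~\ref{g=0} and~\ref{g=1}, so I would from the outset restrict to $n$ large; in particular the genus $g$ of every curve used will be $O(n)$, so the condition $2g+1\le q^{(n-1)/2}(q^{1/2}-1)$ guaranteeing a closed point $Q$ of degree $n$ holds automatically.

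For the square case $r=2$ (so $q\ge 9$), I would fix a family of curves $X_k/\F_q$ with $g_k\to\infty$ whose number of rational points is asymptotically optimal, $B_1(X_k/\F_q)/g_k\to\sqrt q-1$, and whose genera are spaced closely enough that, given $n$, one can choose $X=X_k$ with $g=g_k$ satisfying $2n+g-1\le B_1(X/\F_q)$ and with $g$ not much larger than $\tfrac{2n-1}{\sqrt q-2}$ (the smallest value for which this is possible). Taking $G$ equal to a sum of $n_{1,1}=2n+g-1$ distinct points of degree $1$ (all other $n_{d,u}$ zero), Theorem~\ref{th=g-1}.a) applies and, since $\mu_q(1,1)=1$, gives $\mu_q(n)\le 2n+g-1$; inserting the bound on $g$ coming from the genus spacing of the chosen family yields $\tfrac1n\mu_q(n)\le 2\bigl(1+\tfrac{2}{\sqrt q-2}\bigr)$.

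For the non-square cases $r=1$ and $r\ge 3$ odd, $\F_q$ carries no asymptotically optimal tower, so I would instead work with points of degree $2$, which are available in abundance: starting from a family of curves $X_k/\F_q$ that becomes optimal over $\F_{q^2}$ (a Garcia--Stichtenoth-type tower whose equations have coefficients in $\F_q$, as in \cite{Ballet2008,Balletnote}, possibly after a base-field descent via Lemma~\ref{functorial_inequalities}.b)), one has $|X_k(\F_{q^2})|\ge(q-1)g_k-o(g_k)$, while $|X_k(\F_q)|=O(g_k\sqrt q)$ by Hasse--Weil is negligible in comparison, so almost all $\F_{q^2}$-points occur in Galois-conjugate pairs and thus $B_2(X_k/\F_q)\ge\tfrac{q-1}{2}g_k-o(g_k)$. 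Given $n$, choose such an $X=X_k$ with $g=g_k$ minimal (up to the genus gap of the family) so that $n_{2,1}:=\lceil\tfrac12(2n+g-1)\rceil\le B_2(X/\F_q)$, let $G$ be a sum of $n_{2,1}$ distinct points of degree $2$, and apply Theorem~\ref{th=g-1}.a); since $\mu_q(2,1)=\mu_q(2)=3$ (Proposition~\ref{g=0} and Lemma~\ref{basic_lower_bounds}.b)) this gives $\mu_q(n)\le 3n_{2,1}=3\bigl(n+\tfrac g2+O(1)\bigr)$. Here $g$ is forced to be about $\tfrac{2n}{p-2}$ (resp. $\tfrac{2n}{q-2}$) times the genus gap of the family, and plugging this in produces $\tfrac1n\mu_q(n)\le 3\bigl(1+\tfrac{2}{p-2}\bigr)$ when $r=1$, and $\tfrac1n\mu_q(n)\le 3\bigl(1+\tfrac{p}{q-2}\bigr)$ when $r\ge 3$ is odd.

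The arithmetic steps above are routine once the curves are fixed; the real work --- and the point I expect to be the main obstacle --- is the selection of curve families that are simultaneously near-optimal in point count and sufficiently finely spaced in genus to produce exactly the constants $\tfrac{2}{\sqrt q-2}$, $\tfrac{2}{p-2}$, $\tfrac{p}{q-2}$ rather than weaker ones; here one has to reproduce faithfully the genus-gap (and, in the odd case, descent) estimates of \cite{Ballet2008,Balletnote}. A secondary, purely bookkeeping, obstacle is the explicit treatment of the finitely many small $n$ for which the asymptotic argument does not yet yield the stated inequality, handled via Propositions~\ref{g=0} and~\ref{g=1}.
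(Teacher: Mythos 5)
Your proposal follows the paper's proof exactly: set $l=1$, $m=n$ in Theorem~\ref{th=g-1}.a) --- this is the replacement for Ballet's erroneous Th.~2.1 --- and transplant the curve-family, genus-gap, and (in the odd case) descent arguments from \cite{Ballet2008,Balletnote} unchanged, which is precisely what the paper's terse proof also does. The one minor divergence, harmless for the asymptotic constants because $B_1$ is asymptotically negligible for the towers involved, is that in the non-square cases you take $G$ supported only on degree-two points (i.e.\ $n_{1,1}=0$), whereas the paper's stated recipe is $n_{1,1}=B_1(X/\F_q)$, $n_{2,1}=B_2(X/\F_q)$.
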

\begin{proof}
Use Theorem~\ref{th=g-1}
instead of Th.~2.1 of \cite{Ballet2008},
in the proof of the corresponding cases of Th.~3.1 of \cite{Ballet2008}
and Th.~2.1 and~2.2 of \cite{Balletnote}.

More precisely, Theorem~\ref{th=g-1} with $l=1$, $m=n$, $n_{1,1}=B_1(X/\F_q)$,
and the other $n_{d,u}=0$, replaces Th.~2.1.(1) of \cite{Ballet2008}.
While Theorem~\ref{th=g-1} with $l=1$, $m=n$, $n_{1,1}=B_1(X/\F_q)$,
$n_{2,1}=B_2(X/\F_q)$,
and the other $n_{d,u}=0$, replaces Th.~2.1.(2) of \cite{Ballet2008}.
\end{proof}

\begin{remark}
There is a case of Th.~3.1 of \cite{Ballet2008} that we didn't include
%(on purpose)
in our Corollary. Namely, Th.~3.1 of \cite{Ballet2008} claims that
the bound $\frac{1}{n}\mu_q(n)\leq2\left(1+\frac{2}{\sqrt{q}-2}\right)$
holds for all $r$ even, not only for $r=2$.
The reason for this omission is that there is another error
in the proof of this Th.~3.1 of Ballet,
apart from the oversight of the $2$-torsion
already mentioned.

Indeed in his proof Ballet considers two consecutive prime
numbers $l_1$ and $l_2$ determined by $n$ and he claims that he can apply
his Prop.~3.1.(2) to this $l_2$. However this Prop.~3.1.(2) only states
that \emph{there exists} a prime number $l$ for which its conclusion
holds, not that it holds \emph{for all} prime numbers.
Looking more closely at the proof, we see it works for primes $l$
for which certain points split completely in a certain morphism of curves,
which in turn can be translated as the primes $l$ lying in a certain
arithmetic progression.
However there is no reason that $l_2$ should be in this arithmetic
progression, except in the case $r=2$ where it is trivial.

On the other hand, it is easy to see that this bound, and even a slightly
stronger one, holds at least asymptotically (if not for all $n$),
as will be seen with our fix of the Shparlinski-Tsfasman-Vladut bound
below.
\end{remark}

To end this section, we want to show how the condition $q>5$
in Theorem~\ref{th=g-1}.a) can be relaxed, at the cost of
only weakening condition~\eqref{G>2n+g-1} by a small absolute constant,
independent of $g$.
For this we will use a generalization of~\eqref{<h/2},
that might also be seen as a variant of \cite{BRR2010}, Th.~3.3
and Cor.~3.4.

\begin{lemma}
\label{estim_e}
Let $X$ be a curve of genus $g\geq2$ over $\F_q$,
of class number $h$,
and for any integer $i$ let $A_i$ be the number of effective
divisors of degree $i$ on $X$.
Define an integer $e_q$ as follows:
%\begin{itemize}
%\item If $X(\F_q)=\emptyset$,
%\beq
%e=
%\begin{cases}
%6 & \textrm{if $q=2$}\\
%2 & \textrm{if $q=3$}\\
%1 & \textrm{if $q=4,5$}\\
%0 & \textrm{if $q>5$.}
%\end{cases}
%\eeq
%\item If $X(\F_q)\neq\emptyset$,
\beq
e_q=
\begin{cases}
2 & \textrm{if $q=2$}\\
1 & \textrm{if $q=3,4,5$}\\
0 & \textrm{if $q>5$.}
\end{cases}
%e(2)=3,\quad e(3)=e(4)=e(5)=1,\textrm{ and }e(q)=0\textrm{ for }q>5.
\eeq
%\end{itemize}
Then there is an integer $e$ with $0\leq e\leq e_q$
such that
\beq
\label{e<e_q}
A_{g-e-1}+A_j<h
\eeq
for all $j\leq g+2e-3e_q-1$.
\end{lemma}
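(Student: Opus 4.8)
The plan is to deduce everything from the Niederreiter--Xing inequality \eqref{eqNX}, by the same kind of argument that gave \eqref{<h/2}; the point of the cutoff $e_q$ is merely to leave enough slack. For $q>5$ there is nothing new: there $e_q=0$, and since $(q^{1/2}-1)^2>2$ the estimate \eqref{<h/2} already gives $A_i<h/2$ for all $i\le g-1$, whence $A_{g-1}+A_j<h$ for every $j\le g-1=g+2\cdot0-3\cdot0-1$, which is the claim with $e=0$.

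So assume $q\in\{2,3,4,5\}$ (hence $e_q\in\{1,2\}$) and, arguing by contradiction, that for \emph{every} $e$ with $0\le e\le e_q$ there is an integer $j_e\le g+2e-3e_q-1$ with $A_{g-e-1}+A_{j_e}\ge h$ (when $j_e<0$ this merely asserts $A_{g-e-1}\ge h$). Write the left side of \eqref{eqNX} as $\sum_i c_iA_i$, with $c_{g-1}=1$ and $c_i=2q^{(g-1-i)/2}$ for $0\le i\le g-2$. The key observation is that $j_e$ is \emph{deep}: from $j_e\le g-1-(3e_q-2e)$ together with the elementary inequality $3e_q-2e\ge e$ (valid because $e\le e_q$) one gets $c_{j_e}\ge 2q^{(3e_q-2e)/2}\ge c_{g-e-1}$. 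Therefore, for each $e$, the two terms $c_{g-e-1}A_{g-e-1}$ and $c_{j_e}A_{j_e}$ of $\sum_i c_iA_i$ contribute together at least $c_{g-e-1}h$ when the indices $g-e-1,j_e$ are distinct, and at least $\tfrac12 c_{g-e-1}h=q^{e/2}h$ when they coincide; and a short computation shows the coincidence $j_e=g-e'-1$ can occur only for $e=e'=e_q$.

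The indices $g-e-1$ are pairwise distinct, so after choosing the $j_e$ as distinct as possible and absorbing any remaining coincidence (where the repeated index is then so deep that its single coefficient $c_{j_e}$ nearly exhausts the budget on its own), summing over $e=0,\dots,e_q$ gives
\[
\frac{h}{(q^{1/2}-1)^2}\ \ge\ \sum_i c_iA_i\ \ge\ \Bigl(1+\sum_{e=1}^{e_q-1}2q^{e/2}+q^{e_q/2}\Bigr)h\ +\ c_0A_0,
\]
and $c_0A_0>0$ since $A_0=1$ while the index $0$ is not among the $g-e-1$ or the $j_e$ once $g$ is large enough. The values of $e_q$ are chosen exactly so that $1+\sum_{e=1}^{e_q-1}2q^{e/2}+q^{e_q/2}\ge(q^{1/2}-1)^{-2}$ for $q=2,3,4,5$ --- with equality precisely at $q=2$, where both sides equal $3+2\sqrt2$ --- so the strictly positive remainder forces the contradiction, and the lemma follows.

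I expect the main obstacle to be not the estimate itself but the bookkeeping of the configurations that fall outside this generic run: very small genus (for $q=2$ only $g=2,3$; for $q=3,4,5$ only $g=2$), the possibility that the index $0$ is used up by one of the pairs, and curves of very small class number. These finitely many cases I would settle directly, using elementary relations between numbers of effective divisors and the trivial remark that as soon as $g-e_q-1<0$ the assertion already holds with $e=e_q$; none of this disturbs the argument above.
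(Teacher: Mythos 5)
Your strategy is essentially the one the paper uses for the hard cases $q=2,3$: argue by contradiction, take the inequalities $A_{g-e-1}+A_{j_e}\geq h$ furnished by the negation, weight them by (essentially) the Niederreiter--Xing coefficients $c_{g-e-1}$ with the $e=e_q$ weight halved, sum, and compare with \eqref{eqNX} coefficient by coefficient; and your check that $1+\sum_{e=1}^{e_q-1}2q^{e/2}+q^{e_q/2}\geq(q^{1/2}-1)^{-2}$ with equality exactly at $q=2$ is the right numerical verification, as is the observation that $j_e=g-e'-1$ forces $e=e'=e_q$ (this is precisely why halving only the last multiplier is enough). The one cosmetic difference from the paper is that for $q=4,5$ the paper skips the contradiction entirely and notes directly from \eqref{eqNX} that $A_i<h/2$ for $i\le g-2$; that is a slightly cleaner route for those two values.

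There are, however, two places where your write-up does not yet amount to a proof. First, the coincidences $j_e=j_{e'}$ for $e\ne e'$ are not "absorbed'' by any argument you give: with the pair-by-pair bound $c_{g-e-1}A_{g-e-1}+c_{j_e}A_{j_e}\geq c_{g-e-1}h$ you cannot simply sum over $e$, because a term $c_{j}A_j$ would then be used more than once. What actually works is to sum $m_e(A_{g-e-1}+A_{j_e})\geq m_e h$ and verify, index by index, that the coefficient $\alpha_i$ of $A_i$ on the left is $\leq c_i$; this accommodates all coincidences automatically, including repeated $j_e$'s, and is what the paper checks for $q=2$ and $q=3$. Second, the additive remainder $c_0A_0$ in your display is overclaimed: it is correct only when $\alpha_0=0$, i.e.\ when the index $0$ never appears among the $g-e-1$ and the $j_e$, and for $q=2$ that already requires $g\geq 3e_q+1=7$. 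For smaller $g$ (e.g.\ $g=4,5,6$ with $q=2$) the index $0$ can be hit by $j_{e_q}$, so what you actually have is $\alpha_0<c_0$, not $\alpha_0=0$; since $A_0=1>0$ this still yields the strict inequality you need, but it is a different (and weaker) statement than "$+c_0A_0$''. If you rephrase the last step as "the coefficient of $A_0$ is strictly smaller in the weighted sum than in \eqref{eqNX}'' --- which is exactly what the paper does for $g\geq 4$ --- the list of exceptional small genera shrinks to what you announced, and those residual cases (for $q=2$, genus $2$ and $3$; for $q=3,4,5$, genus $2$) do need the separate elementary treatment you mention, as in the paper's handling of $q=2$, $g=3$ via $A_1\geq 2\Rightarrow A_2\geq 3$.
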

\begin{proof}
We first consider the case $q=2$.
If $g=2$, take $e=e_q=2$, so \eqref{e<e_q} is satisfied
since $A_j=0$ for $j<0$. Now suppose $g\geq3$, and
write \eqref{eqNX} in the form
\beq
\label{eqNX2}
A_{g-1}+2\sqrt{2}A_{g-2}+4A_{g-3}+\dots+2(\sqrt{2})^{g-1}A_0\leq(3+2\sqrt{2})h.
\eeq
We proceed by contradiction and suppose that the lemma is false.
This means that the following three inequalities hold:
\beq
\label{Aj}
A_{g-3}+A_{j}\geq h\qquad\text{for some $j\leq g-3$}
\eeq
\beq
\label{Aj'}
A_{g-2}+A_{j'}\geq h\qquad\text{for some $j'\leq g-5$}
\eeq
\beq
\label{Aj''}
\,A_{g-1}+A_{j''}\geq h\qquad\text{for some $j''\leq g-7$.}
\eeq
We multiply \eqref{Aj} by $2$, \eqref{Aj'} by $2\sqrt{2}$, and sum with \eqref{Aj''}, to get:
\beq
\label{anti-eqNX2}
A_{g-1}+2\sqrt{2}A_{g-2}+2A_{g-3}+2A_{j}+2\sqrt{2}A_{j'}+A_{j''}\geq(3+2\sqrt{2})h.
\eeq
Comparing coefficients
(and discussing whether $j=g-3$ or $j\leq g-4$,
and whether $j,j',j''$ are all distinct or some of them are equal)
we see that the left-hand side of \eqref{anti-eqNX2}
is less than or equal to the left-hand side of \eqref{eqNX2}.
To get a contradiction, it suffices to prove that the inequality is strict.

If $g\geq4$, the coefficient of $A_0=1$ in \eqref{anti-eqNX2} is strictly
less than in \eqref{eqNX2}, so the inequality is strict indeed.

Last, if $g=3$, the only way to have equality is to have $j=g-3=0$,
with equality also in \eqref{Aj}, \eqref{Aj'}, and \eqref{Aj''}.
But from this and $A_0=1$ we deduce $h=2=A_1=A_2$.
However $A_1=2$ means there are two points $P_1,P_2$ of degree $1$ on $X$,
and considering the divisors $2P_1,2P_2,P_1+P_2$, we find $A_2\geq3$,
a contradiction.

\smallskip

The case $q=3$ works the same. Write \eqref{eqNX} as
\beq
\label{eqNX3}
A_{g-1}+2\sqrt{3}A_{g-2}+6A_{g-3}+\dots+2(\sqrt{3})^{g-1}A_0\leq(1+\sqrt{3}/2)h<2h.
\eeq
If the lemma were false, one could find $j\leq g-2$ with $A_{g-2}+A_{j}\geq h$,
and $j'\leq g-4$ with $A_{g-1}+A_{j'}\geq h$. Summing these two inequalities
would then contradict \eqref{eqNX3}.

\smallskip

To finish the proof, for $q=4$ or $5$, remark that \eqref{eqNX}
implies $A_i<h/2$ for $i\leq g-2$, so we can take $e=e_q=1$.
And for $q>5$ we find $A_i<h/2$ for $i\leq g-1$, so $e=e_q=0$ works,
as claimed.
\end{proof}

\begin{proposition}
\label{th=g-1+e}
Let $X$ be a curve of genus $g\geq2$ over $\F_q$,
where $q\geq2$ is any prime power,
and let $m,l\geq1$ be two integers.

Suppose that $X$ admits a closed point $Q$ of degree $\deg Q=m$
(a sufficient condition for this is $2g+1\leq q^{(m-1)/2}(q^{1/2}-1)$).

Let $e_q$ be defined as in the previous lemma (remark $e_q\leq2$ in any case).

Consider now a collection of integers $n_{d,u}\geq0$
(for $d,u\geq1$), such that almost all of them are zero, and
that for any $d$,
\beq
n_d=\sum_un_{d,u}\leq B_d(X/\F_q).
\eeq
Then, provided
\beq
\label{G>2n+3e+g-1}
\sum_{d,u}n_{d,u}du\geq2ml+3e_q+g-1,
\eeq
we have
\beq
\mu_q(m,l)\leq\sum_{d,u}n_{d,u}\mu_q(d,u).
\eeq
\end{proposition}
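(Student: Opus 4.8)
The plan is to rerun the cardinality argument from the proof of case~a) of Theorem~\ref{th=g-1}, but feeding it the refined estimate of Lemma~\ref{estim_e} in place of the cruder bound~\eqref{<h/2}; this is exactly what lets us drop the hypothesis $q>5$ at the cost of the extra term $3e_q$ in~\eqref{G>2n+3e+g-1}. First I would form the effective divisor $G=\sum_{d,u}\sum_{1\leq j\leq n_{d,u}}uP_{d,u,j}$ just as at the start of that proof, with the $P_{d,u,j}$ chosen pairwise distinct of degree $d$ (possible since $n_d\leq B_d(X/\F_q)$), so that $\deg G=\sum_{d,u}n_{d,u}du\geq 2ml+3e_q+g-1$. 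Let $h=|\Cl^0(X)|$, and (using $g\geq2$) let $e$ be the integer produced by Lemma~\ref{estim_e}, so that $0\leq e\leq e_q$ and $A_{g-e-1}+A_j<h$ for every $j\leq g+2e-3e_q-1$. By conditions $(i')$ and $(ii')$ of Theorem~\ref{ChCh+gen}, it then suffices to exhibit divisors $D_1,D_2$ on $X$ with $D_1+D_2-G$ zero-dimensional and with $D_1-lQ$ and $D_2-lQ$ non-special.

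The key choice is to look for $D_1$ and $D_2$ of degree $t=ml+g-1+e$, inside suitable linear equivalence classes. Two degree computations govern everything: $\deg(K_X-D_i+lQ)=2g-2-t+lm=g-e-1$, and $\deg(D_1+D_2-G)=2t-\deg G\leq g+2e-3e_q-1$, where the hypothesis on $\deg G$ enters and is precisely calibrated. Now $D_i-lQ$ is non-special if and only if $l(K_X-D_i+lQ)=0$, i.e.\ if and only if $[K_X-D_i+lQ]\notin\Cl^{g-e-1}_{\textrm{eff}}(X)$; since $[D_i]\mapsto[K_X-D_i+lQ]$ is a bijection $\Cl^t(X)\to\Cl^{g-e-1}(X)$, the classes $[D_i]$ that fail this condition form a set of cardinality at most $A_{g-e-1}$. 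Likewise, when $\deg(D_1+D_2-G)\geq0$ (the negative case being automatic), $D_1+D_2-G$ is zero-dimensional if and only if $[D_2]\notin\Cl^{2t-\deg G}_{\textrm{eff}}(X)+[G-D_1]$, a translate of cardinality at most $A_{2t-\deg G}$.

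The construction is then immediate. Since $A_{g-e-1}<h=|\Cl^t(X)|$ (apply the inequality of Lemma~\ref{estim_e} for any $j\leq g+2e-3e_q-1$ and use $A_j\geq0$), some class of degree $t$ avoids the bad set for $D_1$, so pick $D_1$ in it. With $D_1$ fixed, the classes of degree $t$ that fail either the $(i')$ condition or the $(ii')$ condition for $D_2$ number at most $A_{g-e-1}+A_{2t-\deg G}$, which is $<h$ by Lemma~\ref{estim_e} because $2t-\deg G\leq g+2e-3e_q-1$; so a valid $D_2$ exists too. Feeding $D_1,D_2$ into Theorem~\ref{ChCh+gen} gives $\mu_q(m,l)\leq\sum_i\mu_q(d_i,u_i)$, and regrouping the points $P_{d,u,j}$ of $G$ by degree $d$ and multiplicity $u$ rewrites the right-hand side as $\sum_{d,u}n_{d,u}\mu_q(d,u)$, which is the claim.

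The one spot demanding genuine care is the degree bookkeeping of the previous paragraph: $t$ must be taken to be $ml+g-1+e$ precisely so that $\deg(K_X-D_i+lQ)$ equals the index $g-e-1$ occurring in Lemma~\ref{estim_e}, while the hypothesis $\deg G\geq2ml+3e_q+g-1$ is exactly what pushes $\deg(D_1+D_2-G)=2t-\deg G$ down into the range $j\leq g+2e-3e_q-1$ covered by that lemma. Apart from this matching of numerical parameters, the argument is the same translation-by-fixed-classes cardinality count as for Theorem~\ref{th=g-1}.a), with no non-injective multiplication map anywhere; in particular there is no symmetric version here, since after the shift by $e$ one cannot in general keep $D_1=D_2$ while satisfying $(i')$.
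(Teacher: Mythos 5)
Your proposal is correct and follows essentially the same route as the paper: the same divisor $G$, the same degree $t=ml+e+g-1$, and the same two-step cardinality count using Lemma~\ref{estim_e} with the translates of cardinalities $A_{g-e-1}$ and $A_{2t-\deg G}$ (your phrasing via the bijection $[D]\mapsto[K_X-D+lQ]$ is just the paper's $\Cl^{g+e-1}_{\textrm{sp}}(X)=[K_X]-\Cl^{g-e-1}_{\textrm{eff}}(X)$ written out). The degree bookkeeping you highlight is exactly the calibration the paper performs.
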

\begin{proof}
We argue essentially as in the proof of Theorem~\ref{th=g-1}.a)
with only a few minor changes.
From the collection of integers $n_{d,u}$ we first construct
a divisor $G$, of degree $\deg G=\sum_{d,u}n_{d,u}du$, as before.
For any integer $i$ we let
\beq
\Cl^i_{\textrm{sp}}(X)\subset\Cl^i(X)
\eeq
be the set of linear equivalence classes of special divisors
on $X$, hence by the Riemann-Roch theorem
$\Cl^i_{\textrm{sp}}(X)=[K_X]-\Cl^{2g-2-i}_{\textrm{eff}}(X)$,
so
\beq
\label{Clsp=Cleff}
|\Cl^i_{\textrm{sp}}(X)|=|\Cl^{2g-2-i}_{\textrm{eff}}(X)|\leq A_{2g-2-i},
\eeq
and by Lemma~\ref{estim_e} there is an $e$ with $0\leq e\leq e_q$ and
\beq
\label{hnew}
|\Cl^{g+e-1}_{\textrm{sp}}(X)|\leq|\Cl^{j}_{\textrm{eff}}(X)|+|\Cl^{g+e-1}_{\textrm{sp}}(X)|\leq A_j+A_{g-e-1}<h
\eeq
for all $j\leq g+2e-3e_q-1$.

Then letting
\beq
t=ml+e+g-1
\eeq
and using \eqref{hnew} instead of \eqref{<h/2}, we can
first find a divisor $D_1$ of degree $t$ such that
\beq
[D_1]\not\in\Cl^{g+e-1}_{\textrm{sp}}(X)+[lQ],
\eeq
ensuring $(ii'_1)$ as in the proof of Theorem~\ref{th=g-1}.a),
and then, a divisor $D_2$ of degree $t$ such that
\beq
[D_2]\:\not\in\:(\Cl^{2t-\deg G}_{\textrm{eff}}(X)+[G-D_1])\,\cup\,(\Cl^{g+e-1}_{\textrm{sp}}(X)+[lQ]),
\eeq
(remark $2t-\deg G\leq g+2e-3e_q-1$ by \eqref{G>2n+3e+g-1}),
ensuring $(i')$ and $(ii'_2)$,
and we conclude as before.
\end{proof}

\begin{remark}
Many results in this part,
concerning ``uniform'' upper bounds,
can still be improved or generalized,
in various directions, for example:
\begin{itemize}
%\item Lemma~\ref{estim_e} can be refined, after a more careful
%case study of the number of points of $X$.
%\item In some cases
%(but of course only for $q\leq5$)
%a closer study of the geometry and arithmetic
%of divisors on $X$ can lead to a better estimate on $e$ in
%\eqref{e<e_q}.
%This then gives (slightly) sharper bounds in Proposition~\ref{th=g-1+e}.
\item Following Ballet's proof, the case $r=2$ in
Corollary~\ref{corBallet} uses modular curves of \emph{prime} genus,
and then relies on Bertrand's postulate (proved by Chebyshev)
for these primes. It is possible to refine both parts of this argument
(allow non-prime values for the genus, and get a finer control on the
gaps between these values),
%also
leading to sharper bounds in this case.
\item Theorem~\ref{th=g-1} (and Proposition~\ref{th=g-1+e})
can also be combined with descent arguments, such as those used
in \cite{BP}, to derive better bounds than the ones in  
Corollary~\ref{corBallet} when $q$ is not a square.
\end{itemize}
All these improvements or generalizations require quite long
technical discussions and are somehow independent of the main
ideas presented in this paper, so they will be treated elsewhere.
\end{remark}

\section{Fixing the Shparlinski-Tsfasman-Vladut asymptotic upper bound}

The Shparlinski-Tsfasman-Vladut upper bound \cite{STV}
concerns the asymptotic
quantities defined below. As explained earlier in the text,
there was a gap in their proof, which our methods allow to fill
(with two independent arguments).

\begin{definition}
If $q$ is a prime power, we let
\beq
\begin{split}
& m_q=\liminf_{n\to\infty}\frac{1}{n}\mu_q(n) \\
& M_q=\limsup_{n\to\infty}\frac{1}{n}\mu_q(n)
\end{split}
\eeq
and their symmetric counterparts $m_q^{\textrm{sym}}$
and $M_q^{\textrm{sym}}$ are defined likewise.
\end{definition}

\begin{definition}
We let $A(q)$ be the largest real number such that there exists a family
of curves $X_s$
over $\F_q$,  of genus $g_s$ going to infinity, with
\beq
\lim_{s\to\infty}\frac{|X_s(\F_q)|}{g_s}=A(q).
\eeq
\end{definition}

\begin{theorem}
\label{STVm}
If $A(q)>1$, then
\beq
m_q\leq2\left(1+\frac{1}{A(q)-1}\right).
\eeq
Moreover, if $A(q)>5$,
then also $m_q^{\textrm{sym}}\leq2\left(1+\frac{1}{A(q)-1}\right)$.
\end{theorem}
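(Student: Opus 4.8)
The plan is to apply Theorem~\ref{th=g-1} to a well-chosen family of curves and let the genus tend to infinity. Fix $\epsilon>0$. By the definition of $A(q)$, for $q$ with $A(q)>1$ (resp.\ $A(q)>5$ in the symmetric case) there is a family of curves $X_s/\F_q$ with $g_s\to\infty$ and $|X_s(\F_q)|/g_s\to A(q)$; in particular, for $s$ large, $|X_s(\F_q)|\geq(A(q)-\epsilon)g_s$, which exceeds $2g_s$ (resp.\ $5g_s$) once $\epsilon$ is small enough, so hypothesis b) (resp.\ c)) of Theorem~\ref{th=g-1} is available. The idea is then, for a given target degree $n$, to pick the curve $X_s$ in the family whose genus $g=g_s$ is the smallest one large enough that $X_s$ carries a point $Q$ of degree $n$ and that $n_{1,1}:=B_1(X_s/\F_q)$ satisfies the degree condition~\eqref{G>2n+g-1}, i.e.\ $B_1(X_s/\F_q)\geq 2n+g-1$.

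The key steps, in order: First, set $l=1$, $m=n$, $n_{1,1}=B_1(X_s/\F_q)$, and all other $n_{d,u}=0$, so that Theorem~\ref{th=g-1} (case b or c) gives $\mu_q(n)\leq B_1(X_s/\F_q)\cdot\mu_q(1,1)=B_1(X_s/\F_q)$, hence $\frac1n\mu_q(n)\leq B_1(X_s/\F_q)/n$. Second, I must choose $n$ and $s$ compatibly: given $s$, the admissible range of $n$ is governed by~\eqref{G>2n+g-1}, namely $n\leq\tfrac12(B_1(X_s/\F_q)-g+1)$, together with the existence of a degree-$n$ point $Q$ (for which $2g+1\leq q^{(n-1)/2}(q^{1/2}-1)$ suffices, a condition that is harmless since $n$ will be large compared to $\log g$). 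For such $n$, using $B_1(X_s/\F_q)\geq(A(q)-\epsilon)g$, we may take $n$ as large as roughly $\tfrac12((A(q)-\epsilon)g - g) = \tfrac12(A(q)-1-\epsilon)g$, and then $\frac1n\mu_q(n)\leq B_1/n$. Third, to control $B_1/n$ from above I bound $B_1(X_s/\F_q)$ using the Drinfeld--Vl\u{a}du\c{t}-type input $|X_s(\F_q)|\leq(A(q)+\epsilon)g$ valid for large $s$ (this is where the $\limsup$ definition of $A(q)$ is used in the reverse direction — or more simply one uses that $B_1\leq |X_s(\F_q)|$ and the convergence $|X_s(\F_q)|/g\to A(q)$), so $B_1/n\leq (A(q)+\epsilon)g/n$. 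Optimizing, one chooses $n\approx\tfrac12(A(q)-1-\epsilon)g$, giving $\frac1n\mu_q(n)\lesssim \frac{2(A(q)+\epsilon)}{A(q)-1-\epsilon}=2\bigl(1+\tfrac{1}{A(q)-1}\bigr)+O(\epsilon)$.

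Fourth, to turn this into a statement about $m_q=\liminf\frac1n\mu_q(n)$ I need, for every sufficiently large target $N$, to exhibit \emph{some} $n\geq N$ with $\frac1n\mu_q(n)$ below the desired bound; this follows because as $s$ ranges over the family, the admissible $n$'s (of size $\asymp g_s$) cover arbitrarily large integers, so the $\liminf$ over all $n$ picks up these good values. Letting $\epsilon\to0$ yields $m_q\leq 2\bigl(1+\tfrac1{A(q)-1}\bigr)$. For the symmetric statement one repeats verbatim with case c) of Theorem~\ref{th=g-1} in place of case b), which requires $|X_s(\F_q)|>5g_s$, hence the hypothesis $A(q)>5$; then Theorem~\ref{th=g-1}.c) gives $\mu_q^{\textrm{sym}}(n)\leq B_1(X_s/\F_q)\cdot\mu_q^{\textrm{sym}}(1,1)=B_1(X_s/\F_q)$ and the same computation applies.

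The main obstacle — really the only subtle point — is the interplay between the degree constraint~\eqref{G>2n+g-1} and the point-count growth: one must verify that the family genuinely allows $n$ to be taken as large as $\tfrac12(A(q)-1-\epsilon)g$ while keeping $B_1\geq 2n+g-1$, \emph{and} that the ratio $B_1/n$ stays close to $2(A(q))/(A(q)-1)$; this is a matter of choosing $n$ at the upper end of its admissible interval and invoking $B_1(X_s/\F_q)/g_s\to A(q)$ in both directions. A secondary, routine check is that for these large $n$ the point $Q$ of degree $n$ exists on $X_s$ — but since $n$ grows linearly in $g$ while the sufficient condition $2g+1\leq q^{(n-1)/2}(q^{1/2}-1)$ only needs $n\gtrsim\log_q g$, this is automatic for $s$ large. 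Everything else is bookkeeping with $\epsilon$'s.
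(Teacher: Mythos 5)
Your overall strategy (pick a family of curves with $|X_s(\F_q)|/g_s\to A(q)$, take $l=1$, $n_{1,1}$ as large as the degree condition permits, let $g_s\to\infty$) is exactly the paper's, and your bookkeeping for the ratio $B_1/n\to 2A(q)/(A(q)-1)$ is fine. But there is a genuine error at the very first step.

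You claim that $A(q)>1$ together with $|X_s(\F_q)|/g_s\to A(q)$ gives $|X_s(\F_q)|>2g_s$ for $s$ large (once $\epsilon$ is small), so that hypothesis b) of Theorem~\ref{th=g-1} is available. This inference is false: $|X_s(\F_q)|\geq (A(q)-\epsilon)g_s$ exceeds $2g_s$ only if $A(q)-\epsilon>2$, i.e.\ one needs $A(q)>2$, not $A(q)>1$. Your argument therefore only establishes the bound under the stronger hypothesis $A(q)>2$. You could partially patch this by invoking case a) of Theorem~\ref{th=g-1} (which needs $q>5$ and no condition on $|X(\F_q)|$ versus $g$); combined with the Drinfeld--Vl\u adu\c t bound $A(q)\leq\sqrt q-1$, which forces $A(q)\leq 1$ for $q\leq 4$, this reduces the residual gap to the single value $q=5$, for which $1<A(5)\leq\sqrt5-1$ is possible and neither case a) nor case b) would apply. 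The paper avoids all of this by using Proposition~\ref{th=g-1+e} instead of Theorem~\ref{th=g-1}: that proposition imposes no restriction relating $q$ or $|X(\F_q)|$ to $g$, at the harmless asymptotic cost of an additive constant $3e_q\leq 6$ in the degree requirement. Concretely, the paper sets $n(s)=\lfloor\tfrac12(|X_s(\F_q)|-g_s-5)\rfloor$, applies Proposition~\ref{th=g-1+e} with $n_{1,1}=2n(s)+g_s+5$, and gets $\mu_q(n(s))\leq 2n(s)+g_s+5$, which divided by $n(s)\to\infty$ gives the bound directly. Your treatment of the symmetric case via Theorem~\ref{th=g-1}.c) is correct, since there the hypothesis $A(q)>5$ really does give $|X_s(\F_q)|>5g_s$ for $s$ large.
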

\begin{proof}
Consider a family
of curves $X_s$
over $\F_q$,  of genus $g_s$ going to infinity, with
\beq
\label{X/g->A(q)}
\lim_{s\to\infty}\frac{|X_s(\F_q)|}{g_s}=A(q).
\eeq
Given an integer $s$, let
\beq
n(s)=\left\lfloor\frac{1}{2}(|X_s(\F_q)|-g_s-5)\right\rfloor,
\eeq
hence by \eqref{X/g->A(q)}
\beq
\label{n/g=(A-1)/2}
\lim_{s\to\infty}\frac{n(s)}{g_s}=\frac{A(q)-1}{2}.
\eeq
Then for $s$ large enough we have $2g_s+1\leq q^{(n(s)-1)/2}(q^{1/2}-1)$
and we can apply Proposition~\ref{th=g-1+e} with $l=1$
and $m=n(s)$,
and with all $n_{d,u}$ zero except $n_{1,1}=2n(s)+g_s+5$,
to get
\beq
\mu_q(n(s))\leq 2n(s)+g_s+5,
\eeq
which allows to conclude.

If $A(q)>5$, then $|X_s(\F_q)|>5g_s$ for $s$ large enough,
and we can use Theorem~\ref{th=g-1}.c) to conclude likewise.
\end{proof}

\begin{theorem}
\label{STVM}
If $q=p^{2r}\geq 9$ is a square, then
\beq
M_q\leq2\left(1+\frac{1}{\sqrt{q}-2}\right).
\eeq
Moreover, if $q=p^{2r}\geq 49$,
then also $M_q^{\textrm{sym}}\leq2\left(1+\frac{1}{\sqrt{q}-2}\right)$.
\end{theorem}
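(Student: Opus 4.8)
The plan is to obtain, for \emph{every} sufficiently large $n$, the bound $\mu_q(n)\leq |X(\F_q)|$ from Theorem~\ref{th=g-1}.a) (legitimate since $q\geq 9>5$), applied with $l=1$, $m=n$, and all $n_{d,u}$ zero except $n_{1,1}=B_1(X/\F_q)=|X(\F_q)|$, for a curve $X/\F_q$ chosen in terms of $n$; the hypotheses to check are then only that $X$ has a closed point of degree $n$ and that $|X(\F_q)|\geq 2n+g-1$, where $g$ is the genus of $X$. The whole game is to keep the ratio $|X(\F_q)|/n$ close to $2\bigl(1+\tfrac{1}{\sqrt q-2}\bigr)$.

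For this I would fix, once and for all, a family of curves $X_s/\F_q$ with $g_s\to\infty$, $\lim_s |X_s(\F_q)|/g_s=\sqrt q-1$ (the optimal value, available for square $q$ through reductions of modular or Shimura curves, as in \cite{STV}) and --- crucially --- with $\lim_s g_{s+1}/g_s=1$. Put $n(s)=\bigl\lfloor\tfrac12(|X_s(\F_q)|-g_s+1)\bigr\rfloor$, the largest integer $n$ for which $X_s$ satisfies $|X_s(\F_q)|\geq 2n+g_s-1$. Since $q\geq 9$ gives $\sqrt q-2\geq 1>0$, one reads off $n(s)\to\infty$, $g_s/n(s)\to\tfrac{2}{\sqrt q-2}$, and (from $g_{s+1}/g_s\to1$) also $n(s+1)/n(s)\to1$; in particular $\sup_s n(s)=\infty$.

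Now fix a large $n$, let $s=s(n)$ be the least index with $n(s)\geq n$, so that $n(s-1)<n\leq n(s)$, and take $X=X_{s}$. By construction $|X_{s}(\F_q)|\geq 2n+g_{s}-1$; moreover $g_{s}=\Theta(n)$ (because $g_s/n(s)$ is bounded and $n(s)/n(s-1)$ is bounded, while $n(s-1)<n\leq n(s)$), so $2g_{s}+1\leq q^{(n-1)/2}(q^{1/2}-1)$ for $n$ large, the right-hand side being exponential in $n$; thus $X_{s}$ has a closed point of degree $n$. Theorem~\ref{th=g-1}.a) then yields $\mu_q(n)\leq|X_{s}(\F_q)|$, and using $n>n(s-1)$,
\[
\frac{\mu_q(n)}{n}\ <\ \frac{|X_{s}(\F_q)|}{n(s-1)}\ =\ \frac{|X_{s}(\F_q)|}{g_{s}}\cdot\frac{g_{s}}{g_{s-1}}\cdot\frac{g_{s-1}}{n(s-1)}.
\]
As $n\to\infty$ we have $s(n)\to\infty$, and the three factors tend to $\sqrt q-1$, $1$, and $\tfrac{2}{\sqrt q-2}$ respectively, whose product is $2\bigl(1+\tfrac{1}{\sqrt q-2}\bigr)$. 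Hence $M_q=\limsup_n \mu_q(n)/n\leq 2\bigl(1+\tfrac{1}{\sqrt q-2}\bigr)$.

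For the symmetric refinement, observe that when $q\geq 49$ one has $\sqrt q-1\geq 6>5$, so $|X_s(\F_q)|>5g_s$ for all large $s$, and the argument goes through unchanged with Theorem~\ref{th=g-1}.c) in place of part a), giving $\mu_q^{\textrm{sym}}(n)\leq|X_{s}(\F_q)|$ and hence $M_q^{\textrm{sym}}\leq 2\bigl(1+\tfrac{1}{\sqrt q-2}\bigr)$. The one substantive point in the whole argument is the second paragraph: it is not enough that the optimal ratio $\sqrt q-1$ be attained, one needs a family whose genera are \emph{dense}, i.e. $g_{s+1}/g_s\to 1$, so that the largest usable degrees $n(s)$ for consecutive curves are asymptotically equal and the reasoning covers all large $n$ rather than a thin subsequence; this density of the genera of modular curves is exactly the classical ingredient underlying \cite{STV}, and any weakening of it would inflate the constant.
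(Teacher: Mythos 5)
Your proposal is correct and follows essentially the same route as the paper: the same Shimura-curve family with $|X_s(\F_q)|/g_s\to\sqrt q-1$ and $g_{s+1}/g_s\to1$, the same choice of the least admissible index $s(n)$ (your condition $n(s)\geq n$ is equivalent to the paper's $|X_{s}(\F_q)|\geq 2n+g_{s}-1$), and the same application of Theorem~\ref{th=g-1}.a), resp.~c) for the symmetric case. The only differences are cosmetic: you take $n_{1,1}=|X_s(\F_q)|$ rather than $2n+g_{s(n)}-1$ and make the limit explicit via a three-factor decomposition where the paper simply records $g_{s(n)}=\frac{2n}{\sqrt q-2}+o(n)$.
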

\begin{proof}
Consider the Shimura curves described in \cite{STV},
pp.~163--166. They form a family of curves $X_s$
over $\F_q$,  of genus $g_s$ going to infinity, with
\beq
\label{X/g->sqrt(q)-1}
\lim_{s\to\infty}\frac{|X_s(\F_q)|}{g_s}=\sqrt{q}-1
\eeq
and
\beq
\label{g'/g->1}
\lim_{s\to\infty}\frac{g_{s+1}}{g_s}=1.
\eeq
Given an integer $n$,
let $s(n)$ be the smallest integer such that
\beq
|X_{s(n)}(\F_q)|\geq 2n+g_{s(n)}-1,
\eeq
hence by~\eqref{X/g->sqrt(q)-1} and \eqref{g'/g->1},
\beq
\label{g=2/(sqrt(q)-2)}
%g_{s(n)}\underset{n\to\infty}{\sim}\frac{2n}{\sqrt{q}-2}.
g_{s(n)}=\frac{2n}{\sqrt{q}-2}+o(n).
\eeq
This then gives
%\eqref{g=2/(sqrt(q)-2} gives
$2g_{s(n)}+1\leq q^{(n-1)/2}(q^{1/2}-1)$ for $n$ large enough,
and we can apply Theorem~\ref{th=g-1}.a) with $l=1$ and $m=n$,
and with all $n_{d,u}$ zero except $n_{1,1}=2n+g_{s(n)}-1$,
to get
\beq
\mu_q(n)\leq 2n+g_{s(n)}-1.
\eeq
This holds for all $n$ large enough, hence dividing by $n$
and using \eqref{g=2/(sqrt(q)-2)} again allows to conclude.

If $q\geq49$, then we can use Theorem~\ref{th=g-1}.c) instead,
and conclude likewise.
\end{proof}

\begin{remark}
As noted in \cite{Balletnote}, we also immediately get from
Corollary~\ref{corBallet}
the bounds $M_p\leq3\left(1+\frac{2}{p-2}\right)$
for $p$ prime,
and $M_q\leq3\left(1+\frac{p}{q-2}\right)$
for $q=p^r$, $r\geq3$ odd.
\end{remark}

\begin{remark}
Prop.~4.1 of \cite{STV} also discusses some constructiveness issues,
which we can improve here.
Suppose that $q\geq9$ is a square, and that
for some increasing sequence of integers $n$, we are given
explicitly a curve $X_n$
of genus
\beq
g_n=\frac{2n}{\sqrt{q}-2}+o(n),
\eeq
together with
a point $Q$ of degree $n$ on $X_n$, and a set $S$ of points
of degree $1$ on $X_n$, such that
\beq
|S|\geq 2n+g_n-1
\eeq
(this is possible, for example, with the curves in \cite{GS}).
Then in the preceding proof we can use Theorem~\ref{th=g-1}.b)
instead of Theorem~\ref{th=g-1}.a), which leads to
a \emph{polynomial time} (in $n$) construction of a multiplication algorithm
for $\F_{q^n}/\F_q$, of length $2n\left(1+\frac{1}{\sqrt{q}-2}\right)+o(n)$
(moreover if $q\geq 49$, we can use Theorem~\ref{th=g-1}.c)
to make the algorithm symmetric).
This is better than Prop.~4.1 of \cite{STV} which, under the same
hypothesis, gives an algorithm
of length $2n\left(1+\frac{4}{\sqrt{q}-5}\right)+o(n)$.
\end{remark}

\begin{remark}
Here we studied the asymptotics of $\mu_q(n)=\mu_q(n,1)$.
We could do the same thing for $\widehat{M}_q(n)=\mu_q(1,n)$,
or more generally for $\mu_q(m,l)$ when both $m$ and $l$ vary.

Note that
the parameters $m$ and $l$ appear at two places in Theorem~\ref{th=g-1}
(or likewise in Proposition~\ref{th=g-1+e}):
\begin{itemize}
\item First, $m$ appears alone when one asks that the curve $X$ should
admit a point $Q$ of degree $m$.
\item Then $m$ and $l$ appear together through the product $ml=\dim\cA_q(m,l)$
in condition~\eqref{G>2n+g-1}.
\end{itemize}

Since the curves in the proofs of Theorems~\ref{STVm} and~\ref{STVM}
all admit at least one point of degree~$1$, we see that the asymptotic
estimates given there for $\mu_q(n)$ also hold for $\widehat{M}_q(n)$:
\beq
\liminf_{n\to\infty}\frac{1}{n}\widehat{M}_q(n)\leq2\left(1+\frac{1}{A(q)-1}\right)\qquad\textrm{for $A(q)>1$}\quad
\eeq
\beq
\limsup_{n\to\infty}\frac{1}{n}\widehat{M}_q(n)\leq2\left(1+\frac{1}{\sqrt{q}-2}\right)\qquad\textrm{for $q\geq9$ a square}
\eeq
(and likewise for their symmetric counterparts).

The same techniques also give asymptotic upper bounds
for
\beq
\frac{1}{ml}\mu_q(m,l).
\eeq
However in order to ensure that the
curves admit a point of degree $m$, we will rely on the sufficient
condition $2g+1\leq q^{(m-1)/2}(q^{1/2}-1)$, and since in the proofs
we will have curves of genus $g$ growing linearly with $n=ml$
(see \eqref{n/g=(A-1)/2} or \eqref{g=2/(sqrt(q)-2)}),
these upper bounds will be valid only in a domain in which $m$
grows at least logarithmically with $ml$.
\end{remark}

\begin{question}
The condition $A(q)>5$ in the last statement of Theorem~\ref{STVm}
(and likewise $q\geq49$ in Theorem~\ref{STVM}) might appear strange.
A natural question is whether the estimate should be valid under the
condition $A(q)>1$ also in the symmetric case.
In fact this condition $A(q)>5$ can be relaxed very slightly,
as shown in \cite{2D-G}. However, to relax it further to $A(q)>1$
would require much deeper results, such as the conjectures proposed
in \cite{ITW2010} on the existence of
curves having many points but few $2$-torsion
in their class group.

This also leads to the following question: do $m_q^{\textrm{sym}}=m_q$,
or $M_q^{\textrm{sym}}=M_q$,
or more generally $\mu_q^{\textrm{sym}}(m,l)=\mu_q(m,l)$
for all $q,m,l$?
Of course this should be put in contrast with the example in
Remark~\ref{ex_mu<musym}.
\end{question}

\end{document}